\newtheorem{theorem}{Theorem}[section]
\newtheorem{proposition}[theorem]{Proposition}
\newtheorem{lemma}[theorem]{Lemma}
\theoremstyle{definition}
\newtheorem{definition}[theorem]{Definition}
\theoremstyle{remark}
\newcommand{\BC}{\mathbb C}
\newcommand{\BE}{\mathbb E}
\newcommand{\BN}{\mathbb N}
\newcommand{\BQ}{\mathbb Q}
\newcommand{\BZ}{\mathbb Z}
\newcommand{\poly}{\text{poly}}
\newcommand{\Mod}[1]{\ (\mathrm{mod}\ #1)}
\title{Circular Trace Reconstruction}
\author{Shyam Narayanan\thanks{Massachusetts Institute of Technology. \texttt{shyamsn@mit.edu}. Supported by the MIT Akamai Fellowship, the NSF Graduate Fellowship, and a Simons Investigator Award.}
\and
Michael Ren\thanks{Massachusetts Institute of Technology. \texttt{mren@mit.edu}. Supported by NSF-DMS grant 1949884 and NSA grant H98230-20-1-0009.}}
\date{\today}
\begin{document}

\maketitle

\begin{abstract}
    \emph{Trace reconstruction} is the problem of learning an unknown string $x$ from independent traces of $x$, where traces are generated by independently deleting each bit of $x$ with some deletion probability $q$. In this paper, we initiate the study of \emph{Circular trace reconstruction}, where the unknown string $x$ is circular and traces are now rotated by a random cyclic shift. Trace reconstruction is related to many computational biology problems studying DNA, which is a primary motivation for this problem as well, as many types of DNA are known to be circular.
    
    Our main results are as follows. First, we prove that we can reconstruct arbitrary circular strings of length $n$ using $\exp\big(\tilde{O}(n^{1/3})\big)$ traces for any constant deletion probability $q$, as long as $n$ is prime or the product of two primes. For $n$ of this form, this nearly matches what was the best known bound of $\exp\big(O(n^{1/3})\big)$ for standard trace reconstruction when this paper was initially released. We note, however, that Chase very recently improved the standard trace reconstruction bound to $\exp\big(\tilde{O}(n^{1/5})\big)$. Next, we prove that we can reconstruct random circular strings with high probability using $n^{O(1)}$ traces for any constant deletion probability $q$. Finally, we prove a lower bound of $\tilde{\Omega}(n^3)$ traces for arbitrary circular strings, which is greater than the best known lower bound of $\tilde{\Omega}(n^{3/2})$ in standard trace reconstruction. 
\end{abstract}
\newpage

\section{Introduction}

The trace reconstruction problem asks one to recover an unknown string $x$ of length $n$ from independent noisy samples of the string. In the original setting, $x$ is a binary string in $\{0, 1\}^n,$ and a random subsequence $\tilde{x}$ of $x$, called a \emph{trace}, is generated by sending $x$ through a deletion channel with deletion probability $q$, which removes each bit of $x$ independently with some fixed probability $q$. The main question is to determine how many independent traces are needed to recover the original string with high probability. This question has become very well studied over the past two decades \cite{Levenshtein01a, Levenshtein01b, BatuKKM04, KannanM05, HolensteinMPW08, ViswanathanS08, McGregorPV14, DeOS17, NazarovP17, PeresZ17, HartungHP18, HoldenL18, HoldenPP18, Chase19, ChenDLSS20, Chase20}, with many results over various settings. For instance, people have studied the case where we wish to reconstruct $x$ for any arbitrarily chosen $x \in \{0, 1\}^n$ (worst-case) or the case where we just wish to reconstruct a randomly chosen string $x$ (average-case). People have also studied the trace reconstruction problem for various values of the deletion probability $q$, such as if $q$ is a fixed constant between $0$ and $1$ or decays as some function of $n$. People have also studied variants where the traces allow for insertions of random bits, rather than just deletions, and variants where the string is no longer binary but from a larger alphabet.

Finally, various generalizations or variants of the trace reconstruction problem have also been developed. These include error-correcting codes over the deletion channel (i.e., ``coded'' trace reconstruction) \cite{CheraghchiGMR19, BrakensiekLS19}, reconstructing matrices \cite{KrishnamurthyMMP19} and trees \cite{DaviesRR19} from traces, and reconstructing mixtures of strings from traces \cite{BanCFSS19, BanCSS19, Narayanan20}.

In this paper, we develop and study a new variant of trace reconstruction that we call \emph{Circular trace reconstruction}. In this variant, there is again an unknown string $x \in \{0, 1\}^n$ that we can sample traces from, but this time, the string $x$ is a cyclic string, meaning that there is no beginning or end to the string. Equivalently, one can imagine a linear string that undergoes a random cyclic shift before a trace is returned. See Figure \ref{Example} for an example. Our goal, like in the normal trace reconstruction, is to reconstruct the original circular string using as few random traces as possible.

\begin{figure}
\centering
\begin{tikzpicture}
\tikzstyle{bead}=[shape=circle, draw, inner sep=1,fill=white]

\draw (0, 0)
-- ++(75:0.6) node[bead] {\color{red} 1}
-- ++(45:0.6) node[bead] {\color{red} 0}
-- ++(15:0.6) node[bead] {\color{red} 0}
-- ++(-15:0.6) node[bead] {0}
-- ++(-45:0.6) node[bead] {\color{red} 1}
-- ++(-75:0.6) node[bead] {0}
-- ++(-105:0.6) node[bead] {1}
-- ++(-135:0.6) node[bead] {\color{red} 1}
-- ++(-165:0.6) node[bead] {0}
-- ++(-195:0.6) node[bead] {1}
-- ++(-225:0.6) node[bead] {\color{red}1}
-- ++(-255:0.6) node[bead] {0};

\draw[->, thick] (2.6, 0) -- (5.7, 0);

\draw (6,0)
-- ++(60:1) node[bead] {0}
-- ++(0:1) node[bead] {0}
-- ++(-60:1) node[bead] {1}
-- ++(-120:1) node[bead] {0}
-- ++(-180:1) node[bead] {1}
-- ++(-240:1) node[bead] {0};

\draw[->, thick] (8.3, 0) -- (11.2, 0);

\node[draw,text width=1.1cm] at (12,0) {010001};
\end{tikzpicture}
\caption{An example of a circular trace. We start with an unknown circular string (top left). Each bit of the string is randomly deleted (red bits are deleted, black bits are retained) and the order of the retained bits is preserved, so we are left with the smaller circular string. However, since there is no beginning or end of the circular string, we assume the string is seen in clockwise order starting from a randomly chosen bit.}
\label{Example}
\end{figure}
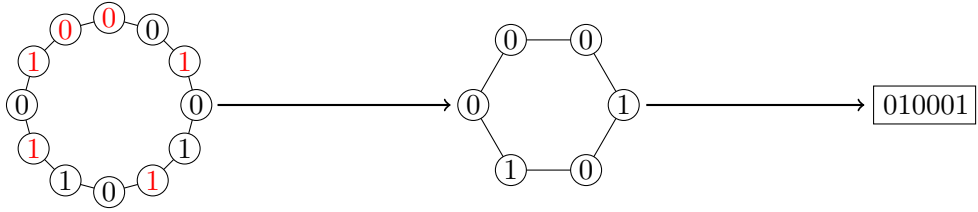

\subsection{Main Results and Comparison to Linear Trace Reconstruction}

Perhaps the first natural question about circular trace reconstruction is the following: how does the sample complexity of circular trace reconstruction compare to the sample complexity of standard (linear) trace reconstruction? Intuitively, one should expect circular trace reconstruction to be at least as difficult as standard trace reconstruction, since given any trace of a linear string, we can randomly rotate it to get a trace of the corresponding circular string. 
This reasoning, however, is slightly flawed. For instance, perhaps the hardest instance of linear trace reconstruction comes from distinguishing between two strings $x$ and $y$ which are different as linear strings but equivalent up to a cyclic shift. In this case, the circular trace reconstruction problem does not even need to distinguish between $x$ and $y$, because they are equivalent! However, by padding the trace with extra bits before randomly rotating, one can show that circular trace reconstruction is at least as hard as linear trace reconstruction in both the worst-case and average-case. Indeed, we have the following proposition -- as its proof is quite simple, we defer it to Appendix \ref{Omitted}.

\begin{proposition} \label{CircularHarderThanLinear}
    Suppose that we can solve worst-case (resp., average case) circular trace reconstruction over length $m$ strings with deletion probability $q$ using $T(m, q)$ traces. Then, we can solve worst-case (resp., average case) linear trace reconstruction over length $n$ strings with deletion probability $q$ using $\min_{m \ge 2n} T(m, q)$ traces.
    %
\end{proposition}

Given Proposition \ref{CircularHarderThanLinear}, any upper bounds for circular trace reconstruction imply nearly equivalent upper bounds for the linear trace reconstruction, and any lower bounds for linear trace reconstruction imply nearly equivalent lower bounds for circular trace reconstruction. This raises two natural questions. First, can we match or nearly match the best linear trace reconstruction upper bounds for circular trace reconstruction? Second, can we beat the best linear trace reconstruction lower bounds for circular trace reconstruction?

\medskip


The first main result we prove is for worst-case circular strings. At the time of the initial release of this paper, the best known upper bound for worst-case linear trace reconstruction with deletion probability $q$, where $q$ is a fixed constant between $0$ and $1$, is $\exp\left(O(n^{1/3})\right),$ where the unknown string has length $n$ \cite{DeOS17, NazarovP17}. Shortly afterwards, the upper bound was improved to $\exp\left(\tilde{O}(n^{1/5})\right)$ \cite{Chase20}. Our first main result, which we prove in Section \ref{WorstCase}, provides an upper bound for the circular trace reconstruction problem that nearly matches the results of \cite{DeOS17, NazarovP17}, but only if the length $n$ has at most $2$ prime factors.

\begin{theorem} \label{WorstCaseThm}
    Let $x$ be an unknown, arbitrary circular string of length $n$, let $q$ be the deletion probability of each element in the string, and let $p = 1-q$ be the retention probability. Then, if $n$ is either a prime or a product of two (possibly equal) primes, using $\exp\left(O\left(n^{1/3} (\log n)^{2/3} p^{-2/3}\right)\right)$ random traces, we can determine $x$ with failure probability at most $2^{-n}.$
\end{theorem}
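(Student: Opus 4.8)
The plan is to mimic the complex-analytic / single-bit-statistics approach of De, O'Donnell, and Servedio and of Nazarov and Peres for linear trace reconstruction, but to work with quantities that are invariant under cyclic rotation so that the loss of a canonical starting position does not matter. For a circular string $x$ and a trace $\tilde x$, the natural rotation-invariant statistics are the \emph{cyclic autocorrelations}: for each offset $k$, the expected number of pairs of retained bits at circular distance $k$ in the trace that are both $1$ (or more generally, the expected value of $\sum_j \tilde x_j \tilde x_{j+k}$, indices mod the trace length). Averaging over many traces estimates these expectations, and because the random rotation acts trivially on them, each such expectation is a fixed function of $x$ and $q$ alone. The first step is therefore to write, for each offset parameter, the expected trace statistic as an explicit expression in the bits of $x$ — it will come out as a sum over pairs $(a,b)$ of positions in $x$, weighted by the probability that $a$ and $b$ survive and end up at a given circular distance in the trace, which is a binomial-type weight depending on $p$ and on the circular gap $b-a \pmod n$.

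Next, I would bundle these statistics into a generating function. Define, for a complex variable $w$, the quantity $A(w) = \sum_{d} c_d(x)\, w^d$, where $c_d(x) = \sum_{a} x_a x_{a+d}$ is the circular autocorrelation of $x$ at gap $d \pmod n$ (a nonnegative integer), and show that the vector of trace statistics we can estimate determines, via a known invertible transformation, the values of a related power series at points on (or near) the unit circle. The DOS/Nazarov–Peres machinery then gives: if two circular strings $x \neq y$ have the same autocorrelation sequence, the corresponding polynomials agree, but if they differ, there is a point $w$ on an arc of the unit circle where the difference polynomial has modulus at least $\exp(-O(n^{1/3}(\log n)^{2/3} p^{-2/3}))$; estimating the statistics to that precision, which costs $\exp(O(n^{1/3}(\log n)^{2/3}p^{-2/3}))$ traces, suffices to detect the discrepancy. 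This handles everything \emph{except} pairs $x,y$ with identical circular autocorrelation sequences.

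The main obstacle — and the reason for the hypothesis that $n$ is prime or a product of two primes — is exactly the ambiguity just flagged: two distinct circular strings can have the same autocorrelation function (equivalently, the same multiset of pairwise circular distances between their $1$'s, counted with the usual cyclic structure). The key number-theoretic input is that the map from a subset $S \subseteq \BZ/n\BZ$ to its difference multiset is, up to translation and negation, injective when $n$ is prime or a product of two primes; this is a classical fact about "spectral sets" / the reconstruction of sets from their difference multisets, failing first at $n$ with three prime factors. So the plan is: (i) prove (or cite) that for such $n$ the autocorrelation sequence of $x$ determines $x$ up to cyclic rotation and reflection; (ii) observe that cyclic rotation is invisible by the nature of the problem and hence not an obstruction; (iii) break the remaining reflection ambiguity by introducing one more family of rotation-invariant statistics that is sensitive to orientation — for instance triple correlations $\sum_j \tilde x_j \tilde x_{j+k}\tilde x_{j+\ell}$, whose expectations are rotation-invariant but not reflection-invariant — and feed those into the same complex-analytic estimate.

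Finally I would assemble the pieces: run the autocorrelation-based test to pin down $x$ up to rotation and reflection (this is where the $\exp(O(n^{1/3}(\log n)^{2/3}p^{-2/3}))$ sample bound and the primality hypothesis enter), then run the orientation-sensitive test on the at most $O(n)$ surviving candidates, and take a union bound over the $O(n)$ pairwise comparisons to get failure probability below $2^{-n}$ after inflating the sample size by a $\poly(n)$ factor, which is absorbed into the exponent. The steps I expect to be genuinely delicate are the clean statement and proof that the circular autocorrelation determines $x$ up to dihedral symmetry for the allowed $n$, and checking that the complex-analytic lower bound on the difference polynomial survives the passage from the linear to the circular (i.e. modular-exponent) setting, since the relevant polynomials now naturally live on roots of unity.
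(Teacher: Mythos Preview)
Your plan has a genuine gap at the number-theoretic step. You assert that the map from $S\subseteq\BZ/n\BZ$ to its difference multiset (equivalently, from a $0/1$ circular string to its autocorrelation sequence $c_d(x)=\sum_a x_ax_{a+d}$) is injective up to translation and reflection whenever $n$ is prime or a product of two primes, calling this ``a classical fact about spectral sets.'' It is false already for $n$ prime. In $\BZ/13\BZ$ the sets $\{0,1,3,9\}$ and $\{0,2,5,6\}$ are both $(13,4,1)$-difference sets, so their indicator strings have identical autocorrelation (value $4$ at lag $0$ and value $1$ at every nonzero lag), yet the second is obtained from the first by multiplication by $2$, and a direct check of cyclic gap sequences shows neither is a translate of the other or of its reversal. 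So after your pair-statistics step the residual ambiguity is strictly larger than a single reflection, and your proposed patch---invoke triple correlations only to tell $x$ from its reversal---does not close it as stated.

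The paper's route is related in spirit but avoids this trap by using strictly richer invariants. Rather than the power spectrum $|P(\omega^k;x)|^2$ (which is all the autocorrelation encodes), the paper works with $f_t(z;x)=P(z;x)^t\,P(z^{-t};x)$ at $n$th roots of unity for small $t\in\{2,\dots,5\}$; these are rotation-invariant and, taken together, determine each $P(\omega^k;x)$ up to multiplication by an $n$th root of unity $\omega^{c_k}$, not merely up to an arbitrary unit-modulus scalar. The number-theoretic input actually needed (Theorem~\ref{NT:Main}) is that this ``phase mod $\omega$'' information determines $x$ up to cyclic shift iff $n$ has at most two prime factors. The analytic worry you flag is handled by packaging the invariant into a genuine polynomial $Q_t(z;x)=\sum_j z^{tn}P(z;x^{(j)})^tP(z^{-t};x^{(j)})$ of degree $(t+1)n$ with bounded integer coefficients, to which the Borwein--Erd\'elyi arc bound applies directly; the unbiased estimator comes from Lemma~\ref{CreatingG}. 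Your triple-correlation instinct points the right way---indeed $f_t$ is a diagonal slice of a higher-order spectrum---but the argument must be organized around recovering phases modulo $n$th roots of unity, not around the autocorrelation alone.
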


The primary reason why our theorem fails for $n$ having $3$ or more prime factors is that we prove the following number theoretic result which is crucial in our algorithm.

\begin{theorem} \label{NT:Main}
    For any fixed integer $n \ge 2,$ the following statement is true \textbf{if and only if} $n$ has at most $2$ prime factors, counting multiplicity. 
    
    Define $\omega := e^{2 \pi i/n}$, and suppose that $a_0, \dots, a_{n-1}, b_0, \dots, b_{n-1}$ are all integers in $\{0, 1\}.$ Also, suppose that for all $0 \le k \le n-1,$ there is some integer $c_k$ such that $\sum_{i = 1}^{n} a_i \omega^{i \cdot k} = \omega^{c_k} \cdot \sum_{i = 1}^{n} b_i \omega^{i \cdot k}.$ Then, the sequences $\{a_i\}$ and $\{b_i\}$ are cyclic shifts of each other.
\end{theorem}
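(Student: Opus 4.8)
The plan is to move to the cyclotomic picture. Writing $A(x)=\sum_i a_i x^i$ and $B(x)=\sum_i b_i x^i$ as elements of $\BZ[x]/(x^n-1)$ and letting $\Phi_d$ denote the $d$th cyclotomic polynomial (so $x^n-1=\prod_{d\mid n}\Phi_d$ with the factors pairwise coprime), the hypothesis says precisely that at every $n$th root of unity $\zeta$ the ratio $A(\zeta)/B(\zeta)$ is a power of $\zeta$ (where $B(\zeta)=0$ harmlessly forces $A(\zeta)=0$). First I would run a Galois-averaging step: for each $d\mid n$ with $\Phi_d\nmid B$, the primitive $d$th roots of unity form one Galois orbit, $A,B$ have rational coefficients, and a Galois-equivariant assignment sending each primitive $d$th root $\zeta$ to a power of $\zeta$ must have the form $\zeta\mapsto\zeta^{\rho_d}$ for a single $\rho_d\in\BZ/d\BZ$; hence $A\equiv x^{\rho_d}B\pmod{\Phi_d}$. (When $\Phi_d\mid B$ then also $\Phi_d\mid A$, so there is no constraint at level $d$.) Since $\{a_i\}$ is a cyclic shift of $\{b_i\}$ exactly when $A\equiv x^rB\pmod{x^n-1}$ for some $r$, and by coprimality of the $\Phi_d$ this holds iff $r\equiv\rho_d\pmod d$ for every $d$ with $\Phi_d\nmid B$, the Chinese Remainder Theorem reduces the whole statement to the question: \emph{are the exponents $\rho_d$ pairwise compatible, i.e.\ $\rho_d\equiv\rho_{d'}\pmod{\gcd(d,d')}$?}

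For the ``if'' direction, note that when $n\in\{1,p,p^2,pq\}$ any two divisors of $n$ with $\gcd>1$ are comparable, so it suffices to check compatibility of $\rho_d$ and $\rho_{d'}$ when $d\mid d'$; after peeling off the degenerate sub-cases (where some $\Phi_d\mid A,B$, which either shrinks the CRT system or leaves one automatically solvable because distinct primes are coprime), the only real work is the pair $(p,p^2)$ for $n=p^2$ and the pairs $(p,pq)$ and $(q,pq)$ for $n=pq$. For $(p,p^2)$ I would fold via $\pi\colon\BZ[x]/(x^{p^2}-1)\to\BZ[x]/(x^p-1)$ (reduce exponents mod $p$): the level-$1$ and level-$p$ relations give that the folded fiber-mass vectors satisfy $\pi(a)=x^{\rho_p}\pi(b)$, while applying $\pi$ to the level-$p^2$ relation (using $\pi(\Phi_{p^2})=p$) gives $\pi(a)\equiv x^{\rho_{p^2}}\pi(b)\pmod p$; combined, $\pi(b)$ is $(\rho_p-\rho_{p^2})$-periodic modulo $p$. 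If $\rho_p\not\equiv\rho_{p^2}\pmod p$ this rotation generates $\BZ/p\BZ$, so $\pi(b)$ is constant mod $p$; but $\pi(b)$ has entries in $\{0,\dots,p\}$, so it is either genuinely constant --- forcing $\Phi_p\mid B$, contradicting $\Phi_p\nmid B$ --- or $\{0,p\}$-valued, meaning $b$ is $p$-periodic and $\Phi_{p^2}\mid B$, again a contradiction. The case $n=pq$ is the same idea in CRT coordinates $\BZ/pq\BZ\cong\BZ/p\BZ\times\BZ/q\BZ$: writing $a(s,t)$, the four level-relations combine to $a(s,t)=b(s-\alpha,t-\beta)+\tfrac1q X_s+\tfrac1p Y_t$ with $X_s,Y_t$ differences of fiber masses, and $a,b\in\{0,1\}$ forces $pX_s+qY_t\in\{-pq,0,pq\}$, hence $q\mid X_s$ and $p\mid Y_t$; the same ``single cycle, $\{0,\dots,q\}$-bounded fiber masses, and ($b$ constant on a fiber family $\Leftrightarrow\Phi_{pq}\mid B$)'' analysis forces $X\equiv Y\equiv 0$, so $a$ is the cyclic shift of $b$ by $(\alpha,\beta)$.

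For the ``only if'' direction I would first reduce to the case that $n$ has exactly three prime factors (with multiplicity): if $n_0\mid n$ with $n_0$ having three prime factors and $(a^0,b^0)$ is a counterexample over $\BZ/n_0\BZ$, then the $n_0$-periodic lifts $a,b\in\{0,1\}^n$ (so $A(x)=A^0(x)\,(1+x^{n_0}+x^{2n_0}+\cdots+x^{n-n_0})$, similarly for $B$) still satisfy the hypothesis over $\BZ/n\BZ$ but are not cyclic shifts, since any cyclic shift respecting $n_0$-periodicity descends to one over $\BZ/n_0\BZ$. When $n$ itself has three prime factors, i.e.\ $n\in\{p^3,p^2q,pqr\}$, one exhibits explicit $\{0,1\}$-polynomials: for $n=8$ the pair $A(x)=1+x^3+x^4+x^5$ and $B(x)=1+x+x^3+x^4$ works --- one checks that $A(\zeta)/B(\zeta)$ is a power of $\zeta$ at every $8$th root of unity, with $\rho_4=0$ but $\rho_8=2$ (so $\rho_4\not\equiv\rho_8\pmod4$), while $\{0,3,4,5\}$ is not a cyclic translate of $\{0,1,3,4\}$ --- and the same recipe (fold to a largest proper prime-power level for $p^3$ and $p^2q$, or work directly in CRT coordinates for $pqr$, then arrange the top-level shift to be incongruent modulo a proper divisor to a lower-level shift while keeping all coefficients in $\{0,1\}$) produces counterexamples for the remaining shapes.

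I expect the ``if'' direction to be the main obstacle. The arithmetic heart is that the $\{0,1\}$-constraint rules out the degenerate configurations, which has to be extracted from the elementary combinatorial fact that a nonconstant $\BZ/p\BZ$-sequence with values in $\{0,\dots,p\}$ whose difference with a nontrivial cyclic rotation of itself is $\{-p,0,p\}$-valued must take only the values $0$ and $p$, together with the dictionary between such degeneracies and divisibility of $B$ by cyclotomic factors. The fiddly points are the bookkeeping over which $\Phi_d$ divide $A$ and $B$, and verifying that the Galois step yields one exponent $\rho_d$ per level rather than a $\zeta$-dependent one; once those are in place the rest is routine.
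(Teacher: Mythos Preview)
Your cyclotomic-CRT framework---one exponent $\rho_d$ per divisor $d$, then a compatibility check---is a cleaner organizing principle than the paper's case-by-case proofs, and your $n=p^2$ argument via folding and the $\{0,\dots,p\}$-bounded fiber sums is essentially correct and parallels the paper's fiber-sum analysis. However, several steps are genuine gaps rather than routine details.

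\medskip

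\textbf{The Galois step is not justified in general.} You assert that $A(\zeta)/B(\zeta)$ is a power of $\zeta$ at each primitive $d$th root $\zeta$. What the hypothesis gives is only that this ratio is an $n$th root of unity; since it lies in $\BQ(\omega_d)$, Lemma~\ref{RootsOfUnity} yields $\pm\omega_d^{\,j}$, not $\omega_d^{\,j}$. When $n$ is odd the ratio has odd order and the sign is forced, so your framework is sound for $n\in\{p,p^2,pq\}$ with $p,q$ odd. But for $n=2p$ with $p$ odd, the level-$p$ ratio can genuinely be $-\omega_p^{\,j}$, and then no $\rho_p$ exists and your compatibility scheme does not apply. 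The paper handles $n=2p$ and $n=4$ by short direct polynomial arguments that sit outside your framework; you would need to do the same, or augment the framework to track the sign.

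\medskip

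\textbf{The $n=pq$ sketch is too compressed.} The identity $a(s,t)=b(s-\alpha,t-\beta)+\tfrac1q X_s+\tfrac1p Y_t$ requires you to show that the Fourier support of $a-b'$ (with $b'$ the $(\alpha,\beta)$-shift of $b$) lies on the coordinate axes $\{u=0\}\cup\{v=0\}$, to absorb the overlap at $(0,0)$, and to identify $X_s,Y_t$ with fiber-mass differences so that the $\{0,\dots,q\}$ and $\{0,\dots,p\}$ bounds apply. The subsequent ``$\{0,1\}$ forces $X\equiv Y\equiv 0$'' step needs the same care as in your $p^2$ case and has to interact correctly with the degenerate possibilities $\Phi_p\mid B$, $\Phi_q\mid B$, $\Phi_{pq}\mid B$. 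The paper's $pq$ argument runs along different lines---divisibility by $1-\omega_p$, a Frobenius step, and fiber sums---and is fully worked out; your route is plausible but not yet a proof.

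\medskip

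\textbf{The ``only if'' direction is incomplete.} Your reduction to three-prime-factor $n$ and the $n=8$ example are correct, but ``the same recipe'' is not a proof for odd $p^3$, $p^2q$, or $pqr$. The paper gives a single explicit construction that works uniformly for every $n=abc$ with $a,b,c>1$: take the $\{0,1\}$-polynomials $P,Q$ with $1$'s at $\{1,a{+}1,\dots,ab{-}a{+}1\}\cup\{a,ab{+}a,\dots,abc{-}ab{+}a\}$ and $\{1,a{+}1,\dots,ab{-}a{+}1\}\cup\{0,ab,\dots,abc{-}ab\}$ respectively, and check $P-Q=(x^a-1)\,\frac{x^{abc}-1}{x^{ab}-1}$ and $P-x^aQ=x(1-x^{ab})$, from which $P(\omega^k)/Q(\omega^k)\in\{1,\omega^{ak}\}$ for all $k$.
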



The next main result we prove is for average-case circular strings: we show that a random circular string can be recovered using a polynomial number of traces. Formally, we prove the following theorem in Section \ref{AverageCase}.

\begin{theorem}\label{AverageCaseThm}
Let $x$ be an unknown but randomly chosen circular string of length $n$ and let $0 < q < 1$ be the deletion probability of each element. Then, there exists a constant $C_q$ depending only on $q$ such that we can determine $x$ with failure probability at most $n^{-10}$ using $O(n^{C_q})$ traces.
\end{theorem}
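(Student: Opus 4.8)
The plan is to reduce circular trace reconstruction of a random string to a collection of *linear* average-case trace reconstruction problems, using the fact that a random string locally "looks like" a sequence of distinguishable landmarks. First I would fix a window length $\ell = \Theta(\log n)$ and observe that for a uniformly random circular string $x$, with probability $1 - n^{-\omega(1)}$ all $n$ substrings of $x$ of length $\ell$ are distinct; moreover, a slightly longer random block of length $L = \Theta(\log n)$ is, with high probability, not approximately periodic and is "alignable" in the sense that given a trace of it through the deletion channel one can, with constant probability, correctly locate its image. This anchoring idea is exactly what drives the known $n^{O(1)}$ average-case *linear* results of Holden–Pemantle–Peres(–Zhai) and Nazarov–Peres, and the point is that cyclic shifts do not destroy it: once we can recognize a single landmark block inside a trace, we have effectively broken the rotational symmetry.

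The key steps, in order, would be: (1) Choose a canonical landmark: since $x$ is random, fix the lexicographically-first length-$L$ block $w^\star$ occurring in $x$ (or, to get a clean distribution, the block starting at a fixed position, which by symmetry is uniform); show $w^\star$ occurs exactly once in the circular string w.h.p.\ and that its occurrence is robustly detectable from a trace. (2) Given a random trace $\tilde x$ of the circular string, scan it for an (approximate) copy of $w^\star$; condition on the event that exactly one copy survived the deletion channel and was correctly located — this happens with probability $p^{O(L)} = n^{-O(1)}$, so after $n^{O(1)}$ traces we get $n^{O(1)}$ "aligned" traces. (3) Cut each aligned trace at the detected landmark to obtain a trace of the *linear* string $x'$ = ($x$ read starting just after $w^\star$). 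Now apply the best known average-case linear trace reconstruction algorithm to $x'$ using these $n^{O(1)}$ samples; this recovers $x'$, hence $x$, with failure probability $n^{-10}$ after adjusting constants. The number of traces is $n^{O(1)} \cdot n^{O(1)} = n^{O(1)}$, with the exponent $C_q$ absorbing the $p^{-O(1)}$ conditioning cost and the exponent from the linear subroutine.

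The main obstacle I expect is step (2): making the landmark-detection rigorous in the presence of deletions. One must show that (a) a genuine copy of $w^\star$ in the trace is recognizable despite having lost a $q$-fraction of its bits — this needs $w^\star$ to be chosen so that its surviving subsequences are still "distinctive," e.g.\ by taking $L$ large enough and arguing via a union bound over all other length-$\Theta(L)$ windows of $x$ that no spurious near-match occurs — and (b) the random deletions near the landmark do not systematically bias *where* we think the subsequent linear string begins, or else the downstream linear algorithm receives miscalibrated traces. The cleanest fix is to discard a buffer of $\Theta(\log n)$ bits on each side of the landmark and only reconstruct the bits outside the buffers, at the cost of needing a few overlapping landmarks to cover the whole circle; handling the overlaps and stitching the reconstructed arcs back into one circular string is routine once the anchoring lemma is in place. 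I would also need the elementary probabilistic lemmas on random strings (distinctness of $\Theta(\log n)$-windows, non-periodicity, anti-concentration of match scores), all of which follow from standard Chernoff/union-bound arguments.
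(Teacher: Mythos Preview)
Your proposal has a basic circularity you do not resolve: all three steps presuppose that the algorithm knows the landmark $w^\star$---you ``scan [each trace] for an (approximate) copy of $w^\star$'' and cut there---yet $w^\star$ (the lex-first length-$L$ block of $x$, or the block at some fixed position) is a function of the unknown string $x$. The algorithm sees only traces; you never say how it obtains $w^\star$. The natural patch is to enumerate all $2^L=n^{O(1)}$ candidate strings $w$ and run the pipeline for each, but this creates two new obligations: first, distinguishing genuine contiguous substrings of $x$ from impostors (a string $w$ that is merely a length-$L$ \emph{subsequence} of some window of length $L+O(1)$ in $x$ also appears as a contiguous block in a $\Theta(p^L)$ fraction of traces, so raw frequency does not separate them); second, certifying the final output, since for an impostor $w$ the ``aligned'' traces have no fixed underlying linear string and the downstream routine is fed garbage. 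The first obligation is essentially what the paper's Lemma~\ref{Contiguous} accomplishes, and the paper needs the Borwein--Erd\'elyi--K\'os machinery to do it. A subtler point you partly flag: even given $w^\star$, conditioning on ``the first match in the trace is the genuine one'' is an event about the retained bits of $x'$ itself, so your aligned samples are not i.i.d.\ deletion-channel traces of $x'$; the buffer trick addresses positional drift, not this selection bias.

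The paper takes a completely different route, never attempting to break the rotational symmetry. It recovers a rotation-invariant statistic---the multiset of all length-$(100\log n)$ contiguous substrings of $x$---and does so for \emph{arbitrary} $x$ (Lemma~\ref{Contiguous}). The mechanism: for each pattern $s$, the probability a trace begins with $s$ equals $\frac{1}{n}(1-q)^{|s|}\sum_{i\ge 0} c_i q^i$ up to a negligible tail, where $c_0$ counts contiguous occurrences of $s$ in $x$; by simulating deletion probabilities $q'\in[q,\tfrac{1+q}{2}]$ one evaluates the polynomial $\sum_i c_i t^i$ on an interval, and Theorem~\ref{BEKLittlewood} (a Littlewood-type bound) then recovers $c_0$ from these values. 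Randomness of $x$ is invoked only at the very end, to guarantee that all $O(\log n)$-windows are distinct and hence that the multiset determines $x$.
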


The main lemma we need to prove Theorem \ref{AverageCaseThm} is actually a result that is true for worst-case strings. Specifically, we show how to recover the multiset of all consecutive substrings of length $O(\log n)$ using a polynomial number of traces. While this does not guarantee that we can recover an arbitrary circular string, it does allow us to recover what we will call \emph{regular strings}, which we show comprise the majority of circular strings. The following lemma may be of independent interest for studying worst-case strings as well, as it allows one to gain information about all ``consecutive chunks'' of the unknown string using only a polynomial number of queries.

\begin{lemma} \label{Contiguous}
    Let $x = x_1 \cdots x_n$ be an arbitrary circular string of length $n$ and let $0 < q < 1$ be the deletion probability of each element. Then, for $k = 100 \log n,$ we can recover the multiset of all substrings $\{x_i x_{i+1} \cdots x_{i+k-1}\}_{i = 1}^{n}$, where indices are modulo $n$, using $O(n^{C_q})$ traces with failure probability $n^{-10}$, where $C_q$ is a constant that only depends on $q$.
\end{lemma}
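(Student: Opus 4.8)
The plan is to recover the multiset of length-$k$ contiguous substrings (call them \emph{$k$-windows}) by a two-stage argument: first show that with polynomially many traces we can, for any fixed target window $w \in \{0,1\}^k$, estimate the number of times $w$ occurs as a contiguous substring of $x$ to within additive error $1/2$, and then observe that doing this for all $2^k = n^{O(1)}$ windows simultaneously (via a union bound, since $k = 100\log n$) recovers the entire multiset. So the crux is a single counting statement: produce an unbiased-ish statistic, computable from one trace, whose expectation over the randomness of the deletion channel and the random cyclic shift equals (a known scalar multiple of) the multiplicity of $w$ in $x$, and whose variance is $n^{O(1)}$; then average $n^{O(1)}$ traces and round.

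First I would isolate the event, on a single trace, that some length-$k$ block of consecutive positions of $x$ is retained in full and, moreover, that the $\ell$ positions immediately before it and the $\ell$ positions immediately after it (for a suitable buffer $\ell = \Theta(\log n)$, say $\ell = 10\log n$) are all \emph{deleted}. Conditioned on this ``clean window'' event at position $i$, the trace locally looks exactly like $x_i \cdots x_{i+k-1}$ flanked by bits that came from far away, so we can detect it: scan the (circular) trace for occurrences of $w$ and, to avoid double counting and cross-talk, only count an occurrence if it is ``isolated'' in the appropriate technical sense. The probability that position $i$ produces a clean, isolated, correctly-flanked window is $p^{k} \cdot q^{2\ell} \cdot (1 \pm o(1))$, which is $n^{-O_q(1)}$ but bounded below by a fixed inverse polynomial in $n$; summing over the $m_w$ positions $i$ where $w$ occurs gives expected count $m_w \cdot n^{-O_q(1)}$. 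Dividing by the known constant $n^{-O_q(1)}$ and by the (also known, since we can also estimate the trace length, or simply condition on it) normalization yields an estimator for $m_w$ with bias $o(1)$; the second moment is trivially at most $n^2$ since the count on any trace is at most $n$, so $O(n^{2 + O_q(1)} \cdot \log n) = n^{O_q(1)}$ traces suffice to get each $\widehat{m_w}$ within $1/4$ with probability $1 - n^{-20}$, and a union bound over the $\le 2^k = n^{O(1)}$ windows $w$ (here is where $k = 100\log n$ rather than $\omega(\log n)$ is used) finishes it after rounding. The random cyclic shift is harmless throughout: it only rotates where in the trace the clean window appears, and since we scan the whole (circular) trace for $w$ this does not affect the count at all.

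The main obstacle is the \textbf{cross-talk / double-counting} issue: a spurious copy of $w$ can appear in the trace straddling the boundary between two surviving chunks that came from unrelated parts of $x$, or a single clean window can be miscounted if bits from neighboring surviving chunks happen to extend it. I would control this by choosing the buffer length $\ell = \Theta(\log n)$ large enough that, except with probability $n^{-20}$, \emph{every} maximal run of consecutively-retained positions of $x$ that survives is flanked in the trace by a long run of bits whose source positions in $x$ are far (distance $> k$) from the run's endpoints — a standard first-moment / Chernoff estimate on the deletion pattern — and then define the statistic to only count occurrences of $w$ that sit strictly inside a single such run with the correct flanking structure. On the good event this statistic equals, exactly, the number of clean windows, whose expectation is the clean, computable quantity above; off the good event (probability $\le n^{-20}$) the statistic is between $0$ and $n$, contributing $\le n^{-19}$ to the bias, which is absorbed. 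One technical point to handle carefully is that occurrences of $w$ inside a single long retained run that are \emph{not} flanked by deletions (because the retained run is much longer than $k + 2\ell$) should also be counted or explicitly discarded in a way consistent with the expectation computation; the cleanest fix is to define a window to be ``clean at $i$'' iff positions $i-\ell,\dots,i-1$ and $i+k,\dots,i+k+\ell-1$ are all deleted and $i,\dots,i+k-1$ all retained, make the estimator count exactly the number of length-$k$ stretches of the trace whose $\ell$-neighborhoods on both sides are ``boundary-like'' (adjacent trace bits come from $x$-positions that are $>1$ apart), and verify that on the good event these two counts agree.
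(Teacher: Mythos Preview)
Your proposal has a genuine gap: the estimator you describe is not computable from the trace. A trace is just a binary string; it carries no information about \emph{which} positions of $x$ each retained bit came from, so there is no way to test whether ``adjacent trace bits come from $x$-positions that are $>1$ apart,'' and hence no way to recognize a ``clean window'' or a ``boundary-like'' flank. Concretely, take $w = 0^k$ and suppose $x$ has no run of $k$ consecutive zeros but has many zeros overall. Then no clean-window event ever occurs for $w$, yet the trace will typically contain many occurrences of $0^k$ (produced by deleting the intervening $1$'s), and nothing observable distinguishes these spurious occurrences from the clean-window occurrences you want to count. The ``cross-talk'' you flag is not a side issue to be controlled by a buffer; it is the entire difficulty, and the buffer does nothing because the deletion pattern is invisible to you.

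The paper's argument avoids this by never trying to localize where in the trace a window came from. Instead it looks only at the probability that the first $k$ bits of a (randomly rotated) trace equal $s$; this probability is $\frac{1}{n}(1-q)^k \sum_{i\ge 0} c_i q^i$ where $c_i$ counts (not necessarily contiguous) copies of $s$ inside a length-$(i+k)$ window of $x$, and $c_0$ is the quantity you want. One then \emph{simulates} higher deletion probabilities $q' \in [q,(1+q)/2]$ by further subsampling the trace, which lets you evaluate (a truncation of) the polynomial $\sum c_i t^i$ at many points $t$, and a Littlewood-type lower bound of Borwein--Erd\'elyi--K\'os on $[0,1]$ shows that two such polynomials with different constant terms must differ by at least $n^{-O_q(1)}$ somewhere on this interval. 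That analytic step is the real content and has no analogue in your outline.
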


The best known upper bound for average-case linear trace reconstruction is $\exp\left(O((\log n)^{1/3})\right)$ \cite{HoldenPP18}. Unfortunately, we were not able to adapt their argument to circular strings. One major reason why we are unable to do so is that in the argument of \cite{HoldenPP18} (as well as \cite{PeresZ17}, which provides an $\exp\left(O((\log n)^{1/2})\right)$ sample algorithm), the authors recover the $(k+1)^{\text{st}}$ bit of the string assuming the first $k$ bits are known using a small number of traces, and by reusing traces, they inductively recover the full string. However, since we are dealing with circular strings, even recovering the ``first'' bit does not make much sense. However, we note that even a polynomial-sample algorithm is quite nontrivial. In the linear case, a polynomial-sample algorithm for average-case strings was first proven by \cite{HolensteinMPW08}, and their algorithm only worked as long as the deletion probability $q$ was at most some small constant, which when optimized is only about $0.07$ \cite{PeresZ17}.

\medskip

Our final main result regards lower bounds for worst-case strings. For linear worst-case strings, the best known lower bound for trace reconstruction is $\tilde{\Omega}(n^{3/2})$ \cite{Chase19}. For circular trace reconstruction, we show an improved lower bound of $\tilde{\Omega}(n^3)$. Moreover, the proof of our lower bound is actually much simpler and cleaner than those of the known lower bounds for standard trace reconstruction \cite{Chase19, HoldenL18}. Specifically, we prove the following theorem, done in Section \ref{LowerBound}:

\begin{theorem} \label{LowerThm}
    Let $x$ be the string $10^{n}10^{n+1}10^{n+k} = 1 \underbrace{0 \dots 0}_{n \text{ times}} 1 \underbrace{0 \dots 0}_{n+1 \text{ times}} 1 \underbrace{0 \dots 0}_{n+k \text{ times}},$ where $n \ge 1$ and $2 \le k \le 4$. Likewise, let $y$ be the string $y = 10^{n}10^{n+k}10^{n+1}.$ Then, the strings $x, y$ are not equivalent up to cyclic rotations, but for any constant deletion probability $q$, one requires $\Omega(n^3/\log^3n)$ random traces to distinguish between the original string being $x$ or $y$. Thus, for all integers $n$, worst-case circular trace reconstruction requires at least $\tilde{\Omega}(n^3)$ random traces.
\end{theorem}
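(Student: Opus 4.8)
The plan is to establish the lower bound by a direct second-moment / statistical-indistinguishability argument on the two explicit strings $x = 10^{n}10^{n+1}10^{n+k}$ and $y = 10^{n}10^{n+k}10^{n+1}$. First I would verify the easy structural claim that $x$ and $y$ are not cyclic rotations of one another: both are (up to rotation) determined by the cyclic sequence of gap lengths between consecutive $1$'s, namely the multiset-with-cyclic-order $(n, n+1, n+k)$ versus $(n, n+k, n+1)$, and for $2 \le k \le 4$ these two cyclic orderings are genuinely different (one is the ``reverse'' of the other, and since all three gaps are distinct, no rotation identifies them). So distinguishing $x$ from $y$ is a nontrivial instance of circular trace reconstruction.

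The core of the argument is an information-theoretic indistinguishability bound. A circular trace of $x$ is obtained by deleting each bit independently with probability $q$ and then reading the surviving cyclic string from a uniformly random starting point; equivalently, the distribution of a trace is a uniform mixture over rotations of the ordinary deletion-channel output. The key observation is that a trace is almost surely of the form (cyclically) three $1$'s separated by three runs of $0$'s, whose lengths $(L_1, L_2, L_3)$ — recorded as a cyclic tuple — are independent $\mathrm{Binomial}$-type counts: $L_j$ is the number of retained $0$'s among the $n$, $n+1$, or $n+k$ zeros in the corresponding arc of the original string, so each $L_j \sim \mathrm{Bin}(n + O(1), p)$ with $p = 1-q$. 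For string $x$ the cyclic triple of block-lengths is a uniformly-rotated version of $(\mathrm{Bin}(n,p), \mathrm{Bin}(n+1,p), \mathrm{Bin}(n+k,p))$, and for $y$ it is the uniformly-rotated $(\mathrm{Bin}(n,p), \mathrm{Bin}(n+k,p), \mathrm{Bin}(n+1,p))$. I would reduce to bounding the total variation distance between these two mixture distributions on cyclic triples. Since only the cyclic order distinguishes them, the relevant quantity is essentially the total variation distance between $(A,B,C)$ and $(A,C,B)$ where $A \sim \mathrm{Bin}(n,p)$, $B \sim \mathrm{Bin}(n+1,p)$, $C \sim \mathrm{Bin}(n+k,p)$ are independent; this in turn is controlled by the TV distance between $\mathrm{Bin}(n+1,p)$ and $\mathrm{Bin}(n+k,p)$, i.e. between two binomials whose means differ by $(k-1)p = \Theta(1)$ and whose standard deviations are $\Theta(\sqrt n)$. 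A standard computation (e.g. via the local central limit theorem or a direct $\chi^2$/Hellinger estimate for shifted binomials) gives that this TV distance is $O(1/\sqrt n)$, and hence each single trace of $x$ versus $y$ has TV distance $O(1/\sqrt n)$. Actually one must be slightly more careful: the rotation means a single trace reveals the cyclic order only partially, which I expect pushes the per-trace distinguishing advantage down to $O(n^{-3/2})$ (roughly, an extra factor of $1/n$ from needing to ``detect'' which of the three nearly-equal blocks is which, combined with the fact that an adversary reading a random rotation only sees the order of the blocks, not their identities). By the standard reduction (a subadditivity / chain-rule bound on TV distance of product measures, or equivalently the fact that distinguishing with constant advantage requires $\Omega(1/d^2)$ samples when each sample has advantage $d$), distinguishing the two hypotheses with constant probability then requires $\Omega(n^3)$ traces; tracking the logarithmic factors from the binomial tail truncations (conditioning on all block lengths being within $O(\sqrt{n \log n})$ of their means, which holds with probability $1 - n^{-\omega(1)}$) costs the $\log^3 n$ in the denominator.

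The main obstacle, and the step I would spend the most care on, is pinning down the correct exponent — showing the per-trace advantage is $\Theta(n^{-3/2})$ rather than $\Theta(n^{-1/2})$. The subtlety is that a trace does not hand us the labeled triple $(L_1,L_2,L_3)$ aligned to the original string; it hands us an unlabeled cyclic triple, so the likelihood ratio between the $x$-distribution and the $y$-distribution is a symmetrized object. I would compute this likelihood ratio explicitly: under $x$ the density at a cyclic triple $(a,b,c)$ is proportional to $\sum_{\text{3 rotations}} f_n f_{n+1} f_{n+k}$ evaluated at the three cyclic shifts, and under $y$ it is the analogous sum with $n+1$ and $n+k$ swapped; the difference is a sum of terms of the form $f_n(a)\big(f_{n+1}(b)f_{n+k}(c) - f_{n+k}(b)f_{n+1}(c)\big)$, and after conditioning on the typical event each such factor is $O(n^{-1/2})$ relative in size while the antisymmetrization under $b \leftrightarrow c$ kills the leading order and leaves a further $O(n^{-1/2})$ gain, for $O(n^{-1})$ in the $\chi^2$-divergence per trace and hence $O(n^{-1/2}) \cdot$ (something) — I would need to carefully Taylor-expand $\log f_{n+k}(m) - \log f_{n+1}(m)$ around the means using the local CLT approximation $f_N(m) \approx \frac{1}{\sqrt{2\pi N p q}} \exp(-(m - Np)^2 / (2Npq))$ and verify that the first-order term integrates to zero against the symmetric part of the distribution, so that the surviving advantage is genuinely of order $n^{-3/2}$ per trace, yielding the $\chi^2$-divergence $O(n^{-3})$ that drives the $\Omega(n^3 / \log^3 n)$ bound. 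The remaining steps — the tensorization of $\chi^2$ (or an additive bound on TV via $\sqrt{N \cdot \chi^2}$) across $N$ i.i.d. traces, the handling of the negligible-probability events where a trace has fewer than three $1$'s or two runs merge, and the final ``for all $n$'' statement obtained by letting $n$ range — are routine once the per-trace estimate is in hand.
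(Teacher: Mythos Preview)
Your overall strategy matches the paper's: bound the statistical distance between the single-trace laws $\mu$ and $\nu$ (the paper uses Hellinger, you float $\chi^2$/TV) and invoke the standard sample-complexity lower bound. You also correctly identify that the interesting contribution comes from traces of the form $10^a10^b10^c$ and that the relevant object is the likelihood ratio at a cyclic triple $(a,b,c)$. However, there are two genuine gaps.

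\textbf{The source of the $n^{-3/2}$.} Your accounting only produces two factors of $n^{-1/2}$: one from ``each such factor is $O(n^{-1/2})$ relative in size'' and one from ``antisymmetrization under $b\leftrightarrow c$'', and the surrounding arithmetic is inconsistent (you write ``$O(n^{-1})$ in the $\chi^2$-divergence'' and then jump to $n^{-3/2}$ without a third mechanism). The paper's proof makes the mechanism explicit and algebraic: after clearing denominators the difference $S_1-S_2$ between the $x$- and $y$-densities at $(a,b,c)$ is an \emph{alternating} polynomial in $a,b,c$ (antisymmetric under \emph{every} transposition, not just $b\leftrightarrow c$), hence divisible by the full Vandermonde $(a-b)(b-c)(c-a)$. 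Each linear factor is $O(\sqrt{n\log n})$ on the high-probability event $a,b,c\in np\pm O(\sqrt{n\log n})$, while $S_2=\Theta(n^{2k-1})$, so $\mu/\nu-1=O((\log n/n)^{3/2})$ and $d_H^2=O((\log n/n)^3)$. In your local-CLT expansion the analogue of this is that the first-order term in the sum over all three rotations is $\propto (\alpha(c)-\alpha(b))+(\alpha(a)-\alpha(c))+(\alpha(b)-\alpha(a))=0$ identically, and the next surviving term carries the full Vandermonde factor; but you need to recognise and use the alternating structure under the whole symmetric group, not a single swap.

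\textbf{Traces with fewer than three $1$'s.} You call this a ``negligible-probability event'', but the probability that at least one of the three $1$'s is deleted is $1-(1-q)^3$, a constant. If these events contributed to the distance the argument would collapse. The paper handles this by the observation (which you do not make) that \emph{conditioned on any particular $1$ being deleted}, a trace from $x$ and a trace from $y$ are identically distributed (e.g.\ deleting the first $1$ in $x$ and the second $1$ in $y$ both yield a trace of the circular string $10^{n+1}10^{2n+k}$). Hence all traces with fewer than three $1$'s contribute zero to $d_H$, and only the $10^a10^b10^c$ case matters. This coupling step is short but essential and should replace your ``negligible probability'' claim.
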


\subsubsection{Concurrent Work}

We note that a very similar statement to Lemma \ref{Contiguous}, but for linear strings, was proven in independent concurrent work by Chen et. al. \cite[Theorem 2]{ChenDLSS20}, which provides a polynomial-sample algorithm for a ``smoothed'' variant of worst-case linear trace reconstruction. Many ideas in our proof of Lemma \ref{Contiguous} and their proof appear to overlap, though our proof is substantially shorter. We discuss the relation between our work and \cite{ChenDLSS20} further at the end of Section \ref{AverageCase}.

\subsection{Motivation and Relation to Other Work}

From a theoretical perspective, circular trace reconstruction can bring many novel insights to the theory of reconstruction algorithms, some of which may be useful even in the standard trace reconstruction problem. For instance, the proof of Theorem \ref{WorstCaseThm} combines analytic, statistical, and combinatorial approaches as in previous trace reconstruction papers, but now also uses ideas from number theory and results about cyclotomic integers. To the best of our knowledge, this paper is the first paper on trace reconstruction to utilize number theoretic ideas, though there is work on other problems about cyclic strings that uses ideas from number theory. Also, Lemma \ref{Contiguous} shows a way to recover all contiguous sequences in the original string of length $O(\log n)$ for arbitrary circular strings, which is a new result even in the linear case (concurrent with \cite{ChenDLSS20}) and has applications to problems in linear trace reconstruction as well (as done in \cite{ChenDLSS20}).

From an applications perspective, trace reconstruction is closely related to the multiple sequence alignment problem in computational biology. In the multiple sequence alignment problem, one is given DNA sequences from several related organisms, and the goal is to align the sequences to determine what mutations each descendant underwent from their common ancestor: the trace reconstruction problem is analogous to actually recovering the common ancestor. One can learn more about trace reconstruction's relation to the multiple sequence alignment problem (as well as to various other problems in computational biology) via the recent survey \cite{BioSurvey}.

The multiple sequence alignment problem is also a key motivation for studying circular trace reconstruction. Many important types of DNA, such as mitochondrial DNA in humans and other eukaryotes, chloroplast DNA, bacterial DNA, and DNA in plasmids, are predominantly circular (see, e.g., \cite[pp. 313, 397, 516-517]{Campbell}, or \cite{Wikipedia}). Therefore, understanding circular trace reconstruction could prove useful in reconstructing ancestral sequences for mitochondrial or bacterial DNA. Another problem in computational biology that trace reconstruction may be applicable to is the DNA Data Storage problem, where data is stored in DNA and can be recovered through sequencing, though the stored DNA may mutate over time \cite{ChurchGK12, OrganickAC18}. Recently, long-term DNA data storage in plasmids has been successfully researched \cite{NguyenPP18}, which further motivates the study of circular trace reconstruction.

\medskip

Besides the linear trace reconstruction problem, circular trace reconstruction is also closely related to the problem of population recovery from the deletion channel \cite{BanCFSS19, BanCSS19, Narayanan20}, where the goal is to recover an unknown mixture of $\ell$ strings from random traces. Indeed, receiving traces from a circular string is equivalent to receiving traces from a uniform mixture of a linear string along with all of its cyclic shifts, so circular trace reconstruction can be thought of as an instance of population recovery from the deletion channel with mixture size $\ell = n$.

Unfortunately, the best known algorithm for population recovery over worst-case strings requires $\exp\left(\tilde{O}(n^{1/3}) \cdot \ell^2\right)$ traces \cite{Narayanan20}, which is not useful if $\ell = n$. However, to prove our worst-case upper bound, we will use ideas based on \cite{DeOS17, NazarovP17, Narayanan20} to estimate certain polynomials that depend on the unknown circular string $x$. For the average case problem, i.e., if given a mixture over $\ell$ random strings, population recovery can be done with $\poly\left(\ell, \exp\left((\log n)^{1/3})\right)\right)$ random traces. While this seemingly implies a $\poly(n)$-sample algorithm for average-case circular trace reconstruction, the $n$ cyclic shifts of the circular string are quite similar to each other and thus do not behave like a collection of $n$ independent random strings. Indeed, our techniques for average-case circular trace reconstruction are very different from those developed in \cite{BanCSS19}.

While circular strings have not been studied before in the context of trace reconstruction, people have studied circular strings and cyclic shifts in the context of edit distance \cite{Maes90, AndoniGMP13}, multi-reference alignment \cite{BandeiraCSZ14, BandeiraWR19, PerryWBRS19}, and other pattern matching problems \cite{CKPRRSWZ21}. We note that \cite{AndoniGMP13} also applies results from number theory and about cyclotomic polynomials, though their techniques overall are not very similar to ours.

\subsection{Proof Overview}

In this subsection, we highlight some of the ideas used in Theorems \ref{WorstCaseThm}, \ref{AverageCaseThm}, and \ref{LowerThm}.

The proof of Theorem \ref{WorstCaseThm} is partially based on ideas from \cite{DeOS17, NazarovP17, Narayanan20}. The authors of \cite{DeOS17, NazarovP17} consider two strings $x, y \in \{0, 1\}^n$ and show how to distinguish between random traces of $x$ and random traces of $y$. To do so, they construct an unbiased estimator for the polynomial $P(z; x) := \sum_{i = 1}^{n} x_i z^i$ (or $P(z; y) = \sum_{i = 1}^{n} y_i z^i$) solely based on the random trace of either $x$ or $y$, for some $z \in \BC$. By showing that the unbiased estimator is never ``too'' large and that $P(z; x)$ and $P(z; y)$ differ enough for an appropriate choice of $z$, they can estimate this quantity using random traces to distinguish between $x$ and $y$. In our case, applying the same estimator will give us an unbiased estimator for $P'(z; x) := \BE_i[P(z; x^{(i)})],$ where $x^{(i)}$ is the $i$th cyclic shift of $x$. Unfortunately, it turns out that $P'(z; x) = P'(z; y)$ as polynomials in $z$ as long as $x, y$ have the same number of $1$'s, even if $x$ and $y$ are vastly different as circular strings. Our goal will then be to establish some other polynomial $Q(z; x)$ such that we can construct a good unbiased estimator, but at the same time $Q'(z; x) := \BE_i[Q(z; x^{(i)})]$ and $Q'(z; y) := \BE_i[Q(z; y^{(i)})]$ are distinct polynomials for any distinct cyclic strings $x, y$. We show that the polynomial $Q(z; x) := z^{kn} P(z; x)^k P(z^{-k}; x)$ will do the job, for some some small integer $k$. We provide a (significantly more complicated) unbiased estimator of $Q(z; x)$ using a random trace: the construction is similar to that of \cite{Narayanan20}, which shows how to estimate $P(z; x)^k$ for some integer $k$. To show that $Q(z; x) \neq Q(z; y)$ as polynomials, we first show that $Q(z; x)$ has the special property that if $z$ is a cyclotomic $n$th root of unity, this polynomial is invariant under cyclic shifts of $x$! Thus, it just suffices to show that if $x, y \in \{0, 1\}^n$ are not cyclic shifts of each other, there is some $n$th root of unity $z = e^{2 \pi i r/n}$ for some $0 \le r \le n-1$ such that $P(z; x)^k P(z^{-k}; x) \neq P(z; y)^k P(z^{-k}; y).$ This will require significant number theoretic computation, and will be true as long as $n$ is a prime or a product of two primes.

The bulk of the proof of Theorem \ref{AverageCaseThm} will be proving Lemma \ref{Contiguous}, which reconstructs all consecutive substrings of length $100 \log n$ in the unknown circular string $x$. For a random string $x$, these substrings will all be sufficiently different, so once we know the substrings, we can reconstruct the full string because there will only be one way to ``glue'' together the substrings. Therefore, we focus on explaining the ideas for Lemma \ref{Contiguous}. Our goal will be to determine how many times a string $s$ appears consecutively in $x$ for each string $s$ of length $100 \log n$. For an unknown string $x$ and $i$ between $0$ and $n- 100 \log n,$ we let $c_i$ be the number of times $s$ appears (possibly non-contiguously) in some contiguous block of length $i + 100 \log n$ in $x$. Then, a basic enumerative argument shows that for a random (cyclically shifted) trace $\tilde{x} = \tilde{x}_1 \tilde{x}_2 \cdots \tilde{x}_m,$ the probability that $\tilde{x}_1 \cdots \tilde{x}_{100 \log n} = s$ can be written as $\sum_{i \ge 0} c_i (1-q)^{100 \log n} q^i,$ and we wish to recover $c_0$. The $(1-q)^{100 \log n}$ term is a constant that equals $1/\poly(n),$ so it is easy to recover an approximation to $\sum_{i \ge 0} c_i q^i$. We truncate this polynomial at an appropriate degree (approximately $C \log n$ for some large $C$) and show that the truncated polynomial $\sum_{i = 0}^{C \log n} c_i x^i$ is very close to the original polynomial, but differs from $\sum_{i = 0}^{C \log n} c_i' x^i$ for some $x \in [q, (1-q)/2]$ by a significant amount, if $c_0' \neq c_0,$ using ideas based on \cite{BorweinEK99}. We can also simulate a trace with deletion probability $x > q$ by taking a ``trace of the trace.'' This will be sufficient in determining $c_0$, and therefore, the (multi)-set of all consecutive substrings of length $100 \log n$.

The proof of Theorem \ref{LowerThm} proceeds by showing that the laws of the traces of $x=10^n10^{n+1}10^{n+k}$ and $y=10^n10^{n+k}10^{n+1}$ are close to each other in the sense of Hellinger distance and concluding by a lemma in \cite{HoldenL18} that was used in a similar fashion to show a lower bound for linear trace reconstruction. It is first shown that conditioned on a $1$ being deleted, a trace from $x$ is equidistributed as a trace from $y$. Then explicit expressions for the probabilities that the trace is $10^a10^b10^c$ are computed and compared, yielding an upper bound on the Hellinger distance. The difference between the probabilities for $x$ and $y$ is proportional to the product of $(a-b)(b-c)(a-c)$ and a symmetric polynomial in $a,b,c$. Both $x$ and $y$ consist of three $1$'s separated by runs of $0$'s of approximate length $n$, so with high probability we have that $a,b,c$ are approximately $np$, with square root fluctuations. The contribution of the $(a-b)(b-c)(a-c)$ term allows us to recover a $\tilde\Omega(n^3)$ bound.

\subsection{Outline}

In Section \ref{Prelim}, we go over some preliminary definitions and results. In Section \ref{WorstCase}, we prove Theorem \ref{WorstCaseThm}. In Section \ref{AverageCase}, we prove Theorem \ref{AverageCaseThm}. In Section \ref{LowerBound}, we prove Theorem \ref{LowerThm}. Finally, in Section \ref{Conclusion}, we conclude by discussing open problems and avenues for further research. Some proofs, such as the proof of Proposition \ref{CircularHarderThanLinear} and the full proof of Theorem \ref{NT:Main}, are deferred to Appendix \ref{Omitted}.

\section{Preliminaries} \label{Prelim}


First, we explain a basic definition we will use involving complex numbers.

\begin{definition}
    For $z \in \BC,$ let $|z|$ be the \emph{magnitude} of $z$, and if $z \neq 0$, let $\arg z$ be the \emph{argument} of $z$, which is the value of $\theta \in (-\pi, \pi]$ such that $\frac{z}{|z|} = e^{i \theta}$.
\end{definition}

Next, we state a Littlewood-type result about bounding polynomials on arcs of the unit circle.

\begin{theorem} \textup{\cite{BorweinE97}} \label{Littlewood}
    Let $f(z) = \sum_{j = 0}^{n} a_j z^j$ be a nonzero polynomial of degree $n$ with complex coefficients. Suppose there is some positive integer $M$ such that $|a_0| \ge 1$ and $|a_j| \le M$ for all $0 \le j \le n$. Then, if $A$ is an arc of the unit circle $\{z \in \BC: |z| = 1\}$ with length $0 < a < 2 \pi$, there exists some absolute constant $c_1 > 0$ such that
\[\sup\limits_{z \in A} |f(z)| \ge \exp\left(\frac{-c_1 (1 + \log M)}{a}\right).\]
\end{theorem}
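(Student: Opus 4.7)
My plan is to prove the contrapositive. Suppose $\sup_{z\in A}|f(z)| \le \delta$; I will show that $\delta$ must satisfy $\delta \ge \exp(-c_1(1+\log M)/a)$, since otherwise I can force $|a_0| = |f(0)|$ to be strictly less than $1$, contradicting the hypothesis $|a_0|\ge 1$.

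The main engine is the subharmonicity of $\log|f|$ on the closed unit disk. Applying Jensen's inequality (equivalently, the Poisson integral representation at the origin) yields
\[
\log|f(0)| \;\le\; \frac{1}{2\pi}\int_0^{2\pi}\log\bigl|f(e^{i\theta})\bigr|\,d\theta.
\]
I would split the boundary integral into the contribution from the arc $A$ of length $a$, on which $|f|\le \delta$, and from its complement, on which the triangle inequality applied to the coefficient bounds gives the crude global estimate $|f(z)|\le \sum_{j=0}^n |a_j| \le (n+1)M$. This yields
\[
\log|f(0)| \;\le\; \frac{a}{2\pi}\log \delta \;+\; \Bigl(1-\frac{a}{2\pi}\Bigr)\log\bigl((n+1)M\bigr).
\]
Rearranging under the assumption $|f(0)|\ge 1$ produces a bound of the form $\delta \ge \exp\bigl(-C(\log n + \log M)/a\bigr)$ for an absolute constant $C$.

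The main obstacle is the parasitic $\log n$ factor that this naive approach produces: Theorem~\ref{Littlewood} claims a lower bound that is independent of the polynomial degree $n$. The $\log n$ comes entirely from the wasteful global bound $|f(z)|\le(n+1)M$ on $S^1\setminus A$, and removing it is the crux of the argument because any sharper pointwise bound on $|f|$ must actually exploit the polynomial structure. The two routes I would try are (i) a Blaschke-product factorization $f = B\cdot g$, where $B$ absorbs the zeros of $f$ inside the open disk and $g$ is a zero-free outer function with $|g|=|f|$ on the circle, followed by $H^\infty$-type estimates on $g$ whose coefficients inherit the bound $M$ in a dimension-free way; and (ii) an extremal/Chebyshev argument that reduces, via the Joukowski substitution $z\mapsto (z+z^{-1})/2$, the extremal problem $\min_{|a_0|\ge 1}\sup_A|f|$ to a real Chebyshev extremal problem on a subinterval of $[-1,1]$ of length $\asymp a^2$, whose optimizer is a rescaled Chebyshev polynomial with sup-norm $\asymp \exp(-c/a)$ independent of degree. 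Carrying either approach through rigorously to obtain the precise form $\exp(-c_1(1+\log M)/a)$ is the technical heart of the proof, and is what is carried out in \cite{BorweinE97}.
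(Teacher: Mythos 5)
The paper does not prove Theorem~\ref{Littlewood}; it is stated as a black-box citation to \cite{BorweinE97}, so there is no in-paper proof to compare your argument against.

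Evaluated on its own terms, what you have written is a diagnosis, not a proof, and you say so yourself. The Jensen/subharmonicity computation is correct as far as it goes, and you correctly identify that the crude bound $|f(z)|\le (n+1)M$ off the arc is exactly what introduces the parasitic $\log n$ and thus fails to give the degree-independent bound claimed. But both of the routes you propose for removing that $\log n$ are left at the level of ``here is what one would try,'' and the step you call ``the technical heart of the proof'' is precisely the step you do not carry out. That is a genuine gap. Moreover, route (i) as stated has a concrete problem: if $f=Bg$ with $B$ the Blaschke product of the interior zeros and $g$ the outer factor, then $|g|=|f|$ on the circle and $|g(0)|\ge|f(0)|$, but there is no reason the Taylor coefficients of $g$ inherit the bound $|a_j|\le M$ from $f$; the assertion that they do ``in a dimension-free way'' is exactly what would need to be proved, and it is not obviously true. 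Route (ii), a Chebyshev/Joukowski reduction to an interval Remez-type inequality, is much closer in spirit to what Borwein and Erd\'elyi actually do (and to the companion result the paper separately invokes as Theorem~\ref{BEKLittlewood}), but you would still need to execute the conformal change of variable, track how the coefficient constraint transforms, and solve or bound the resulting extremal problem, none of which appears in the proposal.
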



%

Next, we state two well known results about roots of unity in cyclotomic fields.

\begin{lemma} \label{GaloisConjugate} \cite{Marcus}
    Let $\omega = e^{2 \pi i/n}$. Then, the set of $\{\omega^k\}$ for $k \in \BZ, \gcd(k, n) = 1$ are all \emph{Galois conjugates}. This means that if $P(x)$ is an integer polynomial, then $P(\omega^k) = 0$ if and only if $P(\omega) = 0$ for any $k \in \BZ$ with $\gcd(k, n) = 1.$ Moreover, $P(\omega) = 0$ if and only if $P$ is a multiple of the $n$th \emph{Cyclotomic polynomial}.
\end{lemma}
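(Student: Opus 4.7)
The plan is to establish the lemma via the standard theory of cyclotomic polynomials. Define $\Phi_n(x) := \prod_{1 \le k \le n,\ \gcd(k,n) = 1}(x - \omega^k)$, the $n$th cyclotomic polynomial. The key claim is that $\Phi_n$ is the minimal polynomial of $\omega$ over $\BQ$; given this, both halves of the lemma follow immediately. Indeed, if $P \in \BZ[x]$ satisfies $P(\omega) = 0$, then $\Phi_n$ divides $P$ in $\BQ[x]$, hence in $\BZ[x]$ by Gauss's lemma (since $\Phi_n$ is monic); and the roots of $\Phi_n$ are by construction exactly $\{\omega^k : \gcd(k,n) = 1\}$, so $P$ vanishes on every such conjugate.

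First I would verify $\Phi_n \in \BZ[x]$. This comes from the factorization $x^n - 1 = \prod_{d \mid n} \Phi_d(x)$, obtained by grouping $n$th roots of unity by their exact order, together with induction on $n$: $\Phi_1(x) = x - 1 \in \BZ[x]$, and $\Phi_n(x) = (x^n - 1)/\prod_{d \mid n,\ d < n} \Phi_d(x)$ is a quotient of two monic integer polynomials, which must lie in $\BZ[x]$ by polynomial long division. The hard step is proving irreducibility of $\Phi_n$ over $\BQ$, which I would handle via the classical Dedekind argument. Let $f$ be the minimal polynomial of $\omega$ over $\BQ$; by Gauss's lemma $f \in \BZ[x]$ and $f \mid \Phi_n$, so write $\Phi_n = f g$ in $\BZ[x]$. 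It suffices to show every primitive $n$th root of unity is a root of $f$, and I would do so by proving the key implication: if $\zeta$ is any root of $f$ and $p$ is any prime with $p \nmid n$, then $\zeta^p$ is also a root of $f$. Supposing instead $g(\zeta^p) = 0$, then $\zeta$ is a root of $g(x^p)$, giving $f(x) \mid g(x^p)$ in $\BZ[x]$. Reducing modulo $p$ and using the Frobenius identity $g(x^p) \equiv g(x)^p \pmod{p}$, the reduction $\bar f$ would share a root with $\bar g$ in $\overline{\BF_p}$, forcing $\overline{\Phi_n}$ to have a repeated root. But $\Phi_n \mid x^n - 1$ and $x^n - 1 \in \BF_p[x]$ is separable whenever $p \nmid n$ (its derivative $n x^{n-1}$ is coprime to it), a contradiction. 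Iterating this on the prime factorization of any $k$ coprime to $n$ shows every primitive $n$th root of unity is a root of $f$, forcing $\deg f = \varphi(n) = \deg \Phi_n$ and hence $f = \Phi_n$.

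The main (and essentially only) obstacle is the Dedekind reduction-mod-$p$ step used to promote $f$ from having $\zeta$ as a root to having $\zeta^p$ as a root; everything else is routine polynomial algebra or amounts to unpacking definitions. Once irreducibility of $\Phi_n$ is in hand, both assertions of the lemma follow in one line: $P(\omega) = 0$ iff $\Phi_n \mid P$ in $\BZ[x]$ iff $P$ vanishes on every primitive $n$th root of unity iff $P(\omega^k) = 0$ for every $k$ with $\gcd(k, n) = 1$.
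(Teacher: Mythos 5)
The paper does not prove this lemma itself; it cites it as a standard fact from Marcus's \emph{Number Fields}, where the relevant content is the irreducibility of the $n$th cyclotomic polynomial over $\BQ$. Your proposal supplies exactly that omitted proof, and it is correct: you reduce the lemma to the claim that $\Phi_n$ is the minimal polynomial of $\omega$, verify $\Phi_n \in \BZ[x]$ via the factorization $x^n - 1 = \prod_{d \mid n}\Phi_d$, and then run the classical Dedekind/Frobenius argument (if $f(\zeta) = 0$ and $p \nmid n$ then $f(\zeta^p) = 0$, else $\overline{\Phi_n}$ would have a repeated root in $\BF_p[x]$, contradicting separability of $x^n - 1$ mod $p$). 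Iterating over the prime factorization of any $k$ with $\gcd(k,n)=1$ forces $\deg f = \varphi(n)$ and hence $f = \Phi_n$, from which both assertions of the lemma follow by Gauss's lemma. This is the standard textbook argument and the one that a reference like Marcus would give, so there is no real divergence of approach to report, only the fact that you have written out the proof the paper outsourced.
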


\begin{lemma} \label{RootsOfUnity} \cite{Marcus}
    Let $\omega = e^{2 \pi i/n}$ be an $n$th root of unity, and let $\BQ[\omega]$ be the $n$th degree cyclotomic field. Then, if $z \in \BQ[\omega]$ is such that $z^r = 1$ for some integer $r \ge 1,$ $z$ must equal $\omega^k$ or $-\omega^k$ for some integer $k$.
\end{lemma}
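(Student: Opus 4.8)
The plan is to argue via the degrees of cyclotomic fields over $\BQ$. Applying Lemma~\ref{GaloisConjugate} with $n$ replaced by an arbitrary modulus $m$, one gets that a primitive $m$th root of unity has minimal polynomial the $m$th cyclotomic polynomial, so $[\BQ[\zeta_m]:\BQ] = \varphi(m)$; in particular $[\BQ[\omega]:\BQ] = \varphi(n)$. Now let $z \in \BQ[\omega]$ with $z^r = 1$. Replacing $r$ by the multiplicative order of $z$, we may assume $z$ is a primitive $m$th root of unity $\zeta_m$, so $\BQ[z] = \BQ[\zeta_m]$ has degree $\varphi(m)$ over $\BQ$. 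Set $e := \mathrm{lcm}(m,n)$. Since $\zeta_m = z \in \BQ[\omega]$ and $\omega \in \BQ[\omega]$, the compositum satisfies $\BQ[\zeta_e] = \BQ[z,\omega] = \BQ[\omega]$; taking degrees gives $\varphi(e) = \varphi(n)$, while $n \mid e$ by construction.

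The key step is the elementary observation that $n \mid e$ together with $\varphi(e) = \varphi(n)$ forces $e = n$, or else $e = 2n$ with $n$ odd. This follows by writing $e = n \prod_p p^{c_p}$ (over all primes, with $c_p \ge 0$) and expanding the multiplicative function $\varphi$ prime by prime: a prime $p \mid n$ contributes a factor $p^{c_p}$ to the ratio $\varphi(e)/\varphi(n)$, and a prime $p \nmid n$ with $c_p \ge 1$ contributes $p^{c_p - 1}(p-1)$. Each such factor is $\ge 1$, so the product equals $1$ only if every $c_p$ vanishes (giving $e = n$), or exactly one factor survives, namely $p = 2$ with $c_2 = 1$ and $2 \nmid n$ (giving $e = 2n$). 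In either case $m \mid e \mid 2n$, hence $z^{2n} = 1$.

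To finish, I would check the exact shape of the set $\{\pm \omega^k : k \in \BZ\}$. If $n$ is even, then $2 \mid n$ already rules out $e = 2n$ in the previous step (as $\varphi(2n) = 2\varphi(n) \ne \varphi(n)$), so $e = n$, $z^n = 1$, and $z = \omega^k$ for some $k$ since the powers of $\omega$ exhaust all $n$th roots of unity. If $n$ is odd, then as $k$ ranges over $\BZ$ the residues $2k \bmod 2n$ hit every even class and $n + 2k \bmod 2n$ hit every odd class, so $\{\pm\omega^k\}$ is precisely the group of all $2n$th roots of unity, which contains $z$. Either way $z = \pm \omega^k$, as claimed.

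The proof is short and the only real obstacle is the bookkeeping in the $\varphi$-factorization argument; the one subtlety worth flagging is the parity split, where for even $n$ one must explicitly exclude $e = 2n$ so that $z$ lands among the $n$th roots of unity rather than merely the $2n$th roots.
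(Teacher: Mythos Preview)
Your argument is correct and is the standard proof of this classical fact. Note, however, that the paper does not actually give its own proof of this lemma: it is stated as a known result with a citation to Marcus's \emph{Number Fields}, so there is no in-paper proof to compare against. Your degree-counting approach via $\varphi(\mathrm{lcm}(m,n)) = \varphi(n)$ and the prime-by-prime factorization of $\varphi(e)/\varphi(n)$ is exactly the textbook route, and your handling of the parity split (ruling out $e = 2n$ when $n$ is even, and identifying $\{\pm\omega^k\}$ with the full group of $2n$th roots of unity when $n$ is odd) is clean and complete. The only implicit fact you use without justification is that the compositum $\BQ(\zeta_m,\zeta_n)$ equals $\BQ(\zeta_{\mathrm{lcm}(m,n)})$, but this is immediate from B\'ezout and is a standard ingredient in any treatment of cyclotomic fields.
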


Finally, we define the Hellinger distance between two probability measures and state a folklore bound on distinguishing between distributions based on samples in terms of the Hellinger distance.

\begin{definition}
    Let $\mu$ and $\nu$ be discrete probability measures over some set $\Omega.$ In other words, for $x \in \Omega$, $\mu(x)$ is the probability of selecting $x $ when drawing from the measure $\mu.$ Then, the Hellinger distance is defined as 
\[d_H(\mu, \nu) = \left(\sum_{x \in \Omega} \left(\sqrt{\mu(x)}-\sqrt{\nu(x)}\right)^2\right)^{1/2}.\]
\end{definition}

The following proposition is quite well-known (see for instance, \cite[Lemma A.5]{HoldenL18}).

\begin{proposition}
    If $\mu, \nu$ are discrete probability measures, then if given i.i.d. samples from either $\mu$ or $\nu$, one must see at least $\Omega(d_H(\mu, \nu)^{-2})$ i.i.d. samples to determine whether the distribution is $\mu$ or $\nu$ with at least $2/3$ success probability.
\end{proposition}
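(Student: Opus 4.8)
The plan is to reduce the statement to the corresponding fact about total variation distance, using that the Hellinger (Bhattacharyya) affinity tensorizes under independent products. Write $\rho(\mu,\nu) := \sum_{x \in \Omega} \sqrt{\mu(x)\,\nu(x)}$ for the affinity, so that $d_H(\mu,\nu)^2 = 2\bigl(1 - \rho(\mu,\nu)\bigr)$ and $\rho(\mu,\nu) \in [0,1]$; we may assume $d_H(\mu,\nu) > 0$, since otherwise $\mu = \nu$ and the bound is vacuous. First I would record the tensorization identity $\rho(\mu^{\otimes n}, \nu^{\otimes n}) = \rho(\mu,\nu)^n$, which is immediate upon expanding the sum over $\Omega^n$ as a product of $n$ identical copies of $\sum_{x\in\Omega}\sqrt{\mu(x)\nu(x)}$. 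Combining this with the elementary inequality $(1-t)^n \ge 1 - nt$ for $t \in [0,1]$ gives
\[
d_H(\mu^{\otimes n},\nu^{\otimes n})^2 = 2\bigl(1 - \rho(\mu,\nu)^n\bigr) \le 2n\bigl(1 - \rho(\mu,\nu)\bigr) = n\, d_H(\mu,\nu)^2 .
\]

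Next I would pass from Hellinger distance back to total variation $d_{TV}(\alpha,\beta) = \tfrac12\sum_x|\alpha(x)-\beta(x)|$. Factoring $|\alpha(x)-\beta(x)| = \bigl|\sqrt{\alpha(x)}-\sqrt{\beta(x)}\bigr|\cdot\bigl|\sqrt{\alpha(x)}+\sqrt{\beta(x)}\bigr|$ and applying Cauchy--Schwarz yields $d_{TV}(\alpha,\beta) \le \tfrac12\, d_H(\alpha,\beta)\sqrt{2 + 2\rho(\alpha,\beta)} \le d_H(\alpha,\beta)$. Applied to $\alpha = \mu^{\otimes n}$ and $\beta = \nu^{\otimes n}$, together with the previous display this gives $d_{TV}(\mu^{\otimes n},\nu^{\otimes n}) \le \sqrt{n}\, d_H(\mu,\nu)$.

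Finally I would invoke the classical characterization of optimal two-point hypothesis testing: given a single observation from an unknown one of two distributions $P$ and $R$, no (possibly randomized) test identifies the distribution correctly with probability exceeding $\tfrac12 + \tfrac12\, d_{TV}(P,R)$, this being attained by the likelihood-ratio test. An algorithm that receives $n$ i.i.d.\ samples and decides between $\mu$ and $\nu$ is precisely such a test with $P = \mu^{\otimes n}$ and $R = \nu^{\otimes n}$, so its success probability is at most $\tfrac12 + \tfrac12\sqrt{n}\, d_H(\mu,\nu)$. For this to reach $2/3$ one needs $\sqrt{n}\, d_H(\mu,\nu) \ge \tfrac13$, i.e.\ $n \ge \tfrac19\, d_H(\mu,\nu)^{-2}$, which is the asserted $\Omega\bigl(d_H(\mu,\nu)^{-2}\bigr)$ lower bound with a universal constant. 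There is no genuine obstacle in this argument; the only nonroutine ingredient is the tensorization identity for $\rho$, and everything else is standard manipulation of $f$-divergence inequalities.
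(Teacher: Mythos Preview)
Your argument is correct and is the standard route to this folklore bound: tensorize the Bhattacharyya affinity, convert to total variation via Cauchy--Schwarz, and invoke Le Cam's two-point testing bound. The paper does not actually give a proof of this proposition; it simply records it as well-known and points to \cite[Lemma~A.5]{HoldenL18}, so there is nothing to compare against beyond noting that your self-contained derivation matches what that reference encapsulates.
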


\section{Worst Case: Upper Bound} \label{WorstCase}



In this section, we prove Theorem \ref{WorstCaseThm}, i.e., we provide an $\exp\left(\tilde{O}(n^{1/3})\right)$-sample algorithm for circular trace reconstruction when the length $n$ is a prime or product of two primes.

For a (linear) string $x \in \{0, 1\}^n$ and $z \in \BC,$ we define $P(z; x) := \sum_{i = 1}^{n} x_i z^i.$ The first lemma we require creates an unbiased estimator for $\prod_{i = 1}^{m} P(z_i; x)$ for some complex numbers $z_1, \dots, z_m,$ using only random traces of $x$. The proof of the following lemma greatly resembles the proof of \cite[Lemma 4.1]{Narayanan20}, so we defer the proof to Appendix \ref{Omitted}.

\begin{lemma} \label{CreatingG}
    Let $x$ be a linear string of length $n$. Fix $q$ as the deletion probability and $p = 1-q$ as the retention probability. Then, for any integer $m \ge 1$ and any $Z = (z_1, \dots, z_m)$ for $z_1, \dots, z_m \in \BC,$ there exists some function $g_m(\tilde{x}, Z)$ such that
\[\BE_{\tilde{x}}[g_m(\tilde{x}, Z)] = \prod_{k = 1}^{m} \left(\sum_{i = 1}^{n} x_i z_k^i\right),\]
    where the expectation is over traces drawn from $x$. Moreover, for any $L \ge 1,$ and for all $\tilde{x} \in \{0, 1\}^n$ and all $Z$ such that $|z_1|, \dots, |z_m| = 1$ and $|\arg z_i| \le \frac{1}{L}$ for all $1 \le i \le m,$ 
\[|g_m(\tilde{x}, Z)| \le (p^{-1} m n)^{O(m)} \cdot e^{O(m^2n/(p^2L^2))}.\]
\end{lemma}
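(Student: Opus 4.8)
We are asked to prove Lemma \ref{CreatingG}: construct an unbiased estimator $g_m(\tilde{x}, Z)$ for $\prod_{k=1}^m P(z_k; x)$ from a single trace, and bound its magnitude. Let me think about how to do this.

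The starting point is the single-polynomial case from De-O'Donnell-Servedio / Nazarov-Peres. For a single $z$, the standard trick: given a trace $\tilde{x} = \tilde{x}_1 \cdots \tilde{x}_{|\tilde{x}|}$, we want $\mathbb{E}[\cdot] = \sum_i x_i z^i$. The issue is that a retained bit $x_i$ appears at a random position $j$ in the trace, where $j-1 \sim \text{Binomial}(i-1, p)$. So if we assign to the $j$-th bit of the trace a weight $w(j)$, the expected contribution of $x_i$ is $x_i \cdot p \cdot \sum_j \Pr[\text{position} = j] w(j) = x_i \cdot p \cdot \mathbb{E}[w(1 + \text{Bin}(i-1,p))]$. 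Choosing $w(j) = c^{j}$ for a suitable constant $c$ (related to $z$), the binomial MGF gives $\mathbb{E}[c^{\text{Bin}(i-1,p)}] = (q + pc)^{i-1}$, so we want $q + pc$ to play the role of $z$; i.e. set $c = (z-q)/p$ and include a correction factor $z/p$ so that the total is $x_i z^i$. Summing over $j$ in the trace recovers $\sum_i x_i z^i$.

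For the product of $m$ polynomials, following the approach of \cite[Lemma 4.1]{Narayanan20}, the natural idea is to iterate: pick $m$ (ordered, say) positions $j_1 < j_2 < \cdots < j_m$ in the trace — or more carefully, $m$ distinct indices in the original string that all got retained — and assign a product of weights, one factor of the above form for each $z_k$, taking care that the $k$-th factor only "sees" the portion of the string beyond the $(k-1)$-th chosen index. Concretely, I would write $g_m(\tilde{x}, Z)$ as a sum over tuples $1 \le j_1 < \cdots < j_m \le |\tilde{x}|$ of $\prod_{k=1}^m \big( \tfrac{z_k}{p} \big) \big(\tfrac{z_k - q}{p}\big)^{j_k - j_{k-1} - 1}$ (with $j_0 := 0$), and then verify by the same binomial-MGF computation — now applied to the gaps between consecutive retained indices, which are independent — that the expectation telescopes to $\prod_k \sum_i x_i z_k^i$. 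The independence of the gaps (each gap between consecutive retained original positions, conditioned on retention, has its own independent binomial structure) is what makes the factored weight work. I expect this bookkeeping — getting the exact form of the weights and the induction on $m$ right — to be the most delicate part, but it is essentially a careful repackaging of the argument in \cite{Narayanan20}, which is why the paper defers it to the appendix.

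For the magnitude bound: when $|z_k| = 1$ and $|\arg z_k| \le 1/L$, we have $|z_k - q| \le ?$ — writing $z_k = e^{i\theta}$ with $|\theta| \le 1/L$, $|z_k - q|^2 = 1 - 2q\cos\theta + q^2 = (1-q)^2 + 2q(1 - \cos\theta) \le p^2 + q\theta^2 \le p^2(1 + \theta^2/p^2)$, so $|(z_k-q)/p| \le 1 + O(1/(p^2 L^2))$ and $|z_k/p| \le 1/p$. Each term in the sum over tuples is then bounded by $p^{-m} \cdot (1 + O(1/(p^2L^2)))^{\sum_k (j_k - j_{k-1} - 1)} \le p^{-m} e^{O(n/(p^2 L^2))}$ since $\sum_k (j_k - j_{k-1}-1) \le |\tilde{x}| \le n$; wait, we need the power to be at most $O(m n / (p^2 L^2))$ as claimed, which is even weaker, so this is fine. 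The number of tuples is at most $\binom{|\tilde{x}|}{m} \le n^m$, and combined with polynomial factors of $m$ and $p^{-1}$ coming from the precise combinatorial coefficients one gets the claimed $(p^{-1} m n)^{O(m)} \cdot e^{O(m^2 n/(p^2 L^2))}$. The main obstacle overall is the exact form of $g_m$ — ensuring unbiasedness holds term-by-term over the gap decomposition rather than just in aggregate — after which both the expectation identity and the magnitude bound follow from standard MGF and counting estimates.
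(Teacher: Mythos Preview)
Your proposed estimator is not unbiased for $\prod_{k=1}^m P(z_k;x)$, for two related reasons.

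First, the weights are wrong even for the ``nice'' terms. With $w_k=(z_k-q)/p$ the binomial-MGF computation gives, for a tuple of original positions $i_1<\cdots<i_m$,
\[
\BE\Big[\prod_k \tfrac{z_k}{p}\,w_k^{\,j_k-j_{k-1}-1}\Big] \propto \prod_k (pw_k+q)^{\,i_k-i_{k-1}} = z_1^{\,i_1}\,z_2^{\,i_2-i_1}\cdots z_m^{\,i_m-i_{m-1}},
\]
not $z_1^{i_1}z_2^{i_2}\cdots z_m^{i_m}$. To get the correct monomial you must take $pw_r+q=\prod_{s\ge r} z_s$ (a \emph{product} of several $z$'s), so that the gap exponents telescope; this is exactly what the paper does via the sets $B_r$. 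This is also the source of the $m^2$ in the magnitude bound: since $\arg(\prod_{s\in B_r} z_s)$ can be as large as $m/L$, Proposition~\ref{Complex} only yields $|w_{B,r}|\le 1+O(m^2/(p^2L^2))$, and the $m^2$ in $e^{O(m^2 n/(p^2L^2))}$ is tight for this construction, not slack as you suggest.

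Second, even with the corrected weights, summing over $j_1<\cdots<j_m$ only produces the ordered terms $i_1<\cdots<i_m$ in the expansion of $\prod_k P(z_k;x)=\sum_{(i_1,\ldots,i_m)} x_{i_1}\cdots x_{i_m}\,z_1^{i_1}\cdots z_m^{i_m}$. You miss all other orderings and all collisions $i_k=i_{k'}$. The paper handles this by summing over all chains $[m]=B_1\supsetneq\cdots\supsetneq B_k\neq\emptyset$, using $x_i^2=x_i$ to collapse repeated indices, and taking $w_{B,r}=p^{-1}((\prod_{i\in B_r}z_i)-q)$. Your sketch has the right single-variable building block but omits both the product-weight telescoping and the chain decomposition that turn it into an unbiased estimator for the full product.
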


For $x \in \{0, 1\}^n$ and $z \in \BC,$ let $P(z; x) := \sum_{i = 1}^{n} x_i z^i.$ Our main goal will be to determine the value of $f_t(z; x) := P(z; x)^t \cdot P(z^{-t}; x)$ for some integer $t$, where $z$ is an $n$th root of unity. Importantly, we note that $f_t(z; x)$ is invariant under rotations of $x$, since for $z = e^{2 \pi i k/n},$
\[\sum_{i = 1}^{n} x_{(i+1) \Mod n} z^i = \sum x_i z^{i-1} = P(z; x) \cdot z^{-1}\]
whereas
\[\sum_{i = 1}^{n} x_{(i+1) \Mod n} z^{-t \cdot i} = \sum x_i z^{-t(i-1)} = P(z^{-t}; x) \cdot z^{t}\]
Therefore, if we define $x^{(j)}$ as the string $x$ rotated by $j$ places (so $x^{(j)}_i = x_{(i+j) \Mod n}$), then $f(z; x) = f(z; x^{(j)})$ for all $z = e^{2 \pi i k/n}$ and $0 \le j \le n-1.$

Now, choose some $z$ with $|z| = 1$ and $|\arg z| \le \frac{1}{L}$. Also, fix some integer $t$, let $m = t+1,$ and let $Z = ( \underbrace{z, \dots, z}_{t \text{ times}}, z^{-t}).$ Then, if $j$ is randomly chosen in $\{0, 1, \dots, n-1\}$ and $\tilde{x}$ is a random trace,
\[\BE_{\tilde{x}}[n z^{tn} \cdot g_m(\tilde{x}, Z)] = (n \cdot z^{tn}) \cdot \left(\frac{1}{n} \cdot \sum_{j = 0}^{n-1} P(z; x^{(j)})^t \cdot P(z^{-t}; x^{(j)})\right) = \sum_{j = 0}^{n-1} z^{tn} \cdot P(z; x^{(j)})^t \cdot P(z^{-t}; x^{(j)}),\]
where $g_m(\tilde{x}, Z)$ is defined in Lemma \ref{CreatingG}.
Note that $\sum_{j = 0}^{n-1} z^{t n} \cdot P(z; x^{(j)})^t \cdot P(z^{-t}; x^{(j)})$ is a polynomial of $z$ of degree at most $3tn$ and all coefficients bounded by $n^{t+1}$. We write this polynomial as $Q_t(z; x).$ Thus, if we define $h_t(\tilde{x}, z) := n z^{tn} g_{m}(\tilde{x}, Z),$ we have that $\BE_{\tilde{x}}[h_t(\tilde{x}, z)] = Q(z; x)$ for $\tilde{x}$ a trace of a randomly shifted $x$, and that $|h_t(\tilde{x}; z)| \le (p^{-1} t n)^{O(t)} \cdot e^{O(t^2n/(p^2L^2))}$ whenever $|z| = 1$ and $|\arg z| \le \frac{1}{L}$ for $L \ge 2$ and $m = t+1$, by Lemma \ref{CreatingG}.

Now, we will state two important results that will lead to the proof of the main result.

\begin{lemma} \label{Arc}
    Let $n \ge 2,$ and suppose that $x, x'$ are strings in $\{0, 1\}^n$ such that $Q_t(z; x) \neq Q_t(z; x')$ as polynomials in $z.$ Then, there is an absolute constant $c_2$ such that for any $L \ge 2,$ there exists $z$ such that $|z| = 1,$ $|\arg z| \le \frac{1}{L},$ and
\[|Q_t(z; x) - Q_t(z; x')| \ge n^{-c_2 t L}.\]
\end{lemma}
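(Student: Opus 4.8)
The plan is to apply the Littlewood-type bound of Theorem \ref{Littlewood} to the difference polynomial $R(z) := Q_t(z;x) - Q_t(z;x')$. By hypothesis $R$ is a nonzero polynomial; from the construction of $Q_t$ we know $\deg R \le (t+1)n$ and every coefficient of $R$ is an integer bounded in absolute value by $2 n^{t+1}$. The only obstacle to invoking Theorem \ref{Littlewood} directly is that the theorem requires the \emph{constant} coefficient to have magnitude at least $1$, whereas $R$ may have a zero of some order $d \ge 0$ at $z = 0$. This is easily handled: write $R(z) = z^d \tilde R(z)$ where $\tilde R$ has nonzero constant term. Since the coefficients of $R$ are integers, the constant term of $\tilde R$ is a nonzero integer, hence has magnitude $\ge 1$, and the coefficients of $\tilde R$ are still integers bounded by $2 n^{t+1}$. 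On the unit circle $|z^d| = 1$, so $|R(z)| = |\tilde R(z)|$ for all $z$ with $|z|=1$, and it suffices to lower bound $\sup_{z \in A} |\tilde R(z)|$.

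Next I would choose the arc. Let $A$ be the arc of the unit circle consisting of all $z$ with $|z| = 1$ and $|\arg z| \le \tfrac{1}{L}$; this arc has length $a = 2/L$ (which is less than $2\pi$ since $L \ge 2 > 1/\pi$). Applying Theorem \ref{Littlewood} to $\tilde R$ with $M := 2 n^{t+1}$ gives
\[
\sup_{z \in A} |\tilde R(z)| \;\ge\; \exp\!\left(\frac{-c_1 (1 + \log M)}{a}\right) \;=\; \exp\!\left(\frac{-c_1 L\,(1 + \log(2 n^{t+1}))}{2}\right).
\]
Now $1 + \log(2 n^{t+1}) = 1 + \log 2 + (t+1)\log n \le c' \, t \log n$ for a suitable absolute constant $c'$ (using $t \ge 1$ and $n \ge 2$), so the right-hand side is at least $\exp(-c_2 t L \log n) = n^{-c_2 t L}$ for $c_2 := c_1 c'/2$, a uniform constant. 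Since $\sup_{z \in A}|R(z)| = \sup_{z\in A}|\tilde R(z)|$, there exists $z$ with $|z|=1$, $|\arg z| \le 1/L$, and $|Q_t(z;x) - Q_t(z;x')| = |R(z)| \ge n^{-c_2 t L}$, as desired.

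There is essentially no hard step here — the lemma is a direct packaging of Theorem \ref{Littlewood}. The one point requiring care is the reduction to a polynomial with nonzero (hence, by integrality, magnitude-$\ge 1$) constant term, and the bookkeeping to absorb the $1 + \log M$ factor into the clean form $n^{-c_2 t L}$; both are routine. One should also double-check the arc-length normalization ($a = 2/L$ versus $1/L$) against the constant in Theorem \ref{Littlewood}, but this only affects the value of the absolute constant $c_2$, not the form of the bound.
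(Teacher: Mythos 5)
Your proof is correct and follows essentially the same route as the paper: apply Theorem \ref{Littlewood} to $R(z) = Q_t(z;x) - Q_t(z;x')$ on the arc $|\arg z| \le 1/L$ of length $2/L$, using the degree bound $(t+1)n$ and coefficient bound $2n^{t+1}$, then absorb constants into $c_2$. Your additional step of factoring out $z^d$ to ensure a nonzero constant coefficient is a genuine (if minor) improvement in rigor over the paper's proof, which silently applies Theorem \ref{Littlewood} even though the constant term of $R$ actually vanishes; your integrality-of-coefficients observation for the $|a_0|\ge 1$ requirement is exactly the right fix.
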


\begin{proof}
    Note that $Q_t(z; x) - Q_t(z; x')$ is a nonzero polynomial in $z$ of degree at most $(t+1)n$ and with all coefficients bounded by $2n^{t+1}.$ Therefore, by Theorem \ref{Littlewood},
\[\sup_{|z| = 1, |\arg z| \le 1/L} |Q_t(z; x) - Q_t(z; x')| \ge \exp\left(-\frac{c_1 (1 + \log (2n^{t+1}))}{2/L}\right) \ge \exp\left(-c_2 \cdot L \cdot t \cdot \log n\right) = n^{-c_2 t L},\]
    where we note that the arc $\{z: |z| = 1, |\arg z| \le \frac{1}{L}\}$ has length $\frac{2}{L}.$
\end{proof}

The next important result we need will be Theorem \ref{NT:Main}. We defer the full proof of Theorem \ref{NT:Main} to Subsection \ref{NT:Subsection}, but as the proof of the case where $n$ is prime is simpler, we prove this special case here. Using this, we can get an $\exp\left(\tilde{O}(n^{1/3})\right)$ sample upper bound at least for $n$ prime. As a note, we will define $\omega = e^{2 \pi i/n}$ from now on, where $n$ will be clear from context.

\begin{proposition} \label{NT:Prime}
    Suppose that $n = p$ is prime, and $a_0, \dots, a_{p-1}, b_0, \dots, b_{p-1} \in \{0, 1\}$ are such that for all $0 \le k < p,$ there is some integer $c_k$ such that $\sum_{i = 0}^{p-1} a_i = \omega^{c_k} \cdot \sum_{i = 0}^{p-1} b_i.$ Then, the sequences $\{a_1, \dots, a_p\}$ and $\{b_1, \dots, b_p\}$ are equivalent up to a cyclic permutation.
\end{proposition}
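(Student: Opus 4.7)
The plan is to encode the two sequences as polynomials $A(z) := \sum_{i=0}^{p-1} a_i z^i$ and $B(z) := \sum_{i=0}^{p-1} b_i z^i$, so the hypothesis reads $A(\omega^k) = \omega^{c_k} B(\omega^k)$ for every $0 \le k \le p-1$. Taking $k = 0$ gives $A(1) = B(1)$, so the two sequences have the same Hamming weight $s$. The idea is to use only the $k = 1$ identity to \emph{produce} an explicit candidate cyclic shift of $b$ equal to $a$; the remaining identities will never be needed.

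Assume first the generic case $B(\omega) \ne 0$. The $k=1$ hypothesis gives $A(\omega) = \omega^{c_1} B(\omega)$. I would define the shifted polynomial
\[ B'(z) \equiv z^{c_1} B(z) \pmod{z^p - 1}, \]
whose coefficient sequence $b'_j = b_{(j-c_1) \bmod p}$ is literally a cyclic shift of $b$. Since $\omega^p = 1$, we have $B'(\omega) = \omega^{c_1} B(\omega) = A(\omega)$, so the polynomial $A - B'$ vanishes at the primitive $p$-th root of unity $\omega$. By Lemma \ref{GaloisConjugate}, $A - B'$ is divisible by the $p$-th cyclotomic polynomial $\Phi_p(z) = 1 + z + \cdots + z^{p-1}$. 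Because $A - B'$ has degree at most $p-1$ while $\Phi_p$ has degree exactly $p-1$, the quotient is a constant: $A - B' = c \cdot \Phi_p(z)$. Since the coefficients of $A - B'$ lie in $\{-1, 0, 1\}$, we get $c \in \{-1, 0, 1\}$. Finally, evaluating at $z = 1$ yields $c \cdot p = A(1) - B'(1) = s - s = 0$, forcing $c = 0$ and hence $A = B'$, which exhibits the desired cyclic shift.

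The degenerate case $B(\omega) = 0$ is immediate: the $k = 1$ hypothesis then forces $A(\omega) = 0$ too, so by Lemma \ref{GaloisConjugate} both $A$ and $B$ are multiples of $\Phi_p$. Since each has degree at most $p-1$ and coefficients in $\{0,1\}$, each is either the zero polynomial or $\Phi_p$ itself (the all-ones sequence), and the $k=0$ condition $A(1) = B(1)$ forces them to coincide, a trivial cyclic shift.

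The main step is the coefficient/degree pinch in the generic case, and essentially the only place that requires care is the indexing bookkeeping for $B' \equiv z^{c_1} B \pmod{z^p-1}$ together with the verification that the constant $c$ is killed by the weight-matching consequence of $k=0$. The reason this argument is so clean for prime $p$ is that $z^p - 1$ has only the two cyclotomic factors $z - 1$ and $\Phi_p$, so a single vanishing at $\omega$ propagates to a global identity in $\BZ[z]/(z^p-1)$; for composite $n$, one would have to combine several cyclotomic factors coherently, which is where the full proof of Theorem \ref{NT:Main} becomes substantially more delicate and where the hypothesis on the number of prime factors enters.
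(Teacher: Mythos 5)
Your proof is correct and follows essentially the same route as the paper: use $k=0$ to equate Hamming weights, use $k=1$ to cyclically shift $b$ by $c_1$ so that the difference polynomial vanishes at $\omega$, then invoke the cyclotomic divisibility plus the degree bound and the weight equality to force the difference to be identically zero. The only cosmetic differences are that the paper phrases the final step as $z^p-1 \mid (A-B')$ (using the root at $1$ directly rather than evaluating $c\cdot\Phi_p$ at $1$ afterwards) and treats $p=2$ separately, while your separate degenerate case $B(\omega)=0$ is in fact unnecessary since the hypothesis already supplies an integer $c_1$ unconditionally, so the ``generic'' argument already covers it.
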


\begin{proof}
    First, $\sum_{i = 0}^{p-1} a_i = \omega^{c_0} \cdot \sum_{i = 0}^{p} b_i$. Since $\sum_{i = 0}^{p-1} a_i$ and $\sum_{i = 0}^{p-1} b_i$ are both nonnegative real numbers, and since $\omega^{c_0}$ is a root of unity, we must have that $\sum_{i = 0}^{p-1} a_i = \sum_{i = 0}^{p-1} b_i$. 
    
    Next, we have that $\sum_{i = 0}^{p-1} a_i \omega^i = \omega^{c_1} \cdot \sum_{i = 0}^{p-1} b_i \omega^i$. Letting $b'_i = b_{(i-c_1) \Mod p}$, we have that $b'$ is a cyclic shift of $b$, and $\sum_{i = 0}^{p-1} a_i = \sum_{i = 0}^{p-1} b'_i$ and $\sum_{i = 0}^{p-1} a_i \omega^i = \sum_{i = 0}^{p-1} b'_i \omega^i.$ Letting $Q(x) = \sum_{i = 0}^{p-1} (a_i-b'_i) x^i,$ we have that $\omega$ and $1$ are both roots of $Q(x)$. Since $Q(x)$ is an integer-valued polynomial, this implies that all Galois conjugates of $\omega$ are roots, so $1, \omega, \omega^2, \dots, \omega^{p-1}$ are roots of $Q(x).$ Thus, $x^p-1$ divides $Q(x).$ But since $Q(x)$ has degree at most $p-1,$ $Q(x)$ must equal $0$, so $a_i = b'_i$ for all $i$. Since the sequence $b'$ is just a shift of $b$, we are done.
\end{proof}

By using Theorem \ref{NT:Main} (or Proposition \ref{NT:Prime} in the case of $n$ prime), we obtain the following number theoretic result.

\begin{lemma} \label{PolysAreDifferent}
    Let $n$ be a prime or a product of two primes, and let $a = a_1 a_2 \cdots a_n$ and $b = b_1 b_2 \cdots b_n$ be distinct $n$-bit strings (even up to cyclic shift). Then, for some $0 \le \ell \le n-1$ with $z = \omega^{\ell}$, and for some $2 \le t \le 5$, we have that $P(z; a)^t P(z^{-t}; a) \neq P(z; b)^t P(z^{-t}; b).$
\end{lemma}

\begin{proof}
    First choose $k$ such that $\sum_{i = 1}^{n} a_i \omega^{i \cdot k} \neq \omega^{c_k} \cdot \sum_{i = 1}^{n} b_i \omega^{i \cdot k}$ for all integers $c_k,$ which exists by Theorem \ref{NT:Main}. If $k = 0,$ then $P(\omega^k; a) = P(1; a)$ and $P(\omega^k; b) = P(1; b)$ are distinct nonnegative integers, so we trivially have $P(1; a)^t P(1; a) \neq P(1; b)^t P(1; b).$ Otherwise, let $t$ be the smallest prime that doesn't divide $\frac{n}{\gcd(n, k)}$ (so $t \le 5$ as $n$ has at most $2$ prime factors). If $\sum_{i = 1}^{n} a_i \omega^{i \cdot k} = 0,$ then $\sum_{i = 1}^{n} b_i \omega^{i \cdot k} \neq 0.$ Now, since $\omega^{-tk}$ is a Galois conjugate of $\omega^k$ (since $t \nmid n$), we also have that $\sum_{i = 1}^{n} b_i \omega^{-ti \cdot k} \neq 0.$ This means that $P(\omega^k; a) = 0$ so $P(\omega^k; a)^t P((\omega^k)^{-t}; a) = 0,$ but $P(\omega^k; b)^t P((\omega^k)^{-t}; b) \neq 0$. Likewise, if $\sum_{i = 1}^{n} b_i \omega^{i \cdot k} = 0,$ we'll have $P(\omega^k; a)^t P((\omega^k)^{-t}; a) \neq 0,$ but $P(\omega^k; b)^t P((\omega^k)^{-t}; b) = 0$. This means the result follows if either $P(\omega^k; a) = 0$ or $P(\omega^k, b) = 0$.
    
    Otherwise, $P(\omega^k; a) = \sum_{i = 1}^{n} a_i \omega^{i \cdot k}$ and $P(\omega^k; b) = \sum_{i = 1}^{n} b_i \omega^{i \cdot k}$ are both nonzero. This means that for all $r \ge 0,$ $P(\omega^{(-t)^r \cdot k}; a)$ and $P(\omega^{(-t)^r \cdot k}; b)$ are both nonzero, since $\omega^{(-t)^r \cdot k}$ and $\omega^k$ are Galois conjugates. This means that if $P(z; a)^t P(z^{-t}; a) = P(t; b)^t P(z^{-t}; b)$ for all $z = \omega^{(-t)^r \cdot k},$ then 
\[\frac{P(\omega^{(-t)^{r+1} \cdot k}; a)}{P(\omega^{(-t)^r \cdot k}; a)^{-t}} = \frac{P(z^{-t}; a)}{P(z; a)^{-t}} = \frac{P(z^{-t}; b)}{P(z; b)^{-t}} = \frac{P(\omega^{(-t)^{r+1} \cdot k}; b)}{P(\omega^{(-t)^r \cdot k}; b)^{-t}}\]
    for all $r \ge 0,$ so we inductively have that
\[\frac{P(\omega^{(-t)^r \cdot k}; a)}{P(\omega^k; a)^{(-t)^r}} = \frac{P(\omega^{(-t)^r \cdot k}; b)}{P(\omega^k; b)^{(-t)^r}}.\]
    Now, letting $r = \varphi\left(\frac{n}{\gcd(n, k)}\right),$ we know that $k \cdot (-t)^r \equiv k \Mod n$ by Euler's theorem, which means that $\omega^{(-t)^r \cdot k} = \omega^k.$ Thus,
\[P(\omega^k; a)^{1 - (-t)^r} = P(\omega^k; b)^{1 - (-t)^r}.\]
    Since $k \neq 0,$ we have that $\frac{n}{\gcd(n, k)} > 1$ so $r \ge 1.$ Thus, since $t \ge 2,$ $1-(-t)^r \neq 0.$ Now, since $P(\omega^k; a), P(\omega^k; b)$ are nonzero, we have that $\frac{P(\omega^k; a)}{P(\omega^k; b)}$ is a $|1-(-t)^r|^{\text{th}}$ root of unity. Also, $P(\omega^k; a), P(\omega^k; b) \in \BQ[\omega],$ which means $\frac{P(\omega^k; a)}{P(\omega^k; b)} \in \BQ[\omega]$. However, all roots of unity in $\BQ[\omega]$ are of the form $\pm \omega^i$ for some $i$, and since $(-t)^{r}-1$ is odd if $n$ is odd (since $t = 2$), we must have that $\frac{P(\omega^k; a)}{P(\omega^k; b)} = \omega^{c_k}$ for some integer $c_k.$ This is a contradiction, so we must have that $P(z; a)^t P(z^{-t}; a) \neq P(z; b)^t P(z^{-t}; b),$ for some $z = \omega^{(-t)^r \cdot k}$, $r \ge 0$.
\end{proof}

Finally, we are ready to prove Theorem \ref{WorstCaseThm}.

\begin{proof}[Proof of Theorem \ref{WorstCaseThm}]
    Suppose that we are trying to distinguish between the original circular string being $a = a_1 a_2 \cdots a_n$ or $b = b_1 b_2 \cdots b_n,$ where $a, b$ are distinct, even up to cyclic shifts.

    We choose $\ell, t$ based on Lemma \ref{PolysAreDifferent}, so that $P(\omega^\ell; a)^t P((\omega^\ell)^{-t}; a) \neq P(\omega^\ell; b)^t P((\omega^\ell)^{-t}; b)$. As we have already noted, if $z$ is an $n^\text{th}$ root of unity, then $P(z; a)^t P(z^{-t}; a)$ is invariant under rotation of $a$, and $P(z; b)^t P(z^{-t}; b)$ is invariant under rotation of $b$. By our definition of $Q_t(z; x),$ we have that $Q_t(\omega^\ell; a) \neq Q_t(\omega^\ell; b),$ so $Q_t(z; a) \neq Q_t(z; b)$ as polynomials in $z$. Therefore, by Lemma \ref{Arc}, there is some $z$ such that $|z| = 1, |\arg z| \le \frac{1}{L}$, and
\[|Q_t(z; a) - Q_t(z; b)| \ge n^{-c_2 t L} \ge n^{-5 c_2 L}.\]
    So, for $L = \left\lceil n^{1/3} (\log n)^{-1/3} p^{-2/3} \right\rceil,$ there exists $z$ with $|z| = 1$ and $|\arg z| \le \frac{1}{L}$ and some $2 \le t \le 5$ such that
\[|Q_t(z; a) - Q_t(z; b)| \ge n^{-5 c_2 L} \ge \exp\left(-c_3 \cdot n^{1/3} (\log n)^{2/3} p^{-2/3}\right),\]
    but
\[|h_t(\tilde{x}, z)| \le (p^{-1} n)^{O(1)} \cdot \exp\left(O\left(\frac{n}{p^2 L^2}\right)\right) \le \exp\left(c_4 \cdot n^{1/3} (\log n)^{2/3} p^{-2/3}\right)\]
    for any trace $\tilde{x}$ of either $a$ or $b$. Sample $R = \exp\left(O\left(n^{1/3} (\log n)^{2/3} p^{-2/3}\right)\right)$ traces $\tilde{x}^{(1)}, \dots, \tilde{x}^{(R)}$ and choose $z$ and $t$ based on Lemma \ref{PolysAreDifferent}. Then, if we define $h_t(z)$ to be the average of $h_t(\tilde{x}^{(i)}, z)$ over $i$ from $1$ to $R$, the Chernoff bound tells us that with failure probability at most $10^{-n}$, $|h_t(z) - Q_t(z; a)| \le \frac{1}{3} \cdot \exp\left(c_4 \cdot n^{1/3} (\log n)^{2/3} p^{-2/3}\right)$ if the original string were $a,$ and $|h(z) - Q_t(z; b)| \le \frac{1}{3} \cdot \exp\left(c_4 \cdot n^{1/3} (\log n)^{2/3} p^{-2/3}\right)$ if the original string were $b.$ Thus, by returning $a$ if $h(z)$ is closer to $Q_t(z; a)$ and returning $b$ otherwise, we can distinguish between the original string being $a$ or $b$ using $\exp\left(O\left(n^{1/3} (\log n)^{2/3} p^{-2/3}\right)\right)$ traces, with $1-10^{-n}$ failure probability.
    
    Thus, to reconstruct the original string $x$, we simply run the distinguishing algorithm for all pairs $a, b \in \{0, 1\}^n$ such that $a \neq b,$ using the same $R$ traces $\tilde{x}^1, \dots, \tilde{x}^R.$ With probability at least $1 - (4/10)^n \ge 1 - 2^{-n},$ the true string $x$ will be the only string such that the distinguishing algorithm will successfully choose $x$ over all other strings. Thus, for $n$ a prime or a product of two primes, the circular trace reconstruction problem can be solved using $\exp\left(O\left(n^{1/3} (\log n)^{2/3} p^{-2/3}\right)\right)$ traces.
\end{proof}

\section{Average Case: Upper Bound} \label{AverageCase}
We now consider the situation in which the unknown circular string $x$ is random. For the sake of simplicity, we will assume that we may sample $x$ by sampling a uniform random linear binary string of length $n$ and applying a uniform random cyclic shift. Note that the resulting distribution on $x$ is not uniform over all possible circular strings, as strings with nontrivial cyclic symmetries are more likely to appear. However, such strings form a negligible fraction of all strings, and our arguments can easily be modified to handle the situation in which the distribution is uniform over all possible circular strings. We use the randomness to rule out certain problematic strings with high probability, and this can be done for uniform random circular strings as well as other distributions, for example if in the sampling procedure each bit is independently biased towards $0$ or $1$.

\begin{theorem}\label{AverageCasePoly}
Let $x$ be a random (in the sense described above) unknown circular string of length $n$ and let $q$ be the deletion probability of each element. Then there exists a constant $C_q$ depending only on $q$ such that we can determine $x$ with failure probability at most $n^{-10}$ using $O(n^{C_q})$ traces.
\end{theorem}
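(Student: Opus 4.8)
The plan is to reduce Theorem~\ref{AverageCasePoly} to Lemma~\ref{Contiguous}, which is the real workhorse, and then handle the ``gluing'' step for random strings. So I would first assume Lemma~\ref{Contiguous}: with $O(n^{C_q})$ traces and failure probability $n^{-10}$ we can recover the multiset $\mathcal{S} = \{x_i x_{i+1} \cdots x_{i+k-1}\}_{i=1}^n$ of all $n$ contiguous length-$k$ windows of the circular string, where $k = 100\log n$. The remaining task is purely combinatorial: show that for a random circular string, the multiset $\mathcal{S}$ determines $x$ up to cyclic rotation, with failure probability at most (say) $n^{-20}$, so the union bound over the two bad events still gives $n^{-10}$ overall.

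For the combinatorial step I would define a string to be \emph{regular} (matching the terminology flagged before Lemma~\ref{Contiguous}) if every length-$(50\log n)$ window, equivalently every window of length roughly $k/2$, occurs exactly once as a contiguous substring of $x$; more precisely, all $n$ cyclic windows of length $\lceil k/2 \rceil$ are distinct. A standard first-moment computation shows a uniform random circular string is regular except with probability $O(n^2 \cdot 2^{-k/2}) = O(n^2 \cdot n^{-50}) = n^{-\Omega(1)}$, easily beating $n^{-20}$ for large $n$ (and this only uses pairwise independence of bits, so it survives the ``other distributions'' remark). Then I would argue that a regular string is reconstructible from $\mathcal{S}$: build the de Bruijn-type graph on the $(k-1)$-windows appearing in $\mathcal{S}$, with an edge for each length-$k$ window in $\mathcal{S}$; regularity forces this graph to be a single cycle (each $(k-1)$-vertex has a unique in- and out-neighbor, because a repeated $(k-1)$-window would force a repeated $(k/2)$-window), and reading off that cycle recovers $x$ up to rotation — which is exactly what circular reconstruction asks for. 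If $x$ is not regular we simply fail; this contributes only to the $n^{-20}$ bad event.

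Concretely the steps, in order: (i) invoke Lemma~\ref{Contiguous} to get $\mathcal{S}$; (ii) define regularity and prove random strings are regular w.h.p.\ by a union bound over the $\binom{n}{2}$ pairs of starting positions; (iii) given $\mathcal{S}$, check whether the induced graph on $(k-1)$-windows is a simple cycle — if so output that cyclic string, else declare failure; (iv) prove correctness of (iii) for regular $x$ and take a union bound. The parameter $k = 100\log n$ is comfortably large enough that the regularity failure probability is polynomially small with a large exponent, and all graph operations are polynomial time (though only sample complexity is claimed).

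I expect the main obstacle to be entirely inside Lemma~\ref{Contiguous}, not in the reduction: estimating, for each fixed length-$k$ string $s$, the count $c_0$ of contiguous occurrences of $s$ in $x$. The subtlety is that a single random rotated trace only reveals the \emph{prefix} probability $\sum_{i \ge 0} c_i (1-q)^{k} q^i$ where $c_i$ counts occurrences of $s$ inside the first $i+k$ symbols read, so one recovers the generating polynomial $\sum_i c_i y^i$ evaluated at $y = q$ (after dividing out the known $(1-q)^k$), and one must then \emph{separate} $c_0$ from the tail. The trick, as sketched, is to also evaluate this polynomial at larger values $y \in (q, (1-q)/2]$ by taking ``traces of traces'' (composing deletion channels to raise the effective deletion probability), truncate the polynomial at degree $\Theta(\log n)$ where the tail is $n^{-\Omega(1)}$-negligible, and apply a Borwein–Erdélyi–Kós-type bound (Theorem~\ref{Littlewood} in the excerpt, or the arc estimate of \cite{BorweinEK99}) to argue that two $\{0,1,\dots,\mathrm{poly}(n)\}$-coefficient polynomials that disagree in the constant term must disagree by $n^{-O(1)}$ somewhere on $[q,(1-q)/2]$, so $\mathrm{poly}(n)$ samples at $\mathrm{poly}(n)$ well-chosen evaluation points suffice. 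Getting the error propagation through the trace-of-trace simulation and the degree truncation to line up — so that the final sample bound is a clean $n^{C_q}$ — is the delicate part; the reduction above is routine by comparison.
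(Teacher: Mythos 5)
Your proposal is essentially the paper's own argument: reduce to Lemma~\ref{Contiguous} by arguing that a random circular string is ``regular'' with high probability and that a regular string is uniquely determined by the multiset of its length-$k$ windows via an extend-window-by-window (de Bruijn cycle) argument. The only cosmetic difference is in the regularity condition: the paper asks that all cyclic windows of length $k$ and of length $k-1$ be pairwise distinct, while you ask that all windows of length $\lceil k/2\rceil$ be distinct; your condition is stronger and implies theirs, and both give a union-bound failure probability that is polynomially negligible, so either works. Your concluding sketch of the inside of Lemma~\ref{Contiguous} (prefix probability of seeing $s$, truncation at degree $\Theta(\log n)$, trace-of-trace to simulate larger deletion probability, Borwein--Erd\'{e}lyi--K\'{o}s separation on an interval) is also aligned with what the paper does.
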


In what follows, we will let $x=x_1\cdots x_n$ and take indices of bits in $x$ modulo $n$. Let $k=100\log n$. We first note that with high probability, all of the consecutive substrings of $x$ of length $k$ and $k-1$ are pairwise distinct. We will refer to such strings $x$ as \emph{regular} strings. Indeed, the probability that $x_i\cdots x_{i+k-1}=x_j\cdots x_{j+k-1}$ for $i\ne j$ is $2^{-k}$ (where indices are taken modulo $n$), and union bounding over all $i,j$ as well as both $k$ and $k-1$ gives a failure probability of at most $O(n^2 2^{-k}) \ll n^{-10}$. 

If we assume that $x$ is regular,
the length $k$ consecutive substrings of $x$ uniquely determine $x$. Indeed, given $x_i\cdots x_{i+k-1}$, we can uniquely determine $x_{i+k}$ as there is a unique length $k$ consecutive substring of $x$ that begins with $x_{i+1}\cdots x_{i+k-1}$. Iteratively applying this allows us to recover the entire string $x$.
Thus, to prove Theorem \ref{AverageCasePoly}, it suffices to prove Lemma \ref{Contiguous}, i.e., to determine how many times each length $k$ substring appears consecutively in any string $x$ using $O(n^{C_q})$ traces, which will allow us to recover $x$ if $x$ is regular.

\begin{proof}[Proof of Lemma \ref{Contiguous}]

For a circular string $x$, let $S_x=\{x_ix_{i+1}\cdots x_{i+k-1}\}_{i=1}^n$ be the \emph{$k$-deck} of $x$, which is the multiset of contiguous substrings of $x$ of length $k$. Let $S$ and $T$ be the $k$-decks of some circular strings of length $n$ such that $S\ne T$. Suppose that given $O(n^{C_q})$ traces of a circular string, we are able to distinguish between whether its $k$-deck is $S$ or $T$ correctly with failure probability $10^{-n}$. Then, by union bounding over all possible pairs of distinct $k$-decks (note that there are at most $2^n$ different $k$-decks), we can with high probability correctly determine the $k$-deck of the string, showing the lemma.

The key property we will use is that given two distinct $k$-decks $S$ and $T$ that come from circular strings $x$ and $y$, there exists some string $s$ of length $k$ such that the number of consecutive occurrences of $s$ in $x$ and in $y$ are different. We will show the existence of $C_q$ so that for any string $s$ of length $k$ satisfying this property, we can distinguish between $x$ and $y$ correctly using $O(n^{C_q})$ samples with failure probability $10^{-n}$, from which the result follows.

Let $\alpha$ denote a sufficiently large positive integer only depending on $q$ that we will determine later.
For $0\le i\le n-k$, let $c_i$ denote the number of (not necessarily consecutive) occurrences of $s$ in $x$ contained in a consecutive substring of $x$ of length at most $i+k$. Similarly, let $d_i$ denote the number of (not necessarily consecutive) occurrences of $s$ in $y$ contained in a consecutive substring of $y$ of length at most $i+k$. By assumption, we have that $c_0 \neq d_0$. By casework on the last bit of the occurrence of $s$, we have that $c_i,d_i\le n\binom{i+k}k$. Let $P(t)=\sum_{i=0}^{\alpha k}c_it^i$ and $Q(t)=\sum_{i=0}^{\alpha k}d_it^i$. Moreover, the following is true:

\begin{lemma}\label{ProbStartWa}
The probability that a trace of $x$ starts with $s$ (where a random bit in the string is chosen as the beginning before bits are deleted) is $\frac1n(1-q)^kP(q)+O(q^{\alpha k}(\alpha+1)^ke^k)$. Similarly, the probability that a trace of $y$ starts with $s$ is $\frac1n(1-q)^kQ(q)+O(q^{\alpha k}(\alpha+1)^ke^k)$.
\end{lemma}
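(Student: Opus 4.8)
The plan is to compute the probability that a trace of $x$ begins with $s$ by summing over all ways the trace could have arisen, and then to show that this sum is well-approximated by $\frac1n(1-q)^kP(q)$ with the stated error. First I would set up the event explicitly: a trace of $x$ begins with $s$ precisely when (a) the random starting bit is chosen to be $x_j$ for some $j$, (b) among $x_j, x_{j+1}, \dots$ the first $k$ retained bits, read in order, spell out $s$, and (c) — this is automatic — the remaining bits are whatever they are. By the $1/n$ factor from the uniform choice of starting bit and linearity, the probability equals $\frac1n \sum_{j} \Pr[\text{first } k \text{ retained bits starting from position } j \text{ equal } s]$.

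Next I would analyze the inner probability. Fixing the start position $j$, the event that the first $k$ retained bits spell $s$ is a disjoint union over the positions $j \le j_1 < j_2 < \cdots < j_k$ of those retained bits: we need $x_{j_1}\cdots x_{j_k} = s$, each of $x_{j_1},\dots,x_{j_k}$ retained (probability $(1-q)^k$), and every $x_\ell$ for $j \le \ell < j_k$ with $\ell \notin \{j_1,\dots,j_k\}$ deleted. If $j_k = j + i + (k-1)$ is the position where the $k$th matched bit sits, i.e. the occurrence of $s$ spans a window of length $i+k$ counting from the start, then exactly $i$ bits strictly inside that window must be deleted, contributing $q^i$. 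Summing the retention/deletion weights, the total probability that the trace starts with $s$ becomes $\frac1n (1-q)^k \sum_{i \ge 0} (\#\{\text{occurrences of } s \text{ in } x \text{ spanning a window of length exactly } i+k, \text{ over all start positions}\}) \, q^i$. Re-indexing by the length-$(i+k)$-or-less quantity $c_i$ (noting $c_i$ is the cumulative count, so the number spanning exactly $i+k$ is $c_i - c_{i-1}$, and Abel summation converts $\sum (c_i - c_{i-1}) q^i$ back to $\sum c_i (q^i - q^{i+1})$; alternatively just define everything in terms of $c_i$ directly and match the intended $\sum c_i q^i$ form) gives the main term $\frac1n(1-q)^k P(q)$ once the sum is truncated at degree $\alpha k$.

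Finally I would bound the truncation error. The tail is $\frac1n(1-q)^k \sum_{i > \alpha k} c_i q^i$ (up to the harmless $(1-q)^k \le 1$ and $1/n \le 1$ factors). Using the bound $c_i \le n\binom{i+k}{k}$ from the casework on the last bit of the occurrence, and $\binom{i+k}{k} \le \left(\frac{e(i+k)}{k}\right)^k$, I would estimate $\sum_{i>\alpha k} c_i q^i \le n \sum_{i > \alpha k}\left(\tfrac{e(i+k)}{k}\right)^k q^i$; pulling out the worst case and summing the geometric-type tail gives a bound of the form $O\big(q^{\alpha k}(\alpha+1)^k e^k\big)$ after absorbing the $n$ into the polynomial-in-$n$ accounting (since $k = 100\log n$, a factor of $n$ is $e^{O(k)}$ and can be folded into the $e^k$ / constant). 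The identical computation with $y$ and $d_i$ in place of $x$ and $c_i$ gives the second statement. The main obstacle is bookkeeping the combinatorial identification correctly — making sure the "window of length $i+k$" is matched to the right power $q^i$ and that $c_i$ (a cumulative count over windows of length \emph{at most} $i+k$) slots into $P(q) = \sum c_i q^i$ rather than some Abel-transformed variant — but this is routine once the disjoint decomposition over retained-bit positions is written down carefully; the tail bound is then a straightforward binomial estimate.
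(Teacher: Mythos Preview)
Your approach matches the paper's --- decompose by the number $i$ of bits deleted between the starting position and the last retained bit of $s$, then bound the tail geometrically. The one place you stumble is the identification with $c_i$: your count of configurations (start $j$, occurrence $(j_1,\dots,j_k)$) with exactly $i$ deletions is \emph{equal to} $c_i$, not to $c_i - c_{i-1}$. The reason is a simple bijection: once $i$ is fixed, each occurrence $(j_1,\dots,j_k)$ determines a unique candidate start $j = j_k - (i+k) + 1$, and this $j$ satisfies $j \le j_1$ iff the span $j_k - j_1 + 1$ is at most $i+k$, which is precisely the paper's ``at most $i+k$'' condition defining $c_i$. So the cumulative-looking definition of $c_i$ is exactly what makes the count of exactly-$i$-deletion configurations come out to $c_i$; no Abel summation is needed, and the exact formula $\frac{1}{n}(1-q)^k\sum_{i=0}^{n-k}c_iq^i$ follows directly. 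Your Abel route, if pushed through with the paper's $c_i$, would introduce a spurious extra factor of $(1-q)$.

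For the tail, note that the $1/n$ prefactor and the $n$ in $c_i \le n\binom{i+k}{k}$ cancel, so there is no need to absorb $n$ into the $e^k$; otherwise your geometric tail estimate is the same as the paper's ratio argument $\binom{i+1+k}{k}q^{i+1}\big/\binom{i+k}{k}q^i \le q(\alpha+1)/\alpha < 1$.
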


\begin{proof}
To compute the probability that a trace of $x$ starts with $s$, we do casework on how many bits are deleted before the last bit in the occurrence of $s$. If $i$ bits are deleted, then note that there are $c_i$ ways for it to be done by definition. Each such way has a probability of $\frac1n(1-q)^kq^i$ to occur. Indeed, for each way there is a $\frac1n$ probability that the correct starting bit is chosen, and the probability that only the bits corresponding to the specific instance of $s$ are kept is $(1-q)^kq^i$. It follows that the probability is exactly $\frac1n(1-q)^k\sum_{i=0}^{n-k}c_iq^i$.

It remains to show that $\frac1n(1-q)^k\sum_{\alpha k+1}^{n-k}c_iq^i=O(q^{\alpha k}(\alpha+1)^ke^k)$. As mentioned before, we have that $c_i\le n\binom{i+k}k$. Thus, this sum is at most $\sum_{i>\alpha k}\binom{i+k}kq^i\le\binom{\alpha k+k}kq^{\alpha k}\sum_{i\ge0}\left(\frac{q(\alpha+1)}\alpha\right)^i$. Indeed, the ratio of consecutive terms in the sequence $\binom{i+k}kq^i$ is equal to $q\frac{i+k}i\le\frac{q(\alpha+1)}\alpha$. For a sufficiently large choice of $\alpha$, $\frac{q(\alpha+1)}\alpha<1$, so $\sum_{i>\alpha k}\binom{i+k}kq^i=O(\binom{\alpha k+k}kq^{\alpha k})=O(q^{\alpha k}(\alpha+1)^ke^k)$ by Stirling's approximation.

The argument for $y$ is analogous.
\end{proof}

Lemma \ref{ProbStartWa} allows us to estimate $P(q)$ and $Q(q)$ up to an $O(n(1-q)^{-k}q^{\alpha k}(\alpha+1)^ke^k)$ error by looking at how often traces of $x$ or $y$ begin with $s$, and then dividing by $\frac{1}{n}(1-q)^k$. So long as $P(q)$ and $Q(q)$ are sufficiently far apart, a Chernoff bound allows us to determine with high probability if the traces came from $x$ or $y$. However, it may be the case that $P(q)$ and $Q(q)$ are quite close. To remedy this, we observe that it is possible to \emph{simulate higher deletion probabilities $q'>q$}. Indeed, this can be achieved by deleting each bit in traces received independently with probability $\frac{q'-q}{1-q}$. Thus, it suffices to find $q'\in[q,r]$ with $P(q')$ and $Q(q')$ far apart for some $q<r<1$. 
The existence of such a $q'$ is proven by the following Littlewood-type result of Borwein, Erdélyi, and Kós.

\begin{theorem}[\cite{BorweinEK99}, Theorem 5.1]\label{BEKLittlewood}
There exist absolute constants $c_1>0$ and $c_2>0$ such that if $f$ is a polynomial with coefficients in $[-1,1]$ and $a\in(0,1]$, then \[|f(0)|^{c_1/a}\le\exp\left(\frac{c_2}a\right)\sup_{z\in[1-a,1]}|f(z)|.\]
\end{theorem}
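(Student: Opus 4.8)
The plan is to deduce this inequality from a two-constants-theorem (harmonic measure) estimate, using the hypothesis on the coefficients in only one elementary way. First I would dispose of trivial cases. Since $|f(0)|=|a_0|\le 1$, if $\sup_{[1-a,1]}|f|\ge 1$ the inequality is immediate; and $a=1$ is immediate because then $0\in[0,1]=[1-a,1]$, so $|f(0)|\le\sup_{[0,1]}|f|=:M$ and $|f(0)|^{c_1}\le M^{c_1}\le M$ once $c_1\ge 1$ and $M<1$. Also $f\equiv 0$ or $f(0)=0$ makes the left side $0$. So assume $a\in(0,1)$, $f(0)\neq 0$, and $M:=\sup_{z\in[1-a,1]}|f(z)|<1$ (note $M>0$, since a nonzero polynomial cannot vanish on a whole interval). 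The one place I would use the coefficient bound is the estimate $|f(z)|\le\sum_{j\ge 0}|z|^{j}=(1-|z|)^{-1}$ for $|z|<1$; this is what makes the argument work for polynomials of \emph{any} degree.

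Next, set $\rho:=1-\tfrac a2$ and consider the slit domain $\Omega:=\{z\in\BC:|z|<\rho\}\setminus E$, where $E:=\bigl[\,1-a,\;1-\tfrac{3a}{5}\,\bigr]$ is a compact segment lying strictly inside $\{|z|<\rho\}$ (its distance to the circle $|z|=\rho$ equals $\tfrac a{10}>0$) and avoiding $0$. On $\Omega$ the function $\log|f|$ is subharmonic, with $\log|f|\le\log\tfrac1{1-\rho}=\log\tfrac2a$ on the circle $|z|=\rho$ and $\log|f|\le\log M$ on both sides of the slit $E$. Letting $\mu$ denote the harmonic measure of $E$ in $\Omega$ evaluated at $0$, the two-constants theorem gives $\log|f(0)|\le\mu\log M+(1-\mu)\log\tfrac2a$, which I would rearrange (using $\log|f(0)|\le 0$ and $\log M<0<\log\tfrac2a$) into
\[\log\tfrac1M\;\le\;\tfrac1\mu\Bigl(\log\tfrac2a+\log\tfrac1{|f(0)|}\Bigr).\]

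To finish I need a lower bound $\mu\ge c_3/\log(2/a)$ for an absolute constant $c_3>0$. Granting this, $\tfrac1\mu\le c_3^{-1}\log\tfrac2a$, so $\log\tfrac1M\le c_3^{-1}\log^2\tfrac2a+c_3^{-1}\log\tfrac2a\cdot\log\tfrac1{|f(0)|}$; since $\sup_{a\in(0,1]}a\log^2(2/a)$ and $\sup_{a\in(0,1]}a\log(2/a)$ are finite, there are absolute constants $c_1,c_2$ (with $c_1\ge 1$) such that $c_3^{-1}\log^2\tfrac2a\le\tfrac{c_2}{a}$ and $c_3^{-1}\log\tfrac2a\le\tfrac{c_1}{a}$ for all $a\in(0,1]$, giving $\log\tfrac1M\le\tfrac{c_2}{a}+\tfrac{c_1}{a}\log\tfrac1{|f(0)|}$; exponentiating this is exactly $|f(0)|^{c_1/a}\le e^{c_2/a}M$. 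The harmonic-measure bound I would establish by an explicit conformal map: apply the affine map sending $E$ onto $[-1,1]$ (this sends $0$ to the \emph{center} of the image disk, which then has radius $\Theta(1/a)$, with $[-1,1]$ sitting at distance $\Theta(1)$ from its boundary), then apply the Joukowski map $w\mapsto w+\sqrt{w^2-1}$ (the branch carrying the complement of $[-1,1]$ to $\{|u|>1\}$) to unfold the slit to the unit circle. The resulting domain is caught between $\{|u|=1\}$ and a curve of modulus between $\Theta(1)$ and $\Theta(1/a)$, while the image of the evaluation point $0$ has modulus $\Theta(1/a)$; comparing with the model annulus $\{1<|u|<\Theta(1/a)\}$, where the harmonic measure of the inner circle from a point of modulus $\Theta(1/a)$ is $\tfrac{\Theta(1)}{\log(1/a)}$, and using monotonicity of harmonic measure under domain inclusion, gives $\mu=\Theta\bigl(1/\log(1/a)\bigr)$. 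Alternatively, one can first apply Beurling's projection theorem to replace $E$ by its circular projection onto a ray through $0$, reducing to a symmetric radial-slit disk for which this estimate is classical.

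The main obstacle is the \emph{sharp} form of this harmonic-measure estimate. A crude comparison with a small concentric sub-disk around $E$ only yields $\mu\gtrsim a$, which would prove the weaker statement $|f(0)|^{c_1/a}\le\exp\bigl(O(a^{-1}\log(1/a))\bigr)M$ — off by a $\log(1/a)$ factor in the exponent. Obtaining the genuine $\Theta(1/\log(1/a))$ requires keeping the ambient disk of size $\Theta(1)$ throughout and analyzing the conformal geometry of the slit carefully; this is where the real content lies, and the argument above recovers Theorem 5.1 of Borwein, Erdélyi, and Kós.
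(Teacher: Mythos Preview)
The paper does not prove this statement at all; it is simply quoted from \cite{BorweinEK99} and used as a black box in the proof of Theorem~\ref{AverageCasePoly}. So there is no ``paper's proof'' to compare against. That said, your proposal contains a genuine gap that prevents it from establishing the stated inequality.

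The harmonic-measure lower bound $\mu\ge c_3/\log(2/a)$ is false for the domain you construct. After your affine map, the slit $E$ becomes $[-1,1]$, the evaluation point $0$ sits at the \emph{center} of a disk of radius $\Theta(1/a)$, and $[-1,1]$ lies within $\Theta(1)$ of that disk's boundary. Equivalently, apply a M\"obius automorphism of $D_\rho$ carrying the midpoint of $E$ to the origin: then $E$ becomes a set of diameter $\Theta(1)$ centered at $0$, while the original point $0$ is carried to a point at distance $\Theta(a)$ from the unit circle. The harmonic measure of a fixed interior compact set, evaluated at a point within $\varepsilon$ of a smooth boundary arc on which it vanishes, is $\Theta(\varepsilon)$ by the boundary Harnack principle. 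Hence $\mu=\Theta(a)$, not $\Theta(1/\log(1/a))$. Your own annulus comparison does not repair this: monotonicity of harmonic measure says that enlarging the domain \emph{increases} the harmonic measure of the inner boundary, so containing your domain in $\{1<|u|<\Theta(1/a)\}$ yields only an \emph{upper} bound on $\mu$; any annulus $\{1<|u|<R\}$ actually contained in your domain has $R=\Theta(1)$ and excludes the evaluation point. Beurling projection does not help either, since $E$ already lies on a radius.

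With the correct estimate $\mu=\Theta(a)$, the two-constants argument yields precisely the weaker bound $|f(0)|^{c_1/a}\le\exp\bigl(O(a^{-1}\log(1/a))\bigr)\sup_{[1-a,1]}|f|$ that you flag in your final paragraph, and no choice of the parameters $\rho$ or $E$ within this scheme removes the logarithmic loss: the only input from the coefficient hypothesis is $|f(z)|\le(1-|z|)^{-1}$, which forces the evaluation disk to stay $\Theta(a)$ away from $[1-a,1]$, and that geometric constraint pins $\mu$ at $\Theta(a)$. The proof in \cite{BorweinEK99} proceeds by a different route, via explicit Chebyshev-type constructions, rather than a two-constants/harmonic-measure estimate.
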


Let $r=\frac{q+1}2$. We first apply Theorem \ref{BEKLittlewood} to $\binom{\alpha k+k}k^{-1}(P(rx)-Q(rx))$ and $a=1-q/r$. Here, we are using the fact that the coefficients of $P$ and $Q$ are bounded in magnitude by $\binom{\alpha k+k}k$ by previous observations, and that $|P(0)-Q(0)| \ge 1$. Theorem \ref{BEKLittlewood} tells us that 
\begin{align*}
    \binom{\alpha k+k}k^{-c_1/a} &\le\exp\left(\frac{c_2}a\right)\binom{\alpha k+k}k^{-1}\sup_{z\in[1-a,1]}|P(rz)-Q(rz)| \\
    &=\exp\left(\frac{c_2}a\right)\binom{\alpha k+k}k^{-1}\sup_{q'\in[q,r]}|P(q')-Q(q')|,
\end{align*} or \[\sup_{q'\in[q,r]}|P(q')-Q(q')|\ge c_3\binom{\alpha k+k}k^{-c_4}\] for some constants $c_3$ and $c_4$ that only depend on $q$.

In particular, this is much larger than $10^kn(1-r)^{-k}r^{\alpha k}(\alpha+1)^ke^k$ for sufficiently large values of $\alpha$ ($\alpha$ may depend on $q$). Indeed, after taking $k$th roots and using Stirling's approximation this reduces to showing that $(e(\alpha+1))^{-c_5}>10 n^{1/k} (1-r)^{-1} r^\alpha(\alpha+1)e$ for sufficiently large $\alpha$ where $c_5$ is some constant that only depends on $q$, which is clear (since $0 < r < 1$ is fixed and $n^{1/k} < 2$). Thus, for any $q' \in [q, r],$ the error term $\frac1n(1-q')^k\sum_{\alpha k+1}^{n-k}c_i(q')^i=O((q')^{\alpha k}(\alpha+1)^ke^k)$ is at most $10^{-k}$ times $\frac{1}{n} (1-q')^{k} \cdot \sup_{q' \in [q, r]} |P(q')-Q(q')|.$

Hence, for some $q'\in[q,r]$, the probability that a trace begins with $s$ under bit deletion with probability $q'$ differs between $x$ and $y$ by $\Omega(10^kn(1-r)^{-k}r^{\alpha k}(\alpha+1)^ke^k)=\Omega(n^{-c_6})$ for some constant $c_6$ that only depends on $q$. By a standard Chernoff bound, for some constant $C_q$ only depending on $q$, we can distinguish between $x$ and $y$ using $O(n^{C_q})$ traces with failure probability at most $\exp(-\Omega(n))$, so Lemma \ref{Contiguous} follows.
\end{proof}

By the previous discussions, Theorem \ref{AverageCaseThm} is also proven.

\medskip

As mentioned before, Chen et. al. independently proved the analogue of Lemma \ref{Contiguous} for linear strings in \cite[Theorem 2]{ChenDLSS20}. The ideas behind their result and ours are similar: both are proved by considering a polynomial encoding the $k$-deck and using complex analysis to bound the corresponding polynomials for different $k$-decks away from each other on $[q,1]$. Here, we directly apply the Littlewood-type result from \cite{BorweinEK99}, while Chen et. al. prove their own result for this bound. This idea of reconstructing the $k$-deck of the unknown string may be useful in other variants of trace reconstruction, though the sample complexity it achieves is only polynomial and in order to apply it, we must be able to reconstruct the string from its $k$-deck. For worst case trace reconstruction, we must be able to reconstruct strings not uniquely determined by their $k$-decks, and for average case trace reconstruction, a subpolynomial sample complexity has already been achieved for constant deletion probabilities. Thus, these ideas are not directly applicable to worst case and average case trace reconstruction, though they may be helpful for variants in which it suffices to construct $k$-decks and a polynomial sample complexity is not known, such as the ones considered in this paper and in \cite{ChenDLSS20}.

One difference between our result and that of Chen et. al. is that while we only addressed the sample complexity, they also give a polynomial time algorithm for reconstructing the string via a linear program. Their approach can be modified to give a polynomial time algorithm for average case circular trace reconstruction as well. For more details on their algorithm, see \cite[Section 6]{ChenDLSS20}. Moreover, while we do not address the smoothed complexity model as in Chen et. al., our proof of Theorem \ref{AverageCaseThm} easily generalizes to a polynomial sample (or time) algorithm for circular trace reconstruction in the smoothed complexity model. This is because one can show that a circular string drawn from Chen et. al.'s smoothed model is regular with very high probability, in a similar way to how we showed an average string is regular. For more details, see \cite[Section 3]{ChenDLSS20}.

\section{Worst Case: Lower Bound}
\label{LowerBound}
In this section, we prove Theorem \ref{LowerThm} and demonstrate that worst-case circular trace reconstruction requires $\tilde\Omega(n^3)$ traces. We first record the following lemma from \cite{HoldenL18} expressing the number of independent samples required to distinguish between two probability measures $\mu$ and $\nu$ in terms of their Hellinger distance $d_H(\mu,\nu)$, defined to be $\left(\sum_{x\in X}\left(\sqrt{\mu(\{x\})}-\sqrt{\nu(\{x\})}\right)^2\right)^{1/2}$ where the sum is over all events in some discrete sample space $X$. Let $d_{TV}(\mu,\nu)$ denote the total variation distance between $\mu$ and $\nu$ and $\mu^n$ denote the law of $n$ independent samples from $\mu$.

\begin{lemma}[\cite{HoldenL18}, Lemma A.5] \label{HLLemma}
If $\mu$ and $\nu$ are probability measures satisfying $d_H(\mu,\nu)\le1/2$, then 
$1-d_{TV}(\mu^m,\nu^m) \ge \varepsilon$ if $m\le\frac{\log(1/\varepsilon)}{9d_H^2(\mu,\nu)}$.
\end{lemma}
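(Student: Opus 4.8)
The plan is to route everything through the Hellinger affinity (Bhattacharyya coefficient) $\rho(\mu,\nu):=\sum_{x}\sqrt{\mu(x)\nu(x)}$, which simultaneously linearizes the tensorization of the Hellinger distance and its comparison with total variation. First I would record two elementary facts: $d_H^2(\mu,\nu)=2\bigl(1-\rho(\mu,\nu)\bigr)$, which is immediate from expanding $\sum_x\bigl(\sqrt{\mu(x)}-\sqrt{\nu(x)}\bigr)^2$; and $\rho(\mu^m,\nu^m)=\rho(\mu,\nu)^m$, which follows by Fubini since $\sqrt{\mu^m(\vec x)\,\nu^m(\vec x)}$ factors over the coordinates of $\vec x$. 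In particular, the hypothesis $d_H(\mu,\nu)\le 1/2$ gives $\rho:=\rho(\mu,\nu)=1-\tfrac12 d_H^2(\mu,\nu)\in[7/8,1]$.

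Next I would prove the Le Cam-type inequality $1-d_{TV}(\lambda,\lambda')\ge\tfrac12\rho(\lambda,\lambda')^2$ for arbitrary discrete $\lambda,\lambda'$. Using the identity $1-d_{TV}(\lambda,\lambda')=\sum_x\min\bigl(\lambda(x),\lambda'(x)\bigr)$ and applying Cauchy--Schwarz to $\rho(\lambda,\lambda')=\sum_x\sqrt{\min(\lambda(x),\lambda'(x))}\cdot\sqrt{\max(\lambda(x),\lambda'(x))}$, one gets $\rho(\lambda,\lambda')^2\le\bigl(\sum_x\min\bigr)\bigl(\sum_x\max\bigr)=\bigl(1-d_{TV}(\lambda,\lambda')\bigr)\bigl(1+d_{TV}(\lambda,\lambda')\bigr)\le 2\bigl(1-d_{TV}(\lambda,\lambda')\bigr)$. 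Specializing to $\lambda=\mu^m$, $\lambda'=\nu^m$ and invoking multiplicativity of $\rho$ reduces the entire lemma to showing $\tfrac12\rho^{2m}\ge\epsilon$.

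To lower bound $\rho^{2m}$ I would use $\ln(1-u)\ge -u/(1-u)$ (valid for $u\in[0,1)$, since the difference vanishes at $0$ and has derivative $u/(1-u)^2\ge 0$) with $u=\tfrac12 d_H^2(\mu,\nu)\le 1/8$, which yields $\ln\rho\ge -\tfrac47 d_H^2(\mu,\nu)$ and hence $\rho^{2m}\ge\exp\!\bigl(-\tfrac{8}{7}\,m\,d_H^2(\mu,\nu)\bigr)$. Now the two hypotheses on $m$ come into play. Since some $m$ satisfies $\tfrac1{4 d_H^2(\mu,\nu)}\le m\le\tfrac{\log(1/\epsilon)}{9 d_H^2(\mu,\nu)}$ (otherwise the statement is vacuous), we get $\tfrac14\le\tfrac19\log(1/\epsilon)$, i.e.\ $\epsilon\le e^{-9/4}$, and the upper bound gives $m\,d_H^2(\mu,\nu)\le\tfrac19\log(1/\epsilon)$. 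Substituting, $\rho^{2m}\ge\exp\!\bigl(-\tfrac{8}{63}\log(1/\epsilon)\bigr)=\epsilon^{8/63}$ (with $\log$ the natural logarithm), so $1-d_{TV}(\mu^m,\nu^m)\ge\tfrac12\epsilon^{8/63}$. Finally, since $\epsilon\le e^{-9/4}<2^{-63/55}$, the inequality $\tfrac12\epsilon^{8/63}\ge\epsilon$ (equivalently $\epsilon^{55/63}\le\tfrac12$) holds, which is exactly the desired conclusion.

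I expect the one delicate point is the constant-chasing in the last paragraph: the seemingly harmless lower bound $m\ge 1/(4 d_H^2(\mu,\nu))$ is precisely what confines $\epsilon$ to the range $(0,e^{-9/4}]$ on which the clean constant $9$ suffices---without it the final step $\tfrac12\epsilon^{8/63}\ge\epsilon$ would fail for $\epsilon$ close to $1$. Everything else is a short assembly of standard properties of the Hellinger affinity, and the quantitative slack between $\tfrac12\epsilon^{8/63}$ and $\epsilon$ is comfortable, so the intermediate constants are not fragile.
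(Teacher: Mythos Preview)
The paper does not give its own proof of this lemma: it is quoted verbatim as \cite[Lemma A.5]{HoldenL18} and used as a black box in Section~\ref{LowerBound}. So there is no ``paper's proof'' to compare against.

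Your argument is correct. The route through the Hellinger affinity $\rho(\mu,\nu)=\sum_x\sqrt{\mu(x)\nu(x)}$ is exactly the standard one: the identity $d_H^2=2(1-\rho)$, the multiplicativity $\rho(\mu^m,\nu^m)=\rho(\mu,\nu)^m$, and the Le Cam comparison $1-d_{TV}\ge\tfrac12\rho^2$ obtained from Cauchy--Schwarz on $\sqrt{\min}\cdot\sqrt{\max}$ are all verified as you describe. The logarithmic bound $\ln(1-u)\ge -u/(1-u)$ with $u=\tfrac12 d_H^2\le\tfrac18$ gives $\ln\rho\ge -\tfrac47 d_H^2$, and the remaining constant-chasing is clean: from the compatibility of the two hypotheses on $m$ you extract $\epsilon\le e^{-9/4}$, and then $\tfrac12\epsilon^{8/63}\ge\epsilon$ since $\epsilon^{55/63}\le e^{-(9/4)(55/63)}=e^{-55/28}<\tfrac12$. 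Your observation that the lower bound $m\ge 1/(4d_H^2)$ is needed precisely to force $\epsilon$ small enough for the final inequality is correct and worth noting; without it the conclusion would indeed fail for $\epsilon$ near $1$. The argument is also robust to the base of the logarithm in the lemma statement, with comfortable slack in the constants.
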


Given $m$ traces, we cannot determine if they came from $\mu^m$ or $\nu^m$ with probability higher than $\frac{1}{2} \left(1 + d_{TV}(\mu^m,\nu^m)\right)$. Thus, it requires $\Omega(d_H^{-2}(\mu,\nu))$ samples to distinguish between two probability measures $\mu$ and $\nu$ with probability greater than $\frac{3}{4}$.

\begin{proof}[Proof of Theorem \ref{LowerThm}]
We specialize to the case of distinguishing between $x=10^n10^{n+1}10^{n+k}$ and $y=10^n10^{n+k}10^{n+1}$ from independent traces. Let $\mu$ and $\nu$ respectively denote the laws of traces from $x$ and $y$. We will show that $d_H^2(\mu,\nu)=O((\log n/n)^{3})$, which establishes the result by Lemma \ref{HLLemma}.

First, we note that conditional on the first $1$ in $x$ being deleted, the resulting trace is equidistributed as a trace from $y$ conditioned on the second $1$ being deleted, as in both cases we obtain a trace from the circular string $10^{n+1}10^{2n+k}$. Similar arguments for other cases show that conditioned on any $1$ being deleted, traces from $x$ and $y$ are equal in law. Thus, the resulting string must have three $1$'s to contribute to the Hellinger distance. We will henceforth assume that the resulting trace is of the form $10^a10^b10^c$ for some nonnegative integers $a,b,c$.

We now compute the ratio $\frac{\mu(\{10^a10^b10^c\})}{\nu(\{10^a10^b10^c\})}$ and show that it is typically $1+O((\log n/n)^{3/2})$. We have that \[\frac{\mu(\{10^a10^b10^c\})}{q^{3n+k+1-a-b-c}(1-q)^{a+b+c}}=\binom na\binom{n+1}b\binom{n+k}c+\binom nb\binom{n+1}c\binom{n+k}a+\binom nc\binom{n+1}a\binom{n+k}b,\]
\[\frac{\nu(\{10^a10^b10^c\})}{q^{3n+k+1-a-b-c}(1-q)^{a+b+c}}=\binom na\binom{n+k}b\binom{n+1}c+\binom nb\binom{n+k}c\binom{n+1}a+\binom nc\binom{n+k}a\binom{n+1}b.\]
It follows that \[\frac{\mu(\{10^a10^b10^c\})}{\nu(\{10^a10^b10^c\})}=\frac{\frac1{(n+1-b)(n+1-c)\cdots(n+k-c)}+\frac1{(n+1-c)(n+1-a)\cdots(n+k-a)}+\frac1{(n+1-a)(n+1-b)\cdots(n+k-b)}}{\frac1{(n+1-c)(n+1-b)\cdots(n+k-b)}+\frac1{(n+1-a)(n+1-c)\cdots(n+k-c)}+\frac1{(n+1-b)(n+1-a)\cdots(n+k-a)}}.\]
Multiplying the numerator and denominator by $\prod_{i=1}^k(n+i-a)(n+i-b)(n+i-c)$ results in \[S_1=\prod_{i=1}^k(n+i-a)\prod_{i=2}^k(n+i-b)+\prod_{i=1}^k(n+i-b)\prod_{i=2}^k(n+i-c)+\prod_{i=1}^k(n+i-c)\prod_{i=2}^k(n+i-a)\] and \[S_2=\prod_{i=1}^k(n+i-b)\prod_{i=2}^k(n+i-a)+\prod_{i=1}^k(n+i-c)\prod_{i=2}^k(n+i-b)+\prod_{i=1}^k(n+i-a)\prod_{i=2}^k(n+i-c),\] respectively. We have that $S_1-S_2=(a-b)\prod_{i=2}^k(n+i-a)(n+i-b)+(b-c)\prod_{i=2}^k(n+i-b)(n+i-c)+(c-a)\prod_{i=2}^k(n+i-c)(n+i-a)$. This is an alternating polynomial in $a,b,c$, i.e., applying a permutation $\sigma$ to $a,b,c$ changes the sign of the polynomial by the sign of $\sigma$. Hence, it can be written in the form $(a-b)(b-c)(a-c)P_k(n,a,b,c)$, where $P_k$ is a polynomial in $n,a,b,c$ of degree $2k-4$ since $S_1$ and $S_2$ have degree $2k-1$.

By a standard Chernoff bound, there exists a constant $C$ such that with probability at least $1-n^{-100}$, $a,b,c\in[np-C\sqrt{n\log n},np+C\sqrt{n\log n}]$. When this occurs, we have that $S_2=\Omega(n^{2k-1})$ and $|S_1-S_2|=O((n\log n)^{3/2}n^{2k-4})$, so $\frac{\mu(\{10^a10^b10^c\})}{\nu(\{10^a10^b10^c\})}\in[1-(c\log n/n)^{3/2},1+(c\log n/n)^{3/2}]$ for some constant $c$. We thus have that \[d_H^2(\mu,\nu)=\sum_{a,b,c\ge0}\left(\sqrt{\mu(\{10^a10^b10^c\})}-\sqrt{\nu(\{10^a10^b10^c\})}\right)^2\]\[\le 2n^{-100}+\sum_{a,b,c\in[np-C\sqrt{n\log n},np+C\sqrt{n\log n}]}\nu(\{10^a10^b10^c\})\left(1-\sqrt{\frac{\mu(\{10^a10^b10^c\})}{\nu(\{10^a10^b10^c\})}}\right)^2=O((\log n/n)^3).\]

It follows by Lemma \ref{HLLemma} that it requires $\Omega(n^3/\log^3n)$ samples to distinguish between traces from $x$ and $y$, as desired.
\end{proof}

\section{Conclusion and Future Work} \label{Conclusion}

We note that our work leaves several open problems, including the following:
\begin{enumerate}
    \item As noted in the introduction, Chase \cite{Chase20} very recently improved the worst-case linear trace reconstruction bound to $\exp\left(\tilde{O}(n^{1/5})\right)$. Is it possible to get a matching circular trace reconstruction bound, even just for certain lengths of strings?
    \item For worst-case strings, can one get an upper bound of $\exp\left(\tilde{O}(n^{1/3})\right)$ or even a subexponential bound for $n$ with an arbitrary prime factorization?
    \item Can one get a subpolynomial (i.e., $n^{o(1)}$) upper bound for the average case?
    \item Can one improve our current lower bound for worst-case strings, perhaps even to $n^{\omega(1)}$?
    \item All of the work we have done in this paper has primarily focused on traces with constant deletion probability $q$. However, if $q = o(1)$, the implied results are no better than the bounds we get for fixed $0 < q < 1.$ Can better bounds be obtained for circular trace reconstruction with small deletion probability (for instance, $q = 1/(\log n)^2$, or even $q = n^{-2/3}$)? In the linear case, there exist much better trace reconstruction algorithms in the low deletion probability regime (e.g., \cite{BatuKKM04, KannanM05}), so perhaps these results can be extended to circular strings. 
\end{enumerate}

For answering open problem 2, one method we attempted for getting a $\exp\left(\tilde{O}(n^{1/2})\right)$ upper bound was to look at the polynomial $P(y) P(z) P(y^{-1} z^{-1})$ for cyclotomic $n$th roots of unity $y, z$. One can establish a ``Bivariate Littlewood''-type result and the same argument as ours to show the following. Suppose that for any $a, b \in \{0, 1\}^n$, $P(y; a) P(z; a) P(y^{-1} z^{-1}; a) = P(y; b) P(z; b) P(y^{-1} z^{-1}; b)$ for all cyclotomic $n$th roots of unity $y, z$ implies that $a, b$ are equivalent up to cyclic rotation. Then, one can solve circular trace reconstruction using $\exp\left(\tilde{O}(n^{1/2})\right)$ traces. This result may in fact look obvious, as if $P(\omega_n; a) = \alpha \cdot P(\omega_n; b),$ one should expect via a simple induction argument that $P(\omega_n^k; a) = \alpha^k \cdot P(\omega_n^k; b)$ which implies that $\alpha = e^{2 \pi i r/n}$ for some $r$ (by looking at $k = n$). Thus, by rotating $a$ by $r$ elements, we will get that $P(\omega_n^k; a) = P(\omega_n^k; b)$ for all $k$, which implies $a = b$. Unfortunately, one can have cases where $P(z; a) = P(z; b) = 0$ for several choices of $z = e^{2 \pi i k/n}$, which can cause this induction argument to fail. Indeed, we believe that if such a result were true, proving it would again be a challenging number theoretic task.

We already noted some reasons in the introduction for why modifying the results of \cite{PeresZ17, HoldenPP18} to answer open problem 3 is difficult. The main reason was that one cannot efficiently find the ``start'' of the string. 

Finally, we note that a potential way of answering open problem 4 is via strings based on \cite{HoldenL18, Chase19}. One cannot use their strings directly, as their strings are equivalent up to a cyclic rotation, but perhaps an appropriate modification may improve upon our $\tilde{\Omega}(n^3)$ lower bound.

\section*{Acknowledgments}

The first author thanks Professor Piotr Indyk for many helpful discussions and feedback, Mehtaab Sawhney for pointers to some references, and Professor Bjorn Poonen for a helpful discussion on sums of roots of unity. The second author thanks Professor Joe Gallian for running the Duluth REU at which part of this research was conducted, as well as program advisors Amanda Burcroff, Colin Defant, and Yelena Mandelshtam for providing a supportive environment. The authors also thank Amanda Burcroff for helpful edits on the paper's writeup.

\newcommand{\etalchar}[1]{$^{#1}$}

\appendix

\section{Omitted Proofs} \label{Omitted}

\subsection{Proof of Proposition \ref{CircularHarderThanLinear}}

Here, we prove Proposition \ref{CircularHarderThanLinear}, which shows that circular trace reconstruction is at least as hard as linear trace reconstruction in both the worst-case and average case models for any choice of $q$.

\begin{proof}[Proof of Proposition \ref{CircularHarderThanLinear}]
    Let $m \ge 2n,$ and suppose that using $T = T(m, q)$ traces, we can solve worst-case circular trace reconstruction over length $m$ strings with failure probability $\delta.$ Then, suppose we are given $T$ traces of some unknown linear string $x$ of length $n$. We will reconstruct $x$ as follows. First, the algorithm creates a random binary string $y$ of length $m-n$. Then, the algorithm lets $x'$ be the circular string $x \circ y,$ i.e., $x$ concatenated with $y$, which has length $m$. While we do not know $x'$, given a random trace $\tilde{x}_i$ of $x$, we can create a random trace $\tilde{x}_i'$ of $x'$ by creating a random trace of $y$ (with deletion probability $q$) and appending it to $\tilde{x}_i$, and then randomly rotating it. Doing this for each trace gives us $T$ random traces of the circular string $x'$, which allows us to reconstruct $x'$ with probability $1-\delta.$ Now, the string $y$ appears exactly once (consecutively) in the circular string $x'$ with failure probability exponentially small in $n$ since $m \ge 2n$, and since we know $y$, we would be able to find the unique copy of $y$ in $x'$ and thus recover the linear string $x$ with failure probability $\delta + e^{-\Omega(n)}$.
    
    The same argument works in the average case. Suppose using $T = T(m, q)$ traces, we can solve average-case circular trace reconstruction with probability $\delta,$ where the average string is generated by creating a uniformly random binary (linear) string and making it circular. Then, if given $T$ random traces of a random linear string $x$ of length $n$, our algorithm works the same way: creating a random string $y$ of length $m-n,$ appending it to $x$, reconstructing the circular string $x' = x \circ y,$ and then recovering $x$ since with $1 - e^{-\Omega(n)}$ probability, there is a unique copy of $y$ in $x'.$
\end{proof}

\subsection{Proof of Lemma \ref{CreatingG}}

Here, we prove Lemma \ref{CreatingG}, which gives us the unbiased estimator of $\prod_{i = 1}^{m} P(z_i; x)$. To do so, we first note a simple proposition about complex numbers.

\begin{proposition}  (folklore, see, e.g., \cite{Narayanan20}) \label{Complex}
    Let $z$ be a complex number with $|z| = 1$ and $|\arg z| \le \theta.$ Then, for any $0 < p < 1,$ $\left|\frac{z-(1-p)}{p}\right| \le 1 + \frac{\theta^2}{p^2}.$
\end{proposition}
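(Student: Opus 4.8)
The plan is to reduce the statement to a one-line trigonometric estimate after squaring. First I would write $z = e^{i\phi}$ where $\phi = \arg z$, so that $|\phi| \le \theta$ by hypothesis, and observe that it suffices to bound the squared quantity $|e^{i\phi} - (1-p)|^2/p^2$. Expanding in real and imaginary parts,
\[
|e^{i\phi} - (1-p)|^2 = (\cos\phi - (1-p))^2 + \sin^2\phi = 1 + (1-p)^2 - 2(1-p)\cos\phi.
\]
The key algebraic simplification is to rewrite the right-hand side as $p^2 + 2(1-p)(1-\cos\phi)$, which follows from the identity $1 + (1-p)^2 - 2(1-p) = p^2$.

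Next I would apply the two elementary bounds $1 - \cos\phi \le \phi^2/2 \le \theta^2/2$ and $0 < 1-p < 1$, which together give
\[
\left|\frac{e^{i\phi} - (1-p)}{p}\right|^2 = 1 + \frac{2(1-p)(1-\cos\phi)}{p^2} \le 1 + \frac{\theta^2}{p^2}.
\]
Finally, taking square roots and using $\sqrt{1+u} \le 1+u$ for all $u \ge 0$ (equivalently $1+u \le (1+u)^2$) yields $\left|\frac{z-(1-p)}{p}\right| \le 1 + \frac{\theta^2}{p^2}$, as claimed.

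The argument is entirely routine; the only step requiring a moment's thought is spotting the identity $|e^{i\phi} - (1-p)|^2 = p^2 + 2(1-p)(1-\cos\phi)$, after which the standard inequality $1 - \cos\phi \le \theta^2/2$ makes the bound fall out in exactly the stated form. There is essentially no genuine obstacle here — the statement is a small geometric fact: the point $1-p$ lies on the real axis close to the unit circle near $1$, and moving to a point $e^{i\phi}$ on a short arc of angular length at most $\theta$ perturbs the distance in a controlled way, the perturbation being second order in $\theta$ and hence absorbed by the $\theta^2/p^2$ slack.
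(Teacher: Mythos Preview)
Your proof is correct: the identity $|e^{i\phi}-(1-p)|^2 = p^2 + 2(1-p)(1-\cos\phi)$ together with $1-\cos\phi \le \theta^2/2$ and $\sqrt{1+u}\le 1+u$ gives exactly the claimed bound. The paper itself does not prove this proposition but simply cites it from \cite{Narayanan20}, so there is no in-paper argument to compare against; your direct trigonometric computation is the natural way to establish it.
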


\begin{proof}[Proof of Lemma \ref{CreatingG}]
For some $1 \le k \le m,$ fix some complex numbers $w_1, \dots, w_k$ and consider the random variable
\[f(\tilde{x}, w) := \sum\limits_{1 \le i_1 < i_2 < \dots < i_k \le n} \tilde{x}_{i_1} \cdots \tilde{x}_{i_k} w_1^{i_1} w_2^{i_2-i_1} \cdots w_k^{i_k-i_{k-1}}\]
    for $w = (w_1, \dots, w_k)$, which is a random variable since $\tilde{x}$ is random. 
    
    We first describe $\BE[f(\tilde{x}, w)]$ and choose appropriate values for $w_1, \dots, w_k$. First, we can rewrite
\[f(\tilde{x}, w) = \sum\limits_{\substack{i_1, \dots, i_k \ge 1\\ i_1 + \dots + i_k \le n}} \tilde{x}_{i_1} \tilde{x}_{i_1+i_2} \cdots \tilde{x}_{i_1+i_2+\dots+i_k} w_1^{i_1} w_2^{i_2} \cdots w_k^{i_k}.\]
    For any $j_1, \dots, j_k$, note that $\tilde{x}_{i_1}$ coming from $x_{j_1}$, $\tilde{x}_{i_1+i_2}$ coming from $x_{j_1+j_2}$, etc. means that $j_1 \ge i_1, j_2 \ge i_2, \dots, j_k \ge i_k$. Moreover, even in this case, this will only happen with probability
\[\prod\limits_{r = 1}^{k} \left(p \cdot {j_r-1 \choose i_r-1} p^{i_r-1} q^{j_r-i_r} \right) = p^{\sum i_r} q^{\sum (j_r-i_r)} \prod\limits_{r = 1}^{k} {j_r - 1 \choose i_r - 1}.\]
    Therefore, we have that
\begin{align*}
\BE[f(\tilde{x}, w)] &= \sum\limits_{\substack{i_1, \dots, i_k \ge 1 \\ j_r \ge i_r \\ j_1 + \dots + j_k \le n}} \prod\limits_{r = 1}^{k} \left({j_r - 1 \choose i_r - 1} p^{i_r} q^{j_r-i_r} x_{j_1+\dots+j_r} w_r^{i_r}\right)\\
&=\sum\limits_{\substack{j_1, \dots, j_k \ge 1 \\ j_1 + \dots + j_k \le n}} \prod\limits_{r = 1}^{k} \left(p w_r x_{j_1+\dots+j_r} \cdot \sum\limits_{i_r = 1}^{j_r} {j_r - 1 \choose i_r - 1} p^{i_r-1} q^{j_r-i_r} w_r^{i_r-1}\right) \\
&=\sum\limits_{\substack{j_1, \dots, j_k \ge 1 \\ j_1 + \dots + j_k \le n}} \prod\limits_{r = 1}^{k} \left(p w_r x_{j_1+\dots+j_r} \cdot (p w_r + q)^{j_r-1} \right) \\
&= p^k \frac{w_1 \cdots w_k}{(pw_1+q) \cdots (pw_k+q)} \cdot \sum\limits_{\substack{j_1, \dots, j_k \ge 1 \\ j_1 + \dots + j_k \le n}} x_{j_1} \cdots x_{j_1+\dots+j_k} (pw_1+q)^{j_1} \cdots (pw_k+q)^{j_k}.
\end{align*}

    Now, fix $k \le m$ and fix a sequence $B = (B_1, \dots, B_k)$ of strictly nested nonempty subsets of $[m]$ with $B_1 = [m]$. By this, we mean that $[m] = B_1 \supsetneq B_2 \supsetneq \cdots \supsetneq B_k \neq \emptyset$. Now, for $1 \le r \le k$, define $w_{B, r} := \frac{1}{p} \left(\left(\prod_{i \in B_r} z_i\right)-q\right)$, $w_B := (w_{B, 1}, \dots, w_{B, k}),$ and $C_r := B_r \backslash B_{r+1}$ for $1 \le r \le k-1$ and $C_k := B_k.$ Finally, for any set $S \subset [m]$, define $z_S := \prod_{i \in S} z_i$. Then,
\begin{align*}
\BE\left[f(\tilde{x}, w_B)\right] &= p^k \cdot \frac{w_{B, 1} \cdots w_{B, k}}{z_{B_1} z_{B_2} \cdots z_{B_k}} \cdot \sum\limits_{\substack{j_1, \dots, j_k \ge 1 \\ j_1 + \dots + j_k \le n}} x_{j_1} \cdots x_{j_1+\dots+j_k} z_{B_1}^{j_1} \cdots z_{B_k}^{j_k} \\
&= p^k \cdot \frac{w_{B, 1} \cdots w_{B, k}}{z_{B_1} z_{B_2} \cdots z_{B_k}} \cdot \sum\limits_{1 \le i_1 < i_2 < \dots < i_k \le n} x_{i_1} \cdots x_{i_k} z_{C_1}^{i_1} \cdots z_{C_k}^{i_k},
\end{align*}
    where we have written $i_r = j_1 + j_2 + \dots + j_r$ for all $1 \le r \le k$. This implies that
\begin{align*}
\prod_{k = 1}^{m}\left(\sum\limits_{i = 1}^{n} x_i z_k^i\right) &= \sum\limits_{\substack{1 \le k \le m \\ [m] = B_1 \supsetneq \cdots \supsetneq B_k \neq \emptyset}} \sum\limits_{1 \le i_1 < i_2 < \dots < i_k \le n} x_{i_1} \cdots x_{i_k} z_{C_1}^{i_1} \cdots z_{C_k}^{i_k} \\
&= \sum\limits_{\substack{1 \le k \le m \\B = (B_1, \dots, B_k)}} p^{-k} \cdot \frac{z_{B_1} \cdots z_{B_k}}{w_{B, 1} \cdots w_{B, k}} \cdot \BE\left[f(\tilde{x}, w_B)\right].
\end{align*}
    The first line is true by expanding and using the fact that $x_i = x_i^{b_i}$ for all $b_i \in \BN$, as $x \in \{0, 1\}$.
    
    Now, let 
\[g_m(\tilde{x}, Z) := \sum\limits_{\substack{1 \le k \le m \\B = (B_1, \dots, B_k)}} p^{-k} \cdot \frac{z_{B_1} \cdots z_{B_k}}{w_{B, 1} \cdots w_{B, k}} \cdot f(\tilde{x}, w_B).\]
    For fixed $p, q, n,$ note that $g_m(\tilde{x}, Z)$ is indeed only a function of $\tilde{x},$ $Z$, and $m$, as the $w_{B, r}$'s are determined given $Z$. Importantly, there is no dependence of $g$ on $x$. Then,
\[\BE[g_m(\tilde{x}, Z)] = \prod_{k = 1}^{m}\left(\sum\limits_{i = 1}^{n} x_i z_k^i\right).\]

    Finally, we provide bounds on $f(\tilde{x}, w_B)$ that will give us our bounds on $g_m(\tilde{x}, Z).$ If $|z_i| = 1$ and $|\arg z_i| \le \frac{1}{L}$ for all $i$, then for $B = (B_1, \dots, B_r),$ $z_{B_r} = \prod_{i \in B_r} z_i$ has magnitude $1$ and argument at most $\frac{m}{L}$ in absolute value. Therefore, $|w_{B, r}| = \frac{1}{p}\left(\left(\prod_{i \in B_r} z_i\right)-q\right) \le 1 + O\left(\frac{m^2}{p^2L^2}\right) = \exp\left(O\left(\frac{m^2}{p^2L^2}\right)\right)$, by Proposition \ref{Complex}. This means that for any $i_1 + \dots + i_k \le n,$
\[|w_{B, 1}^{i_1} w_{B, 2}^{i_2} \cdots w_{B, k}^{i_k}| \le \exp\left(O\left(\frac{m^2}{p^2L^2} \cdot n\right)\right).\]
    Since $f(\tilde{x}, w_B)$ is the sum of $w_{B, 1}^{i_1} w_{B, 2}^{i_2} \cdots w_{B, k}^{i_k}$ over all $i_1, \dots, i_k \ge 1$ with $i_1 + \dots + i_k \le n,$ there are at most $n^k$ choices of $i_1, \dots, i_k,$ so we have that
\[|f(\tilde{x}, w_B)| \le \exp\left(O\left(\frac{m^2}{p^2L^2} \cdot n\right)\right) \cdot n^k.\]
    
    Therefore,
\[|g_m(\tilde{x}, Z)| \le \sum_{\substack{1 \le k \le m \\ B = (B_1, \dots, B_k)}} p^{-k} \cdot \left|\frac{z_{B_1} \cdots z_{B_k}}{w_{B, 1} \cdots w_{B, k}}\right| \cdot |f(\tilde{x}, w_B)| \le \sum_{\substack{1 \le k \le m \\ B = (B_1, \dots, B_k)}} p^{-k} \cdot \exp\left(O\left(\frac{m^2 n}{p^2 L^2}\right)\right) \cdot n^k\]
\[ \le (p^{-1} nm)^{O(m)} \cdot \exp\left(O\left(\frac{m^2 n}{p^2 L^2}\right)\right).\]
    The final equation follows from the observation that the number of sequences $(B_1, \dots, B_k)$ is at most $m^m,$ since each $i \in [m]$ has some final subset $j$ such that $i \in B_j$ but $i \not\in B_{j+1}$.
\end{proof}
    
\subsection{Proof of Theorem \ref{NT:Main}} \label{NT:Subsection}

We let $\omega_k$ denote $e^{2 \pi i/k}$ for $k \ge 1.$ When dealing with a string of length $n$, we write $\omega := \omega_n.$

    In the case where $n$ is prime, we already proved it in Proposition \ref{NT:Prime}.

    Next, we prove it in the case that $n = p \cdot q$ for $p, q$ odd primes. We will first need a simple lemma, which is likely folklore, though we give a proof regardless.
    
\begin{lemma} \label{NT:Lemma}
    Let $p, q$ be distinct primes, and suppose that $(1-\omega_p)|(\sum_{i = 0}^{q-1} b_i \omega_q^i)$ in $\BQ[\omega_{pq}].$ Then, we in fact have that $p|(\sum_{i = 0}^{q-1} b_i \omega_q^i)$ in $\BQ[\omega_{pq}]$.
\end{lemma}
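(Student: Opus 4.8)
The plan is to turn the divisibility hypothesis into a congruence modulo $p$ inside $\BZ[\omega_q]$. Write $\beta := \sum_{i=0}^{q-1} b_i\omega_q^i \in \BZ[\omega_q]$, and read the hypothesis as saying that $\gamma := \beta/(1-\omega_p)$ lies in the ring of integers $\BZ[\omega_{pq}]$; we want to conclude $\beta/p \in \BZ[\omega_q]$. The one structural input I would invoke is that, because $p \neq q$ are primes and hence $\gcd(p,q)=1$, the ring $\BZ[\omega_{pq}]$ is a free $\BZ[\omega_q]$-module with basis $1,\omega_p,\dots,\omega_p^{p-2}$; equivalently $\BZ[\omega_{pq}] = \BZ[\omega_q][\omega_p] \cong \BZ[\omega_q][x]/(\Phi_p(x))$, where $\Phi_p(x) = 1+x+\cdots+x^{p-1}$ is still the minimal polynomial of $\omega_p$ over $\BQ(\omega_q)$ since $[\BQ(\omega_{pq}):\BQ(\omega_q)] = \varphi(pq)/\varphi(q) = p-1$. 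This is the only place the primality (really only the coprimality) of $p$ and $q$ enters, and it is standard \cite{Marcus}.

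Granting this, the proof is essentially a one-line computation: reducing the isomorphism $\BZ[\omega_{pq}] \cong \BZ[\omega_q][x]/(\Phi_p(x))$ further modulo $(1-\omega_p) = (x-1)$ gives
\[\BZ[\omega_{pq}]/(1-\omega_p) \cong \BZ[\omega_q][x]/(\Phi_p(x),\,x-1) \cong \BZ[\omega_q]/(\Phi_p(1)) = \BZ[\omega_q]/(p),\]
and under this identification the composite $\BZ[\omega_q] \hookrightarrow \BZ[\omega_{pq}] \twoheadrightarrow \BZ[\omega_{pq}]/(1-\omega_p)$ is just reduction modulo $p$ (a constant $a$ maps to $a$, then to $a$, then to $a \bmod \Phi_p(1) = a \bmod p$). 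Since $(1-\omega_p)\mid\beta$ says $\beta \equiv 0$ in $\BZ[\omega_{pq}]/(1-\omega_p)$, and $\beta\in\BZ[\omega_q]$, we get $\beta\equiv 0 \pmod{p}$ in $\BZ[\omega_q]$, which is exactly $p\mid\beta$ (in $\BZ[\omega_q]$, hence a fortiori in $\BZ[\omega_{pq}]$, which is the claimed conclusion).

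If one would rather avoid quotient rings, the identical conclusion falls out by expanding $\gamma = \sum_{j=0}^{p-2} c_j\omega_p^j$ with $c_j \in \BZ[\omega_q]$, forming $\beta = (1-\omega_p)\gamma$, and using $\omega_p^{p-1} = -(1+\omega_p+\cdots+\omega_p^{p-2})$ to re-express $\beta$ in the basis $1,\omega_p,\dots,\omega_p^{p-2}$: the coordinates of $\beta$ along $\omega_p,\dots,\omega_p^{p-2}$ must vanish (as $\beta\in\BZ[\omega_q]$), which forces $c_j = (p-1-j)\,c_{p-2}$ for every $j$, and the coordinate along $1$ then reads $\beta = p\,c_{p-2}$ with $c_{p-2}\in\BZ[\omega_q]$. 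I do not see a genuine obstacle here: the entire content of the lemma is recording that $\{1,\omega_p,\dots,\omega_p^{p-2}\}$ is an integral basis of $\BZ[\omega_{pq}]$ over $\BZ[\omega_q]$, after which everything is forced.
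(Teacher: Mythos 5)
Your proof is correct but takes a genuinely different route from the paper's. The paper argues via divisibility and Frobenius: from $(1-\omega_p)\mid\beta$ it deduces $(1-\omega_p)^p\mid\beta^p$, uses $p\mid(1-\omega_p)^p$ (since $(1-\omega_p)^{p-1}$ and $p$ are associates in $\BZ[\omega_p]$) to get $p\mid\beta^p$, applies the Frobenius endomorphism modulo $p$ to rewrite $\beta^p\equiv\sum b_i\omega_q^{ip}$, and finally uses that $\omega_q^p$ and $\omega_q$ are Galois conjugates (since $p\neq q$) to transport this back to $p\mid\sum b_i\omega_q^i$. You instead invoke the structural fact that $\BZ[\omega_{pq}]\cong\BZ[\omega_q][x]/(\Phi_p(x))$ is free over $\BZ[\omega_q]$ with basis $1,\omega_p,\dots,\omega_p^{p-2}$, and then the whole lemma collapses to the one-line quotient computation $\BZ[\omega_{pq}]/(1-\omega_p)\cong\BZ[\omega_q]/(\Phi_p(1))=\BZ[\omega_q]/(p)$. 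Your explicit basis expansion giving $\beta=p\,c_{p-2}$ is a fine elementary rendering of the same idea and checks out. Both arguments hinge on $\gcd(p,q)=1$, but your version makes the appearance of the factor $p$ transparent (it is literally $\Phi_p(1)$), and it also yields the slightly sharper conclusion $p\mid\beta$ in $\BZ[\omega_q]$, not merely in $\BZ[\omega_{pq}]$. The tradeoff is that you outsource more to a standard but nontrivial black box (the integral-basis statement, which ultimately rests on coprimality of discriminants), whereas the paper's proof only uses the Frobenius endomorphism and basic Galois conjugation of roots of unity.
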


\begin{proof}
    Note that $(1-\omega_p)|(\sum_{i = 0}^{q-1} b_i \omega_q^i)$ implies that $(1-\omega_p)^p|(\sum_{i = 0}^{q-1} b_i \omega_q^i)^p.$ But $p|(1-\omega_p)^p,$ so $p|(\sum_{i = 0}^{q-1} b_i \omega_q^i)^p.$ Now, using Frobenius Endomorphism, we have that $(\sum_{i = 0}^{q-1} b_i \omega_q^i)^p \equiv \sum_{i = 0}^{q-1} b_i \omega_q^{i \cdot p} \Mod p,$ so $p|(\sum_{i = 0}^{q-1} b_i \omega_q^{i \cdot p}).$ But since $p \neq q,$ we have that $\omega_q^p$ and $\omega_q$ are Galois conjugates, so we therefore have that $p|(\sum_{i = 0}^{q-1} b_i \omega_q^i),$ as desired.
\end{proof}

\begin{lemma}
    Theorem \ref{NT:Main} is true when $n = p \cdot q,$ where $p, q$ are distinct odd primes.
\end{lemma}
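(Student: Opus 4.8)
The plan is to follow the same skeleton as in the prime case (Proposition \ref{NT:Prime}): use the hypothesis at $k=0$ to conclude $\sum a_i = \sum b_i$, then at a suitable $k$ replace $b$ by a cyclic shift $b'$ so that $\sum a_i \omega^{ik} = \sum b'_i \omega^{ik}$ and try to deduce that $a = b'$ from the remaining relations. The new difficulty is that the cyclotomic field $\BQ[\omega]$ with $n=pq$ has zero divisors among the vanishing sums: the polynomial $1 + \omega^q + \omega^{2q} + \dots + \omega^{(p-1)q}$ vanishes (it is $\Phi_p(\omega^q)$-type behavior), and similarly for the $q$-direction, so $x^n-1$ no longer forces the difference polynomial $Q(x) = \sum(a_i - b'_i)x^i$ to be zero just from knowing it vanishes at all $n$-th roots of unity. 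So the bulk of the argument is a more careful analysis of which $0/1$-polynomials can satisfy the full system of relations.

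First I would use the structure of the divisors of $n=pq$. For each $k$ with $\gcd(k,n)=1$, the value $\omega^k$ is a primitive $n$-th root of unity, and the relation $\sum a_i \omega^{ik} = \omega^{c_k}\sum b_i \omega^{ik}$ together with Lemma \ref{GaloisConjugate} lets me bundle all such $k$ together: after fixing the shift at one primitive root, the difference $Q(x)$ is divisible by $\Phi_n(x)$. For $k$ with $\gcd(k,n)=p$ (so $\omega^k$ is a primitive $q$-th root of unity), the relation becomes a statement about $\sum a_i \omega_q^{i}$ versus $\sum b_i \omega_q^{i}$ after folding $a$ and $b$ modulo $q$; similarly $\gcd(k,n)=q$ gives a statement modulo $p$; and $k=0$ gives the total-count relation. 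The key point is that $\BQ[\omega] = \BQ[\omega_p]\otimes_\BQ \BQ[\omega_q]$, and a sum like $\sum a_i \omega_q^i$ that is divisible by $1-\omega_p$ in this ring is actually divisible by $p$ — this is exactly Lemma \ref{NT:Lemma}, proved via the Frobenius endomorphism and the fact that $p \mid (1-\omega_p)^p$. I expect to apply this lemma to control the "$q$-folded" and "$p$-folded" residue sums and force rigidity: the root-of-unity factors $\omega^{c_k}$ appearing in the hypothesis, combined with positivity/integrality, should pin the folded sums down up to cyclic shift in each direction.

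The main obstacle, and where odd primality of both $p$ and $q$ really gets used, is ruling out the "twisted" solutions: one must show that if $a$ and $b$ are not cyclic shifts of each other, then there is at least one $k$ for which no root of unity $\omega^{c_k}$ can make $\sum a_i\omega^{ik} = \omega^{c_k}\sum b_i\omega^{ik}$ — equivalently, that the Chinese-Remainder decomposition of the index set $\BZ/n \cong \BZ/p \times \BZ/q$ doesn't allow independently shifting the $p$-coordinate and the $q$-coordinate in a way that is invisible to all the character sums. Concretely, I anticipate arguing: the relations at $\gcd(k,n)=q$ determine the $p\times q$ array $(a_{(u,v)})$ up to a shift in the $u$-direction, the relations at $\gcd(k,n)=p$ determine it up to a shift in the $v$-direction, the relations at $\gcd(k,n)=1$ (via $\Phi_n \mid Q$, combined with the degree bound $n-1$ on $Q$ but now $\deg \Phi_n = (p-1)(q-1) < n-1$) give an additional constraint, and the $k=0$ count relation plus these together leave only the genuine cyclic shifts of $\BZ/n$. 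The odd hypothesis enters because with $p=2$ the root of unity $\pm$ ambiguity from Lemma \ref{RootsOfUnity} cannot be resolved, exactly as in the prime case; I would isolate that $\pm 1 \in \BQ[\omega]$ issue and show it forces genuine shifts only when $p,q$ are odd. I would finish by checking that once the $p\times q$ array is known up to an independent shift in each coordinate but the shifts are forced to be "compatible" (a single element of $\BZ/n$ under CRT), we recover that $a$ and $b$ are cyclic shifts, completing the "if" direction; the "only if" direction (that three prime factors break this) is presumably handled separately by an explicit counterexample.
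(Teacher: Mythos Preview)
Your high-level plan and your identification of Lemma \ref{NT:Lemma} as the key input are right, but there is a genuine gap. The claim that ``the relations at $\gcd(k,n)=q$ determine the $p\times q$ array up to a shift in the $u$-direction'' is too strong: those relations only see the $q$-folded row sums $\sum_v a_{(u,v)}$, and plenty of distinct $0/1$ arrays share all row and column sums plus the total count. More importantly, after you shift $b$ so that the relations at $k=0,p,q$ become equalities, you still only know $\sum a_i\omega^i = \omega^m\sum b_i\omega^i$ for some unknown $m$; your proposed step ``$\Phi_n\mid Q$'' therefore presupposes exactly what must be shown.

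The paper instead assumes $q\nmid m$ and extracts structure from that. Since $\omega^i\equiv\omega_q^{is}\pmod{1-\omega_p}$ for suitable $s$, and $\sum(a_i-b_i)\omega_q^{is}=0$ by the $k=p$ normalization, one gets $(1-\omega_p)\mid\sum(a_i-b_i)\omega^i=(\omega^m-1)\sum b_i\omega^i$. Because $q\nmid m$, the factor $\omega^m-1$ is either a unit (if also $p\nmid m$) or divides $q$, hence in either case coprime to $1-\omega_p$; this gives $(1-\omega_p)\mid\sum b_i\omega^i$ and then $(1-\omega_p)\mid\sum b_i\omega_q^{is}$. Only \emph{now} does Lemma \ref{NT:Lemma} bite, yielding $p\mid\sum b_i\omega_q^i$ and forcing the residue sums $d_j=\sum_\ell b_{j+\ell q}$ to be all congruent mod $p$, hence all equal or all in $\{0,p\}$. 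A case analysis on this (and the symmetric statement for $a$, plus the sub-cases $p\mid m$ versus $p\nmid m$, sometimes requiring a further shift) finishes. The idea your sketch is missing is precisely this divisibility transfer through the factor $\omega^m-1$; without it Lemma \ref{NT:Lemma} has no entry point, and the CRT-array picture you describe does not by itself rule out incompatible shifts.
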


\begin{proof}
    First, we have that $\sum_{i = 0}^{n-1} a_i = \omega^{c_0} \sum_{i = 0}^{n-1} b_i$ and since $a_1, \dots, a_n,b_1, \dots, b_n \in \{0, 1\}$, this means that $\omega^{c_0}$ is real and thus equals $1$. So, $\sum_{i = 0}^{n-1} a_i = \sum_{i = 0}^{n-1} b_i$. Next, we have that $\sum_{i = 0}^{n-1} a_i \omega_p^i = \omega^{c_q} \cdot \sum_{i = 0}^{n-1} b_i \omega_p^i,$ since $\omega_p = \omega^q.$ Therefore, since the $a_i$'s are all integers, this implies that either $\sum_{i = 0}^{n-1} a_i \omega_p^i = \sum_{i = 0}^{n-1} b_i \omega_p^i = 0$ or $\omega^{c_q} = \frac{\sum a_i \omega_p^i}{\sum b_i \omega_p^i} \in \BQ[\omega_p]$. Thus, by Theorem \ref{RootsOfUnity}, $\omega^{c_q}$ actually equals $\omega_p^k$ for some $k$. Likewise, $\omega^{c_p}$ actually equals $\omega_q^\ell$ for some $\ell$, so we have that
\begin{equation*}
    \label{NT:a} \sum_{i = 0}^{n-1} a_i \omega_p^i = \omega_p^k \sum_{i = 0}^{n-1} b_i \omega_p^i, \hspace{0.5cm} \sum_{i = 0}^{n-1} a_i \omega_q^i = \omega_q^{\ell} \sum_{i = 0}^{n-1} b_i \omega_q^i.
\end{equation*}
    
    Therefore, by the Chinese Remainder theorem, we can cyclically shift $\{b_i\}$ by something that is $k$ modulo $p$ and $\ell$ modulo $q$ to get some sequence $\{b'_i\}$ so that
\begin{equation*}
    \label{NT:b}\sum_{i = 0}^{n-1} a_i = \sum_{i = 0}^{n-1} b_i', \hspace{0.5cm} \sum_{i = 0}^{n-1} a_i \omega_p^i = \sum_{i = 0}^{n-1} b'_i \omega_p^i, \hspace{0.5cm} \text{and} \hspace{0.5cm} \sum_{i = 0}^{n-1} a_i \omega_q^i = \sum_{i = 0}^{n-1} b'_i \omega_q^i.
\end{equation*}

    Without loss of generality, we can therefore pretend that $k = \ell = 0,$ so in fact we have $b_i' = b_i$ for all $i$. Now, suppose that $\sum a_i \omega^i = \omega^m \cdot \sum b_i \omega^i.$ Our goal is to show that $p|m$ and $q|m$, so that $\omega^m = 1.$ Assume the contrary, WLOG that $q \nmid m.$ Then, we can write
\begin{equation}
    \label{NT:c} \sum_{i = 0}^{n-1} (a_i-b_i) \omega^i = (\omega^m-1) \cdot \sum_{i = 0}^{n-1} b_i \omega^i. 
\end{equation}
    Now, choose integers $r, s$ so that $r \cdot q + 1 = s \cdot p.$ Then, we have that $\omega_q^{i \cdot s} -\omega^i = \omega^{i \cdot s \cdot p} - \omega^i = \omega^i\left(\omega^{i \cdot r \cdot q}-1\right) = \omega^i\left(\omega_p^{i \cdot r} - 1\right)$, which is a multiple of $\omega_p-1$. Therefore, we have that $1-\omega_p$ divides
\[\sum_{i = 0}^{n-1} (a_i-b_i) \cdot \left(\omega^i - \omega_q^{i \cdot s}\right) = \sum_{i = 0}^{n-1} (a_i-b_i) \omega^i - \sum_{i = 0}^{n-1} (a_i-b_i) \omega_q^{i \cdot s} = \sum_{i = 0}^{n-1} (a_i-b_i) \omega^i.\]
    The last equality in the above line follows since $\sum_{i = 0}^{n-1} a_i \omega_q^i = \sum_{i = 0}^{n-1} b_i \omega_q^i,$ and since $s$ is relatively prime to $q$, this means $\omega_q^s$ is a Galois conjugate of $\omega_q,$ so $\sum_{i = 0}^{n-1} a_i \omega_q^{i \cdot s} = \sum_{i = 0}^{n-1} b_i \omega_q^{i \cdot s}.$

    Now, since $q \nmid m,$ we have that either $\omega^m-1$ is a unit in $\BZ[\omega]$ (if $p \nmid m$) or $\omega^m-1|q$ (if $p|m$). Therefore, by Equation \eqref{NT:c}, we have that
\begin{equation*}
    (1-\omega_p) \biggr\vert q \cdot \sum_{i = 0}^{n-1} b_i \omega^i \Rightarrow (1-\omega_p) \biggr\vert \sum_{i = 0}^{n-1} b_i \omega^i,
\end{equation*}
    since $(1-\omega_p),$ $(q)$ are relatively prime as ideals. Now, recalling that $\omega^i \equiv \omega_q^{i \cdot s} \Mod {1-\omega_p},$ we have that $(1-\omega_p)|\sum_{i = 0}^{n-1} b_i \omega_q^{i \cdot s}.$
    
    By Lemma \ref{NT:Lemma}, we have that $p|\sum_{i = 0}^{n-1} b_i \omega_q^{i \cdot s}$. Since $\omega_q^s$ and $\omega_q$ are Galois conjugates, this also means that $p|\sum_{i = 0}^{n-1} b_i \omega_q^i.$ Now, for $0 \le j \le q-1,$ let $d_j = b_j+b_{j+q} + \dots + b_{j+(p-1)q}$. We have that $p|\sum_{j = 0}^{q-1} d_j \omega_q^{j},$ so $\sum_{j = 0}^{q-1} \frac{d_j}{p} \omega_q^j$ is an algebraic integer in $\BQ[\omega_q].$ Therefore, $d_0 \equiv d_1 \equiv \dots \equiv d_{q-1} \Mod p$. Since $0 \le d_i \le p$ for all $i$, we either have that $d_0 = d_1 = \dots = d_{q-1},$ or $d_0, d_1, \dots, d_{q-1} \in \{0, p\}.$
    
    Likewise, we also have that $\sum b_i \omega^i = \omega^{-m} \cdot \sum a_i \omega^i,$ where $q \nmid (-m).$ Therefore, if $c_j = a_j + a_{j+q} + \cdots + a_{j + (p-1)q}$ for each $0 \le j \le q-1,$ we either have that $c_0 = c_1 = \cdots = c_{q-1},$ or $c_0, c_1, \dots, c_{q-1} \in \{0,p\}.$
    
    Now, suppose that $d_0, d_1, \dots, d_{q-1} \in \{0, p\}.$ This means that for all $0 \le j \le q-1,$ $b_j = b_{j+q} = \cdots = b_{j + (p-1) q},$ so $b_j \omega^j + b_{j+q} \omega^{j+q} + \cdots + b_{j + (p-1) q} \omega^{j + (p-1) q} = 0$ for all $0 \le j \le p-1.$ Importantly, this means $\sum b_j \omega^j = 0$. But since $\sum a_i \omega^i = \omega^m \cdot \sum b_i \omega^i$ for some $m \in \BZ,$ this also means that $\sum a_i \omega^i = 0,$ so in fact we do have that $\sum b_i \omega^i = \sum a_i \omega^i.$ Likewise, if $a_0, a_1, \dots, a_{q-1} \in \{0, p\}$ we also have that $\sum b_i \omega^i = \sum a_i \omega^i = 0$ by a symmetric argument.
    
    Otherwise, we are dealing with the case where $d_0 = d_1 = \cdots = d_{q-1}$ and $c_0 = c_1 = \cdots = c_{q-1}.$ But then, $0 = \sum_{j = 0}^{q-1} d_j \omega_q^j = \sum_{i = 0}^{n-1} b_i \omega_q^i$ and $0 =  \sum_{j = 0}^{q-1} c_j \omega_q^j = \sum_{i = 0}^{n-1} a_i \omega_q^i.$ Recall that $\sum a_i \omega^i = \omega^m \cdot \sum b_i \omega^i$ and that we assumed $q \nmid m.$ If $p|m,$ then if $m = p \cdot t$, we have $\sum a_i \omega^i = \sum b_{i- p \cdot t} \omega^i,$ where $i - p \cdot t$ is done modulo $n$. Moreover, we have that $\sum b_{i-p \cdot t} \omega_p^i = \sum b_i \omega_p^{i + p \cdot t} = \sum b_i \omega_p^i$, $\sum b_{i-o \cdot t} \omega_q^i = \sum b_i \omega_q^{i + p \cdot t} = \omega_q^{p \cdot t} \cdot \sum b_i \omega_q^i = 0,$ and $\sum b_{i - p \cdot t} = \sum b_i.$ Therefore, by shifting $b$ by $p \cdot t,$ we have that $\sum a_i \omega^{k \cdot i} = \sum b_i \omega^{k \cdot i}$ for $k = 0, 1, p,$ and $q$, and therefore for all $0 \le k \le n-1.$
    
    The other case is that $p \nmid m.$ In this case, we can define $e_j = a_j + a_{j+p} + \cdots + a_{j+(q-1)p}$ and $f_j = b_j + b_{j+p} + \cdots + b_{j+(q-1)p}$ for $0 \le j \le p-1.$ By the same argument as before, either $e_0 = e_1 = \cdots = e_{p-1}$ or $e_0, e_1, \dots, e_{p-1} \in \{0, q\},$ and either $f_0 = f_1 = \cdots = f_{p-1}$ or $f_0, f_1, \cdots, f_{p-1} \in \{0, p\}.$ Again, either $e_0, e_1, \dots, e_{p-1} \in \{0, q\}$ or $f_0, f_1, \dots, f_{q-1} \in \{0, q\}$ implies that $\sum a_i \omega^i = \sum b_i \omega^i = 0.$ Therefore, the final case to deal with is if $c_0 = c_1 = \cdots = c_{q-1},$ $d_0, d_1 = \cdots = d_{q-1},$ $e_0 = e_1 = \cdots = e_{q-1},$ and $f_0 = f_1 = \cdots = f_{q-1}.$ As we have seen, the first two equations imply that $0 = \sum_{i = 0}^{n-1} a_i \omega_q^i = \sum_{i = 0}^{n-1} b_i \omega_q^i.$ Thus, the same argument applied to the last two equations implies that $0 = \sum_{i = 0}^{n-1} a_i \omega_p^i = \sum_{i = 0}^{n-1} b_i \omega_p^i.$ As a result, we can shift the sequence $b$ by $m$, since $\sum a_i \omega^i = \sum b_{i-m} \omega^i,$ but we will still have that $\sum a_i = \sum b_{i-m},$ $\sum a_i \omega_p^i = \sum b_i \omega_p^i = \sum b_{i-m} \omega_p^i = 0,$ and $\sum a_i \omega_q^i = \sum b_i \omega_q^i = \sum b_{i-m} \omega_q^i = 0.$
\end{proof}

We now show that Theorem \ref{NT:Main} is true when $n$ is the square of an odd prime.

\begin{proposition}
    Theorem \ref{NT:Main} is true if $n=p^2$ is the square of an odd prime.
\end{proposition}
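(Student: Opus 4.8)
The plan is to run the same kind of argument used for $n=pq$, but with the tower $\mathbb{Q}\subset\mathbb{Q}[\omega_p]\subset\mathbb{Q}[\omega]$ (where $\omega=\omega_{p^2}$) playing the role of the two independent primes. Write $A(x)=\sum_{i=0}^{n-1}a_ix^i$ and $B(x)=\sum_{i=0}^{n-1}b_ix^i$, both of degree $<n=p^2$. Observe first that the hypothesis is unchanged if we cyclically shift $a$ or $b$ (the exponents $c_k$ merely shift), so we are free to replace $b$ by any of its cyclic shifts. The $k=0$ relation says $A(1)=\omega^{c_0}B(1)$ with both sides nonnegative reals, forcing $A(1)=B(1)$; the degenerate case where $A(1)=0$ (equivalently all $a_i=0$, which then forces all $b_i=0$ via vanishing at every $p^2$-th root of unity) is trivial, so assume $A(1)=B(1)>0$.

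Next I would normalize at level $p$. Take $k=p$, so $\omega^p=\omega_p$ is a primitive $p$-th root of unity. If $B(\omega_p)=0$ then $A(\omega_p)=0$ as well; if $B(\omega_p)\neq 0$ then $\omega^{c_p}=A(\omega_p)/B(\omega_p)$ is a root of unity lying in $\mathbb{Q}[\omega_p]$, hence $\pm\omega_p^j$ by Lemma \ref{RootsOfUnity}, and since $(-\omega_p^j)^{p^2}=-1$ (as $p$ is odd) the sign must be $+$, so $\omega^{c_p}=\omega_p^j$. Cyclically shifting $b$ by an amount congruent to $j$ modulo $p$, we may assume $A(\omega_p)=B(\omega_p)$; by Galois conjugacy (Lemma \ref{GaloisConjugate}) $A$ and $B$ then agree at every primitive $p$-th root of unity. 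After this shift we still have $A(1)=B(1)$, and now $A(\omega)=\omega^m B(\omega)$ for some integer $m$.

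Now split on whether $B(\omega)=0$. If $B(\omega)=0$ then $A(\omega)=0$, so $\Phi_{p^2}(x)$ divides both $A$ and $B$; together with agreement at $1$ and at the primitive $p$-th roots this makes $A-B$ vanish at every $p^2$-th root of unity, and since $\deg(A-B)<p^2$ we get $A=B$. If $B(\omega)\neq 0$, apply each automorphism $\omega\mapsto\omega^k$ with $\gcd(k,p)=1$ to $A(\omega)=\omega^m B(\omega)$ to obtain $A(\omega^k)=\omega^{mk}B(\omega^k)$ for all such $k$; letting $B^{(m)}$ denote $b$ shifted by $m$, this says $\Phi_{p^2}\mid A-B^{(m)}$, and with $(x-1)\mid A-B^{(m)}$ (from $A(1)=B(1)$) we may write $A-B^{(m)}=(x-1)\Phi_{p^2}(x)R(x)$ with $R\in\mathbb{Z}[x]$ of degree $\le p-2$. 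The whole proof is then finished as soon as we also know $A(\omega_p)=B^{(m)}(\omega_p)$, i.e. $(1-\omega_p^m)B(\omega_p)=0$, i.e. either $B(\omega_p)=0$ or $p\mid m$: for then $A-B^{(m)}$ vanishes at all $p^2$-th roots of unity, so $A=B^{(m)}$, a cyclic shift of $b$.

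The main obstacle is precisely ruling out the case $p\nmid m$ together with $B(\omega_p)\neq 0$. Here I would evaluate $A-B^{(m)}=(x-1)\Phi_{p^2}(x)R(x)$ at $x=\omega_p$: using $\Phi_{p^2}(\omega_p)=p$ and $A(\omega_p)=B(\omega_p)$ this becomes $(1-\omega_p^m)B(\omega_p)=p(\omega_p-1)R(\omega_p)$, and since $\gcd(m,p)=1$ makes $(1-\omega_p^m)/(1-\omega_p)$ a cyclotomic unit, cancelling $(1-\omega_p)$ yields $p\mid B(\omega_p)$ in $\mathbb{Z}[\omega_p]$. Grouping the bits of $b$ by residue mod $p$, write $B(\omega_p)=\sum_{j=0}^{p-1}f_j\omega_p^j$ with $f_j=\sum_{\ell=0}^{p-1}b_{j+\ell p}\in\{0,1,\dots,p\}$; reducing via $1+\omega_p+\cdots+\omega_p^{p-1}=0$ and using that $1,\omega_p,\dots,\omega_p^{p-2}$ is a $\mathbb{Z}$-basis, divisibility by $p$ forces all $f_j$ congruent mod $p$, hence either all $f_j$ are equal or all $f_j\in\{0,p\}$. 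In the first case $B(\omega_p)=(\sum_j\omega_p^j)\cdot f_0=0$, and in the second case $b_i$ depends only on $i\bmod p$, so $B(\omega)=\sum_j b_j\omega^j\sum_\ell\omega^{\ell p}=0$; both contradict our standing assumptions. Therefore $p\mid m$, which closes the argument.
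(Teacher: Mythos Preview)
Your proof is correct, and it takes a genuinely different route from the paper's.

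The paper normalizes first at the primitive $p^2$-th root: it uses the $k=1$ relation to shift $b$ so that $A(\omega)=B(\omega)$, whence $\Phi_{p^2}\mid A-B$ forces $a_i-b_i$ to be constant on each residue class mod $p$. It then invokes the $k=p$ relation to compare the column sums $e_j=\sum_\ell a_{j+\ell p}$ and $f_j=\sum_\ell b_{j+\ell p}$, arguing (combinatorially) that $e_j=f_j$ for every $j$ and hence $a_i=b_i$. You instead normalize first at the $p$-th root level via the $k=p$ relation, shifting so that $A(\omega_p)=B(\omega_p)$, and only then confront the primitive $p^2$-th root. Writing $A-B^{(m)}=(x-1)\Phi_{p^2}(x)R(x)$ and evaluating at $x=\omega_p$ exploits the identity $\Phi_{p^2}(\omega_p)=p$ to extract the divisibility $p\mid B(\omega_p)$ in $\BZ[\omega_p]$, which you turn into the column-sum dichotomy and a contradiction. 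This is essentially the paper's $n=pq$ argument transported to the ramified tower $\BQ\subset\BQ[\omega_p]\subset\BQ[\omega_{p^2}]$, with $\Phi_{p^2}(\omega_p)=p$ playing the role that Lemma~\ref{NT:Lemma} plays there. Your version is more uniform with the $pq$ case and makes the $p$-adic mechanism explicit; the paper's version is shorter and more combinatorial, front-loading the $\Phi_{p^2}$ divisibility to get class-constancy of $a_i-b_i$ immediately.
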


\begin{proof}
By shifting, we may without loss of generality assume that $\sum a_i\omega^i=\sum b_i\omega^i$, so $P(x)=\sum(a_i-b_i)x^i$ has $\omega$ as a root. Thus, $1+x^p+\cdots+x^{n-p}\mid P(x)$, which means that $a_i-b_i=a_{i+p}-b_{i+p}=\cdots=a_{i+n-p}-b_{i+n-p}$, where indices are taken mod $n$. Thus, if it is not the case that $a_i=a_{i+p}=\cdots=a_{i+n-p}$, equivalently that $b_i=b_{i+p}=\cdots=b_{i+n-p}$, then we must have that $a_i=b_i,a_{i+p}=b_{i+p},\ldots,a_{i+n-p}=b_{i+n-p}$ since $a_j,b_j\in\{0,1\}$.

Let $z=\omega^p$. We have that \[\sum a_iz^i=(a_0+a_p+\cdots+a_{n-p})+(a_1+a_{p+1}+\cdots+a_{n-p+1})z+\cdots+(a_{p-1}+a_{2p-1}+\cdots+a_{n-1})z^{p-1},\]\[\sum b_iz^i=(b_0+b_p+\cdots+b_{n-p})+(b_1+b_{p+1}+\cdots+b_{n-p+1})z+\cdots+(b_{p-1}+b_{2p-1}+\cdots+b_{n-1})z^{p-1}.\]

Thus, $\frac{\sum a_iz^i}{\sum b_iz^i}\in\mathbb Q[z]$, so $p\mid c_p$ and we have that $\sum a_iz^i=z^m\sum b_iz^i$ for some $m\in\mathbb Z$. It follows that $\{a_i+a_{i+p}+\cdots+a_{i+n-p}\}$ and $\{b_i+b_{i+p}+\cdots+b_{i+n-p}\}$ are cyclic shifts of each other. Since these sequences are of length $p$, this means that they are equal. We already know that $a_i+a_{i+p}+\cdots+a_{i+n-p}\notin\{0,p\}\implies a_{i+p\ell}=b_{i+p\ell}$ for all $\ell$. But it is also the case that $a_i+a_{i+p}+\cdots+a_{i+n-p}=b_i+b_{i+p}+\cdots+b_{i+n-p}\in\{0,p\}\implies a_{i+p\ell}=b_{i+p\ell}$ for all $\ell$ since $a_j,b_j\in\{0,1\}$. Thus, we have shown that $a_i=b_i$ for all $i$, so we are done.
\end{proof}

\begin{proposition}
    Theorem \ref{NT:Main} is true if $n = 2p$, i.e., $n$ is twice a prime.
\end{proposition}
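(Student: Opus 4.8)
The plan is to follow the same skeleton as the $n=pq$ and $n=p^2$ cases, treating $2$ as one of the prime factors of $n$, while managing a genuinely new complication: since $-1=\omega^p$ already lies in $\langle\omega\rangle$, the root-of-unity ``correction factors'' $\omega^{c_k}$ that show up at the divisors of $n$ need no longer be powers of $\omega_p$ alone, but only powers of $\omega_p$ up to sign, and that extra sign cannot always be absorbed into a cyclic shift. So the bulk of the work is careful sign bookkeeping.

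First I would dispose of the trivial endpoints: if $\sum a_i\in\{0,n\}$ then all $a_i$ are equal, and the $k=0$ relation forces all $b_i$ to equal that same value, so the sequences coincide. Otherwise $0<\sum a_i<n$, and the $k=0$ relation gives $\sum a_i=\sum b_i$. Next I would record the two ``coordinate'' relations. Taking $k=p$ (so $\omega^p=-1$) gives $\sum a_i(-1)^i=\omega^{c_p}\sum b_i(-1)^i$ with both sides integers up to the unit, so either both vanish or $\omega^{c_p}=\pm1$. Taking $k=2$ (so $\omega^2=\omega_p$) gives $\sum_{i=0}^{p-1}(a_i+a_{i+p})\omega_p^i=\omega^{c_2}\sum_{i=0}^{p-1}(b_i+b_{i+p})\omega_p^i$, an identity in $\BZ[\omega_p]$; since $\omega^{c_2}$ is a $2p$th root of unity and every $2p$th root of unity equals $\pm\omega_p^m$ for some $m$ (cf.\ Lemma \ref{RootsOfUnity}), we get $\omega^{c_2}=\pm\omega_p^m$, unless one of the two sums above is zero.

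The core of the argument uses the isomorphism $\BZ/n\cong\BZ/2\times\BZ/p$ to split $a$ into its even-indexed and odd-indexed subsequences $A_0,A_1\in\{0,1\}^p$ (and likewise $B_0,B_1$). Rewriting the characters accordingly, the hypothesis at even $k$ becomes a family of relations (over all $p$th roots of unity, up to $2p$th-root-of-unity factors) on the ``folded'' string $A^+:=A_0+A_1\in\{0,1,2\}^p$ versus $B^+$, while the hypothesis at odd $k$ becomes an analogous family on the ``difference'' string $A^-:=A_0-A_1\in\{-1,0,1\}^p$ versus $B^-$; moreover a cyclic shift of $a$ corresponds exactly to a coherent simultaneous shift of $(A_0,A_1)$ together with the possibility of swapping $A_0\leftrightarrow A_1$. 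From the $k=2$ relation and Galois conjugation one picks out a single cyclic shift of $b$ after which the folded parts agree, $A^+=B^+$ --- unless the relevant evaluation vanishes, in which case $A^+$ is a constant vector and, because $p$ is odd and $\sum A^+=\sum B^+$, it must be the all-ones vector, so $a$ and $b$ collapse to single $\pm1$-strings of prime length $p$. One then runs the prime-length argument of Proposition \ref{NT:Prime}: a polynomial of degree $<p$ with small integer coefficients vanishing at $\omega_p$ (resp.\ at the primitive $2p$th root $\omega$) is a $\BZ$-multiple of the $p$th (resp.\ $2p$th) cyclotomic polynomial $1+x+\cdots+x^{p-1}$ (resp.\ $1-x+\cdots+x^{p-1}$), hence either zero or a small multiple of a cyclotomic coefficient pattern; applied to $A^-$ versus $B^-$ it either gives $A^-=B^-$ (after the shift), and hence $a$ and $b$ agree, or it produces an alternating $\pm1$ pattern, in which case one reads off directly that the two strings differ by the cyclic shift of length $p$ (equivalently, by the row-swap $A_0\leftrightarrow A_1$).

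The delicate step --- where I expect almost all of the work --- is the sign casework forced by $-1\in\langle\omega\rangle$: each of the factors $\omega^{c_p}$, $\omega^{c_2}$, and the analogous factors for the difference string, may carry a minus sign that is not of the form $+\omega_p^{m'}$, and in these ``minus'' branches the normalization by a cyclic shift fails. The escape is to use $\sum A^+=\sum B^+$ together with the divisibility forced by $p\mid 2\sum A^+$ (so $\sum A^+\in\{0,p,2p\}$, pinning $A^+$ to a constant vector) to funnel every such branch into an already-handled degenerate configuration, and then to verify by direct inspection that the alternating $\pm1$ patterns appearing there really are cyclic shifts of $b$ by $p$. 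Organizing this casework cleanly, rather than the individual computations, is the crux, and it is precisely the part with no counterpart in the odd-$n$ arguments.
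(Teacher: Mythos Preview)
Your route is very different from the paper's. The paper gives a short argument with no CRT split and no sign casework: rotate $b$ once (using $k=1$) so that $P(\omega)=Q(\omega)$; claim that then $1+x^2+\cdots+x^{2(p-1)}$ divides $P-Q$; and read off the quotient $R$ from the bound $a_i-b_i\in\{-1,0,1\}$ together with $R(1)=0$, getting $R\in\{0,\pm(x-1)\}$.

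You correctly isolate a real obstruction that the paper's argument glosses over: since $-1=\omega^p$, the factor $\omega^{c_2}$ can equal $-\omega_p^{m}$, a sign no cyclic shift of $b$ can absorb. But your proposed escape is wrong. You claim that $p\mid 2\sum A^+$ gives $\sum A^+\in\{0,p,2p\}$, ``pinning $A^+$ to a constant vector''; the endpoints $0$ and $2p$ do force constancy, but $\sum A^+=p$ does not. In fact this branch cannot be closed at all: for $n=6$ take $a=(1,1,0,1,0,0)$ and $b=(1,0,1,1,0,0)$. A direct check gives $\sum a_i\omega^{ik}=\omega^{c_k}\sum b_i\omega^{ik}$ for every $k$, with $(c_0,\ldots,c_5)=(0,5,1,3,5,1)$, yet the cyclic gap sequences of the $1$'s are $(1,2,3)$ for $a$ and $(2,1,3)$ for $b$, so $a$ and $b$ are not cyclic shifts of each other. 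Here $A^+=(2,1,0)$ (not constant, though $\sum A^+=3=p$), you are in the minus branch $\omega^{c_2}=\omega=-\omega_3^{2}$, and your casework cannot terminate. The same example also breaks the paper's divisibility step as written: after the rotation one finds $P-Q=1-x^2+x^3-x^5=(1-x)(1+x)^2\Phi_6(x)$, which is divisible by $\Phi_6$ but not by $\Phi_3$, hence not by $1+x^2+x^4$.
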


\begin{proof}
    If $p = 2,$ i.e., $n = 4,$ then if $\sum a_i = \sum b_i$ but the sequences $\{a_i\}$ and $\{b_i\}$ are not equal up to a cyclic rotation, then up to cyclic rotations, we either have $\{a_i\} = \{1, 0, 1, 0\}$ and $\{b_i\} = \{1, 1, 0, 0\}$ or vice versa. But then, $\sum a_i (-1)^i = \pm 2$ and $\sum b_i (-1)^i = 0.$
    
    If $p$ is an odd prime, then note that the minimal polynomial of $\omega = \omega_n$ is $1 + x^2 + \dots + x^{2(p-1)}.$ Now, suppose that $a, b$ are rotated so that if $P(x) := \sum_{i = 0}^{n-1} a_i x^i$ and $Q(x) := \sum_{i = 0}^{n-1} b_i x^i$, then $P(\omega) = Q(\omega).$ Therefore, $(1 + x^2 + \dots + x^{2(p-1)})|\sum_{i = 0}^{2p-1} (a_i-b_i)x^i.$ Since $a_i-b_i \in \{-1, 0, 1\}$ for all $i$, we must have that $\sum_{i = 0}^{2p-1} (a_i-b_i)x^i = (1 + x^2 + \dots + x^{2(p-1)}) \cdot R(x),$ where $R(x)$ must be either $0, \pm 1$, and $\pm x \pm 1.$ However, since $\sum a_i = \sum b_i,$ we have that $P(1)-Q(1) = 0 = (p-1) \cdot R(1),$ so $R(1) = 0.$ Thus, $R(x)$ must equal either $0$, $x-1$, or $1-x.$
    
    If $R(x) = x-1,$ then $a_i-b_i = 1$ for all odd $i$ and $-1$ for all even $i$, which means that $a_i = 1$ if and only if $i$ is odd, but $b_i = 0$ if and only if $i$ is even. Since $n$ is even, this means that $\{a_i\}$ and $\{b_i\}$ are the same sequence, up to a rotation by $1$. The same is true if $R(x) = 1-x$ by symmetry between $a$ and $b$. Finally, if $R(x) = 0,$ then $a_i-b_i = 0$ for all $i$, so $a_i = b_i$ for all $i$, and thus the sequences $\{a_i\}$ and $\{b_i\}$ are the same.
\end{proof}

Finally, we remark that the statement is false for numbers with $3$ or more prime factors, which concludes the proof of Theorem \ref{NT:Main}. Suppose that $n=abc$ with $a,b,c>1$. Let $A=\{1,a+1,\ldots,ab-a+1,a,ab+a,\ldots,abc-ab+a\}$ and $B=\{1,a+1,\ldots,ab-a+1,0,ab,\ldots,abc-ab\}$. Consider circular strings $a$ and $b$ of length $n$ with $1$s in positions given by $A$ and $B$, respectively. Let $P(x)=\sum_{i\in A}x^i$ and $Q(x)=\sum_{i\in B}x^i$. We have that $P(x)-Q(x)=(x^a-1)\cdot\frac{x^{abc}-1}{x^{ab}-1}$ and $P(x)-x^aQ(x)=x(1-x^{ab})$. Thus, for all $k$, $\frac{P(\omega^k)}{Q(\omega^k)}$ is a power of $\omega$, so the conditions of Theorem \ref{NT:Main} hold. However, $a$ and $b$ are not cyclic shifts of each other.

\end{document}